\def\1{\bm{1}}
\def\vzero{{\bm{0}}}
\def\vmu{{\bm{\mu}}}
\def\vc{{\bm{c}}}
\def\vd{{\bm{d}}}
\def\ve{{\bm{e}}}
\def\vf{{\bm{f}}}
\def\vg{{\bm{g}}}
\def\vq{{\bm{q}}}
\def\vu{{\bm{u}}}
\def\vv{{\bm{v}}}
\def\vx{{\bm{x}}}
\def\vy{{\bm{y}}}
\def\mG{{\bm{G}}}
\def\mQ{{\bm{Q}}}
\def\mV{{\bm{V}}}
\DeclareMathAlphabet{\mathsfit}{\encodingdefault}{\sfdefault}{m}{sl}
\SetMathAlphabet{\mathsfit}{bold}{\encodingdefault}{\sfdefault}{bx}{n}
\def\gA{{\mathcal{A}}}
\def\gB{{\mathcal{B}}}
\def\gD{{\mathcal{D}}}
\def\gE{{\mathcal{E}}}
\def\gF{{\mathcal{F}}}
\def\gG{{\mathcal{G}}}
\def\gH{{\mathcal{H}}}
\def\gL{{\mathcal{L}}}
\def\gO{{\mathcal{O}}}
\def\gP{{\mathcal{P}}}
\def\gR{{\mathcal{R}}}
\def\gS{{\mathcal{S}}}
\def\gT{{\mathcal{T}}}
\def\gW{{\mathcal{W}}}
\def\sR{{\mathbb{R}}}
\newtheorem{proposition}{Proposition}
\newtheorem{theorem}{Theorem}
\newtheorem{definition}{Definition}
\newtheorem{assumption}{Assumption}
\newtheorem{remark}{Remark}
\newenvironment{breakablealgorithm}
  {%
   \begin{center}
     \refstepcounter{algorithm}%
     \hrule height.8pt depth0pt \kern2pt%
     \renewcommand{\caption}[2][\relax]{%
       {\raggedright\textbf{\ALG@name~\thealgorithm} ##2\par}%
       \ifx\relax##1\relax %
         \addcontentsline{loa}{algorithm}{\protect\numberline{\thealgorithm}##2}%
       \else %
         \addcontentsline{loa}{algorithm}{\protect\numberline{\thealgorithm}##1}%
       \fi
       \kern2pt\hrule\kern2pt
     }
  }{%
     \kern2pt\hrule\relax%
   \end{center}
  }
\title{\Large \textbf{Enforcing Priority in Schedule-based User Equilibrium\\ Transit Assignment}
}
\author[1, 2]{\small Liyang Feng}
\author[2]{\small Hanlin Sun}
\author[3]{\small Yu (Marco) Nie}
\author[1]{\small Jun Xie\textsuperscript{*}}
\author[2]{\small Jiayang Li\thanks{Corresponding authors.
E-mail: jun.xie@swjtu.edu.cn (Jun Xie); jiayangl@hku.hk (Jiayang Li).}}
\affil[1]{\footnotesize School of Transportation and Logistics, Southwest Jiaotong University, Chengdu, China}
\affil[2]{\footnotesize Department of Data and Systems Engineering, The University of Hong Kong, Hong Kong, China}
\affil[3]{\footnotesize Department of Civil and Environmental Engineering, Northwestern University, IL, USA}
\date{}
\begin{document}

\maketitle

\begin{abstract}

Denied boarding in congested transit systems induces queuing delays and departure-time shifts that can reshape passenger flows. Correctly modeling these responses in transit assignment hinges on the enforcement of two priority rules: continuance priority for onboard passengers and first-come-first-served (FCFS) boarding among waiting passengers. Existing schedule-based models typically enforce these rules through explicit dynamic loading and group-level expected costs, yet discrete vehicle runs can induce nontrivial within-group cost differences that undermine behavioral consistency. We revisit the implicit-priority framework of \citet{nguyen_modeling_2001} [``A Modeling Framework for Passenger Assignment on a Transport Network with Timetables." \textit{Transportation Science} 35(3): 238–249], which, by encoding boarding priority through the notion of available capacity, characterizes route and departure choices based on realized personal (rather than group-averaged) travel experiences.
However, the framework lacks an explicit mathematical formulation and exact computational methods for finding equilibria. Here, we derive an equivalent nonlinear complementarity problem (NCP) formulation and establish equilibrium existence under mild conditions. We also show that multiple equilibria may exist, including behaviorally questionable ones.  To rule out these artifacts, we propose a refined arc-level NCP formulation that not only corresponds to a tighter, behaviorally consistent equilibrium concept but also is more computationally tractable. We reformulate the NCP as a continuously differentiable mathematical program with equilibrium constraints (MPEC) and propose two solution algorithms. Numerical studies on benchmark instances and a Hong Kong case study demonstrate that the model reproduces continuance priority and FCFS queuing and captures departure-time shifts driven by the competition for boarding priority.
\\
\\
Keywords: schedule-based transit assignment; boarding priority; nonlinear complementarity problem
\end{abstract}

\section{Introduction}

In densely populated urban areas where mass transit is a primary travel mode, waiting passengers frequently fail to board incoming transit vehicles due to their limited capacities.  In Hong Kong, for example, buses traveling from Hong Kong Island to Kowloon during the evening peak often fill up at early stops, leaving passengers at intermediate stations to watch several buses pass by without space to board, a phenomenon locally referred to as ``ding jaa" \citep{ding_zha_term}. 
At Shahe Town, one of Beijing's largest suburban residential clusters, the crowding can be so severe that commuters wait up to 30 minutes at the metro station to enter a train \citep{beijingdaily2023shahe}. Mitigating such crowdedness is therefore a central objective for transit planners and operators. To support effective decisions, however, one needs to reliably predict how passengers choose routes and adjust their departure times in response to the delays caused by such a capacity crunch. This task is usually addressed using a transit assignment model.

\subsection{Challenges}

Transit assignment models can be broadly categorized as frequency-based and schedule-based. The first class represents each line by its service frequency rather than exact departure times and assumes that all passengers at a station experience the same expected waiting time, regardless of their actual arrival time \citep{spiess_optimal_1989,wu1994transit,cominetti2001common,cepeda2006frequency,xu2020hyperpath,xu2022hyperbush}. While these models may impose vehicle capacity as a constraint at the link level, they cannot represent the capacity-induced delays \emph{physically}, especially those incurred when a passenger misses one or more passing vehicles. 
Schedule-based models are better equipped to deal with the physics of the vehicle capacity restriction, simply because they typically represent individual vehicle runs according to posted schedules.  In these models, passengers may choose to depart from home according to the expected boarding time at the first transit stop, which may not coincide with the arrival time of the first vehicle (in other words, they explicitly take the queuing time into consideration). 
The enhanced realism, however, complicates the assignment problem.  A key sticky issue is the priority rules that determine who gets a seat first when vehicle capacity is insufficient. These include \emph{continuance priority}, under which passengers already on board must remain on the vehicle, and the \textit{first-come-first-served (FCFS) rule}, according to which waiting passengers board in order of arrival \citep{hamdouch_schedule-based_2008}.

A common approach to modeling priority is through dynamic network loading (DNL) \citep{ nuzzolo_doubly_2001, poon_dynamic_2004, papola2008schedule,hamdouch_schedule-based_2011, nuzzolo_schedule-based_2012,hamdouch_new_2014, cats_dynamic_2016, gentile2016modelling, yao2017simulation, cats_learning_2020}. In these models, passengers are grouped according to their chosen route and departure time, and a DNL procedure simulates the boarding of passengers at all stops \textit{explicitly} according to the priority rules. 
Because individuals within the same group may experience different actual costs due to vehicle capacity restrictions, a flow-weighted \emph{expected} travel cost is often calculated as the representative cost of that group.  The equilibrium is defined, accordingly, as a state where no group can reduce this expected travel cost by switching routes or departure times. 
However, in schedule-based transit models, service is inherently \emph{discrete} in time.  As a result, the cost difference within a group does not always diminish even as group size approaches zero, since tight residual capacity may still split a group across successive runs. This is more than a mathematical nuance: assuming passengers included in the same group behave identically is questionable even at the limit, because the experienced cost difference would be great enough to trigger behavioral deviation\footnote{In Appendix~\ref{sec:appendix dynamic} we illustrate this issue more concretely with a simple example.}.

An alternative approach, which we refer to as implicit prioritization, was pioneered by \citet{nguyen_modeling_2001}. Instead of simulating boarding events, their framework encodes priority \textit{implicitly} through the notion of available capacity. Specifically, 
passengers already on board have the highest priority (continuance priority), followed by waiting passengers in the order they arrived at a stop (FCFS). The available capacity of a boarding arc is then defined as the remaining vehicle capacity after loading all passengers from arcs with higher priority. If the available capacity on a given arc is non-positive, passengers arriving via that arc cannot board in that run and must wait for the next. Using this representation, \citet{nguyen_modeling_2001} defined a new equilibrium principle: a passenger can only switch to a route if all boarding arcs along that route have positive available capacity at equilibrium. In other words, switching is allowed only if doing so would not displace higher-priority passengers or overload any vehicle segment. %
In this way, some passengers successfully board while others may be left behind, and each person's incentives to adjust their decisions are separately modeled based on their \textit{actual} experience. 

Despite its conceptual elegance, the framework of \citet{nguyen_modeling_2001} is not fully developed. %
While it gives a well-defined equilibrium condition, a mathematical formulation amenable to analysis or computing remains elusive. %
No exact algorithm has been proposed to find an equilibrium satisfying the original priority constraints. Instead, \citet{nguyen_modeling_2001} solved an approximate problem that relaxes the priority constraints, leaving open the question of how to obtain an exact solution.

\subsection{Our contributions}

We propose an equivalent nonlinear complementarity problem (NCP) formulation for Nguyen et al.'s model (\citeyear{nguyen_modeling_2001}). Using this formulation, we prove the existence of an equilibrium state under mild conditions, a theoretical result that, to the best of our knowledge, has not been previously established. This result lays a foundation for applying the model in general transit assignment settings. We also show that multiple equilibria exist and, more importantly, some of these are behaviorally unrealistic. %

In light of this finding, we further propose a refined NCP formulation that is proven to admit only behaviorally consistent equilibrium solutions.
The new NCP formulation is much more computationally tractable:  because it enforces priority rules at the arc level rather than at the route level, route enumeration is obviated using commonly used column generation techniques. We use the Fischer–Burmeister function to reformulate the "hard" priority conditions as the zero set of a continuously differentiable function, which serves as the upper-level objective in a mathematical program with equilibrium constraints (MPEC) that remains equivalent to the original problem. This MPEC is then solved using two methods: a descent method based on implicit differentiation and a nonlinear programming–based method.

Finally, we demonstrate the practical relevance of the refined model through a series of numerical studies, including the benchmark presented in \citet{nguyen_modeling_2001} and a hypothetical schedule-based transit network created using the Sioux Falls network.  In particular, we conduct a real-world case study of the morning commute to the University of Hong Kong, where passengers choose between bus and metro services, and metro users may experience substantial queuing at station elevators. The model correctly reproduces FCFS queuing at elevators and reveals departure-early adjustments among bus passengers driven by competition for boarding priority, with the magnitude of such adjustments increasing under higher demand levels.

\subsection{Organization}

The remainder of this paper is organized as follows. Section \ref{sec:setting} sets up the problem and introduces existing schedule-based transit assignment models with priority. Section \ref{sec:revisit} revisits \citet{nguyen_modeling_2001}'s framework by presenting an equivalent NCP formulation, establishing the existence of its solutions, and examining the behavioral unrealism exhibited by some of them. Section \ref{sec:refined} introduces our refined model, discusses its analytical properties, and develops algorithms for its computation. Section \ref{sec:experiments} reports numerical experiments that validate the proposed model and algorithms, and Section \ref{sec:conclusion} concludes the paper.

\section{Nguyen et al.'s Model}
\label{sec:setting}

\subsection{Problem setting}
\textbf{Transit network.} Consider a transit system where $\gS$ and $\gL$ are the sets of transit stops and transit lines, respectively. For each $l \in \gL$, let $s_l^{i} \in \gS$ denote the $i$-th stop it visits ($i = 1, \ldots, n_l$), where $n_l$ is the total number of stops it serves. Each line $l$ has $m_l$ runs, and each run $j$ ($j = 1, \ldots, m_l$) follows a timetable specifying its arrival time $\tau_{l,j,i}^{\text{arr}}$ and departure time $\tau_{l,j,i}^{\text{dep}}$ at stop $s_l^{i}$ ($i = 1, \ldots, n_l$). All timetable values are defined on a discretized time axis $\gT$, with the exceptions $\tau_{l,j,1}^{\text{arr}} = -\infty$ for the arrival at the first stop and $\tau_{l,j,n_l}^{\text{dep}} = \infty$ for the departure at the last stop of each run. 

Passengers access and use this system as follows. Let $\gO$ and $\gD$ denote the sets of origins and destinations. Passengers can walk from their origin to any stop within a certain walking range; for each origin $o \in \gO$, the corresponding set of reachable stops is denoted by $\gS_o^{\text{walk}} \subseteq \gS$, and the walking time between $o$ and $s \in \gS_o^{\text{walk}}$ is $t_{o,s}^{\text{walk}}$. They then travel between stops along the transit lines. Finally, they alight at a stop within the walking range of their destination and walk to the destination; for each destination $d \in \gD$, the corresponding set of reachable stops is denoted by $\gS_d^{\text{walk}} \subseteq \gS$, with walking time $t_{d,s}^{\text{walk}}$ from $s \in \gS_d^{\text{walk}}$ to $d$. During their trips, passengers may transfer between lines at any stop served by multiple lines. However, for simplicity, we do not allow transfers that require walking to a different stop, although extending the model to include such transfers is straightforward.

As an example, Figure~\ref{fig:transit-net} shows a small transit network with four stops, A, B, C, and D, and two lines.
\begin{figure}[htbp]
    \centering
    \includegraphics[width=0.75\textwidth]{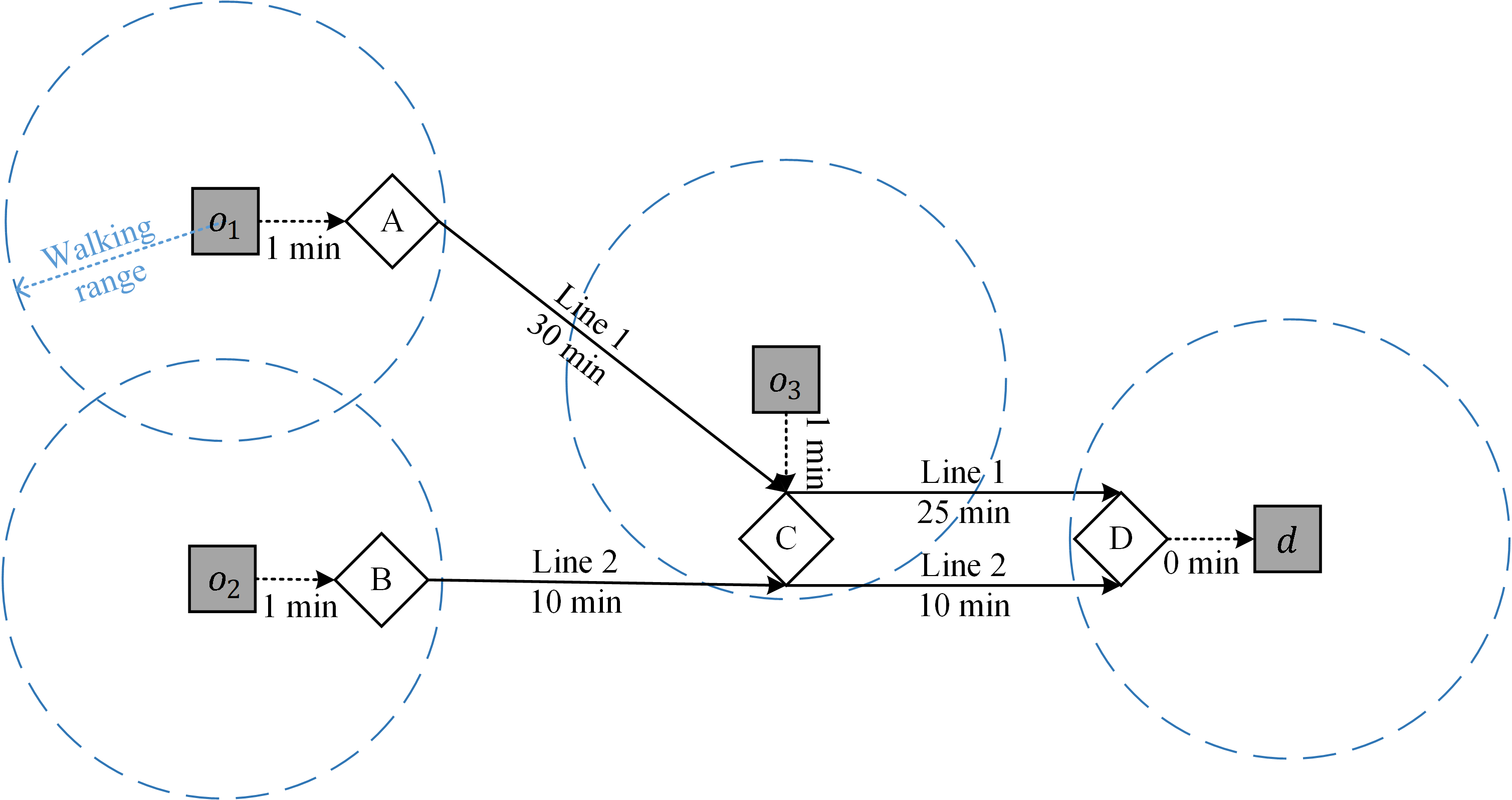}
    \caption{An example transit network.}
    \label{fig:transit-net}
\end{figure}
Line~1 is a regular service with a single run, and Line~2 is an express service with two runs. The timetable of all runs is given in Table~\ref{tab:timetable-of-transit-net}. There are three origins, $o_1$, $o_2$, and $o_3$, connected by walking to stops A, B, and C, respectively, and one destination $d$, connected by walking to stop D. The walking time between each origin and its associated stop is 1 minute, while the walking time between the destination and its associated stop is 0 minutes.
\begin{table}[htbp]
  \centering
  \caption{Timetable of the example network.}
  \resizebox{\textwidth}{!}{
  \footnotesize %
    \begin{tabular}{cccccccccc}
    \toprule
    \multirow{2}[4]{*}{Lines} & \multirow{2}[4]{*}{Runs} & \multicolumn{2}{c}{Stop A} & \multicolumn{2}{c}{Stop B} & \multicolumn{2}{c}{Stop C} & \multicolumn{2}{c}{Stop D} \\
\cmidrule{3-10}          &       & Arrival & Departure & Arrival & Departure & Arrival & Departure & Arrival & Departure \\
    \midrule
    1     & 1     & $-\infty$ & 7:25  & -     & -     & 7:55  & 7:55  & 8:20  & $\infty$ \\
    2     & 1     & -     & -     & $-\infty$ & 7:50  & 8:00  & 8:00  & 8:10  & $\infty$ \\
    2     & 2     & -     & -     & $-\infty$ & 8:10  & 8:20  & 8:20  & 8:30  & $\infty$ \\
    \bottomrule
    \end{tabular}%
    }
  \label{tab:timetable-of-transit-net}%
\end{table}%

For an OD pair $w = (o, d) \in \gW:= \gO \times \gD$, passengers may have different desired arrival times. We group them into a finite set of classes $\gB_w$, where each class $b \in \gB_w$ has a desired arrival-time window $[\tau_{w,b}^-, \tau_{w,b}^+]$ and a fixed demand $d_{w,b} \in \mathbb{R}_+$. Each passenger chooses both a departure time and a travel path. When all transit vehicles have sufficient capacity so that passengers can always board the first vehicle that arrives after they reach a stop, specifying a passenger's departure time together with their access, transfer, and egress stops and the sequence of transit lines fully determines their arrival time at the destination: any two passengers with the same such choices will experience identical arrival times.
However, under oversaturated conditions, some passengers may be denied boarding and must wait for later vehicles. In this case, the above coarse description is no longer sufficient to capture their actual experience. One must instead describe their detailed spatio–temporal events, including the times at which they board, transfer between vehicles, and alight.

\medskip
\noindent
\textbf{Event–activity graph.} To represent such events, we construct an event–activity graph $\gH(\gE, \gA)$, in which each node $E \in \gE$ corresponds to an event occurring at a specific time \citep{yin2025real}. These include: (1) $E_{o, t}^{\text{str}}$ (timestamp $t$), which represents the set-out event of passengers with origin $o \in \gO$ who begin their trips at time $t \in \gT$; (2) $E_{l,j,i}^{\text{dep}}$ (timestamp $\tau_{l,j,i}^{\text{dep}}$), the departure event of run $j = 1, \ldots, m_l$ of line $l \in \gL$ from stop $s_l^i$ for $i = 1, \ldots, n_l$; and (3) $E_{l,j,i}^{\text{arr}}$ (timestamp $\tau_{l,j,i}^{\text{arr}}$), the arrival event of run $j = 1, \ldots, m_l$ of line $l$ at stop $s_l^i$ for $i = 1, \ldots, n_l$. Additionally, we create two types of virtual nodes, (4) $E_{o}^{\text{ogn}}$ for each $o \in \gO$, and (5) $E_{d}^{\text{dst}}$ for each $d \in \gD$, to represent, respectively, the starting point of the passenger's trip before entering the transit system and the endpoint after completing the trip. To connect these event nodes, we introduce \emph{activity arcs}, each $A \in \gA \subseteq \gE \times \gE$ representing a feasible spatio–temporal movement undertaken by passengers, as follows.

\begin{itemize}[leftmargin=1.75em, topsep=6pt, itemsep=2pt, parsep=2pt]

    \item[(1)] \textbf{Access arcs}: $\gA^{\text{access}}$.  
    For each origin $o \in \gO$ and starting time $t \in \gT$, add an arc from $E_{o}^{\text{ogn}}$ to $E_{o,t}^{\text{str}}$, representing passengers at origin $o$ choosing to start their trip at time $t$.

    \item[(2)] \textbf{Boarding arcs}: $\gA^{\text{boarding}}$.  
    For each $o \in \gO$ and $t \in \gT$, and for each line $l \in \gL$ and stop index $i = 1,\ldots,n_l$,  
    if $s_l^i \in \gS_o^{\text{walk}}$, then add an arc from $E_{o,t}^{\text{str}}$ to $E_{l,j,i}^{\text{dep}}$ for each run $j = 1,\ldots,m_l$ satisfying
    $
        \tau_{l,j,i}^{\text{dep}} \ge t + t_{o,s_l^i}^{\text{walk}}.
    $
    These arcs represent passengers walking to stop $s_l^i$ and attempting to board any run that departs after they arrive at the stop.

    \item[(3)] \textbf{Dwelling arcs}: $\gA^{\text{dwelling}}$.  
    For each line $l \in \gL$ and run $j = 1,\ldots,m_l$, add an arc from $E_{l,j,i}^{\text{arr}}$ to $E_{l,j,i}^{\text{dep}}$ for each $i = 1,\ldots,n_l$, representing the period during which passengers remain on board while the vehicle dwells at stop $s_l^i$.

    \item[(4)] \textbf{Riding arcs}: $\gA^{\text{riding}}$.  
    For each line $l \in \gL$ and run $j = 1,\ldots,m_l$, add an arc from $E_{l,j,i}^{\text{dep}}$ to $E_{l,j,i+1}^{\text{arr}}$ for each $i = 1,\ldots,n_l-1$, representing passengers remaining on board as the vehicle travels from stop $s_l^i$ to stop $s_l^{i+1}$.

    \item[(5)] \textbf{Transfer arcs}: $\gA^{\text{transfer}}$. 
    For any two lines $l, l' \in \gL$ and stop indices $i = 1,\ldots,n_l$ and $i' = 1,\ldots,n_{l'}$,  
    if $s_l^i = s_{l'}^{i'}$, then for each run $j = 1,\ldots,m_l$ and $j' = 1,\ldots,m_{l'}$,
    if
    $
        \tau_{l,j,i}^{\text{arr}} \le \tau_{l',j',i'}^{\text{dep}},
    $
    add an arc from $E_{l,j,i}^{\text{arr}}$ to $E_{l',j',i'}^{\text{dep}}$.  
    This arc represents passengers alighting from line $l$ at stop $s_l^i$ and transferring to a run of line $l'$ that departs later from the same stop.

    \item[(6)] \textbf{Egress arcs}: $\gA^{\text{egress}}$.  
    For each destination $d \in \gD$, and each line $l \in \gL$ and stop index $i = 1,\ldots,n_l$,  
    if $s_l^i \in \gS_d^{\text{walk}}$, then add an arc from $E_{l,j,i}^{\text{arr}}$ to $E_{d}^{\text{dst}}$ for each run $j = 1,\ldots,m_l$.  
    This represents passengers alighting at stop $s_l^i$ and walking to destination $d$.
    
\end{itemize}
Figure~\ref{fig:EA-graph} illustrates the event-activity graph corresponding to the transit network in Figure~\ref{fig:transit-net} and the timetable in Table~\ref{tab:timetable-of-transit-net}. Circular nodes represent vehicle arrival and departure events, square nodes represent origins and destinations, and hexagonal nodes represent passenger starting times. For simplicity, we include one departure time for each of the origins $o_1$ and $o_2$ (7:24 and 7:53, respectively) and two departure times for origin $o_3$ (7:49 and 8:09).
\begin{figure}[ht]
    \centering
    \includegraphics[width=0.7\textwidth]{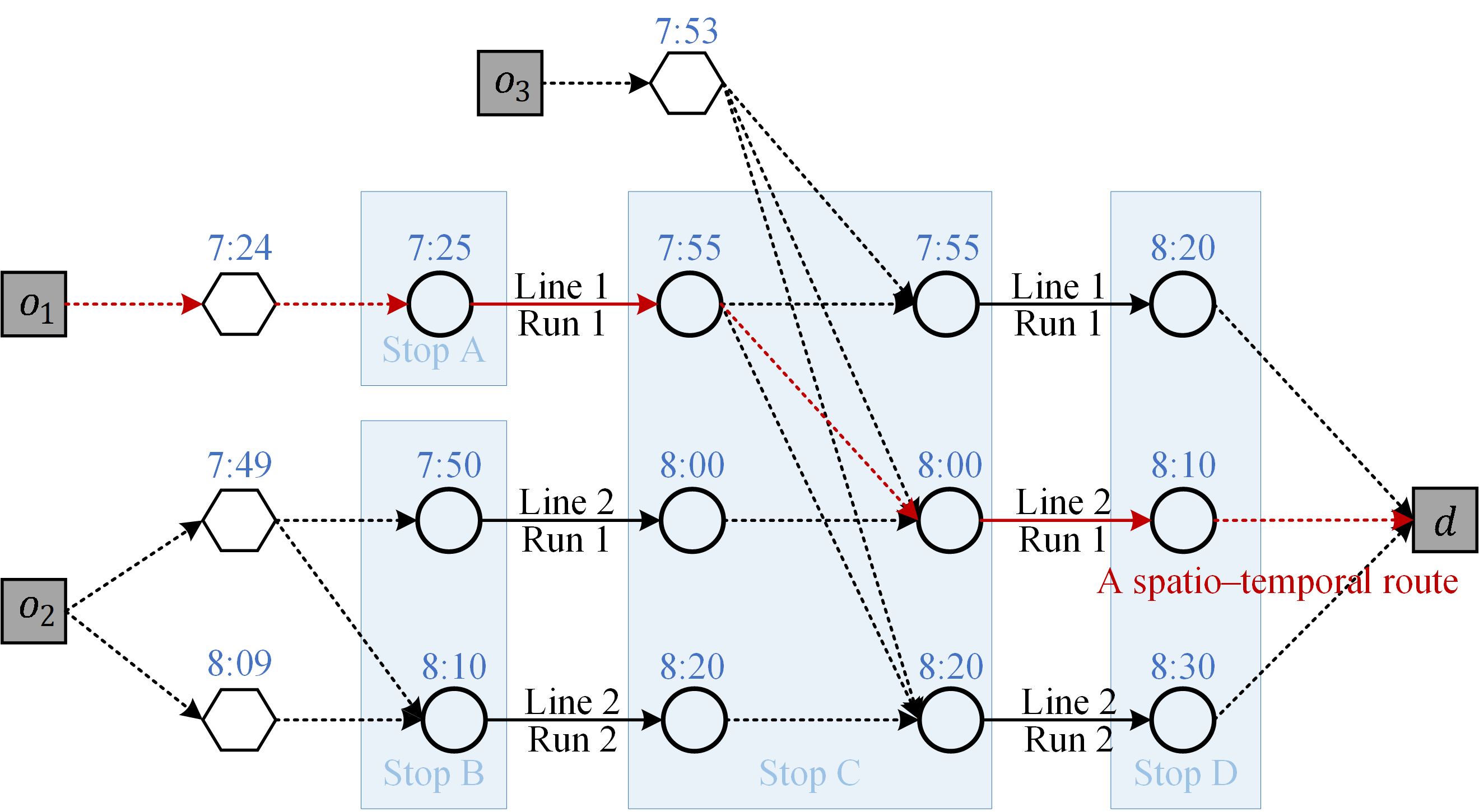}
    \caption{Event–activity graph of the example transit network.}
    \label{fig:EA-graph}
\end{figure}

\medskip
\noindent
\textbf{Spatio-temporal route.} Under the above construction, each path from $E_{o}^{\text{ogn}}$ to $E_{d}^{\text{dst}}$ represents a feasible \emph{spatio–temporal route} for passengers of the OD pair $w = (o,d) \in \gW$, for example, the red path highlighted in Figure~\ref{fig:EA-graph}. For each such OD pair $w$, we denote by $\gR_w$ the set of all routes between $E_{o}^{\text{ogn}}$ and $E_{d}^{\text{dst}}$. For each $w \in \gW$, class $b \in \gB_w$, and route $r \in \gR_w$, let $f_{w, b}^r$ be the number of passengers of class $b$ choosing route $r$, and collect all such variables into the vector $\vf$. The feasible set of $\vf$, denoted by $\gF$, consists of all $\vf$ satisfying
\begin{align}
    &\sum_{r \in \gR_w} f_{w, b}^r = d_{w, b}, \quad \forall w \in \gW, \ \forall b \in \gB_w, \label{cons:demand}\\
    &f_{w, b}^r\geq 0, \quad \forall w \in \gW, \ \forall b \in \gB_w, \ \forall r \in \gR_w, \label{cons:non-negative}\\
    &x_A\leq u_A, \quad \forall A \in \gA^{\text{riding}}, \label{cons:capacity}
\end{align}
where $x_A=\sum_{w\in \gW}\sum_{b\in \gB_w}\sum_{r\in \gR_w}f_{w, b}^r\delta_{w, b}^{r,A}$ denotes the total flow assigned to arc $A \in \gA$; $\delta_{w, b}^{r, A}=1$ if arc $A$ belongs to route $r$ and $\delta_{w, b}^{r, A}=0$ otherwise; $u_A\in \sR_+$ is the vehicle capacity of the line associated with $A$.

We do not impose a specific functional form for costs here; the subsequent analysis only requires mild regularity conditions, which we state when needed.
Let $c_{w,b}^r(\vf)$ denote the total travel cost experienced by passengers of OD pair $w\in \gW$, class $b\in \gB_w$, who follow route $r \in \gR_w$. This total cost typically includes components such as travel-time cost, monetary fare, on-board crowding disutility, and penalties for early or late arrival relative to the desired arrival-time window.

\subsection{User equilibrium with implicit priority}
\label{sec:implicit}
Under the implicit-priority framework, priority rules (e.g., continuance priority for on-board passengers and FCFS for boarding passengers) are represented through the available capacity defined for arcs that terminate at a vehicle departure event; these capacities determine which passengers can board the vehicle and which passengers must wait. We denote by $\gA^{\text{priority}}=\gA^{\text{boarding}}\cup \gA^{\text{dwelling}}\cup \gA^{\text{transfer}}$ the set of all such arcs that end at a vehicle departure event, namely the boarding, dwelling, and transfer arcs. For any departure event $E_{l,j,i}^{\text{dep}}$, all arcs ending at this event are equipped with a total order $\prec$ that specifies their loading priorities: the dwelling arc has the highest priority (continuance priority), and the remaining boarding and transfer arcs are ordered in ascending order of their passengers' arrival times at stop $s_{l}^{i}$ (reflecting the FCFS rule). Given this priority order, the available capacity on an arc $A\in \gA^{\text{priority}}$ is defined as
\begin{align}
    q_A(\vx)=u_{\text{riding}(A)}-\sum_{A'\in \text{Prior}(A)}x_{A'}, \notag
\end{align}
Here, $u_{\text{riding}(A)}$ denotes the vehicle capacity of the riding arc $\text{riding}(A)$ whose tail node coincides with the head node of $A$.
The set $\text{Prior}(A)=\{A'\in \gA^{\text{priority}}: \text{head}(A')=\text{head}(A), A'\preceq A\}$ collects all arcs with priority equal to or higher than that of $A$, where $\text{head}(A)$ is the head node of $A$. Thus, $q_A$ represents the residual vehicle capacity after loading passengers assigned to arc $A$ and all higher-priority arcs. In other words, it determines the maximum number of passengers that can be loaded on arcs of lower priority than $A$. For example, as illustrated in Figure~\ref{fig:implict}, the dwelling arc $(E_{2,1,2}^{\text{arr}}, E_{2,1,2}^{\text{dep}})$ has an available capacity of 3.
\begin{figure}[htbp]
    \centering
    \includegraphics[width=0.65\textwidth]{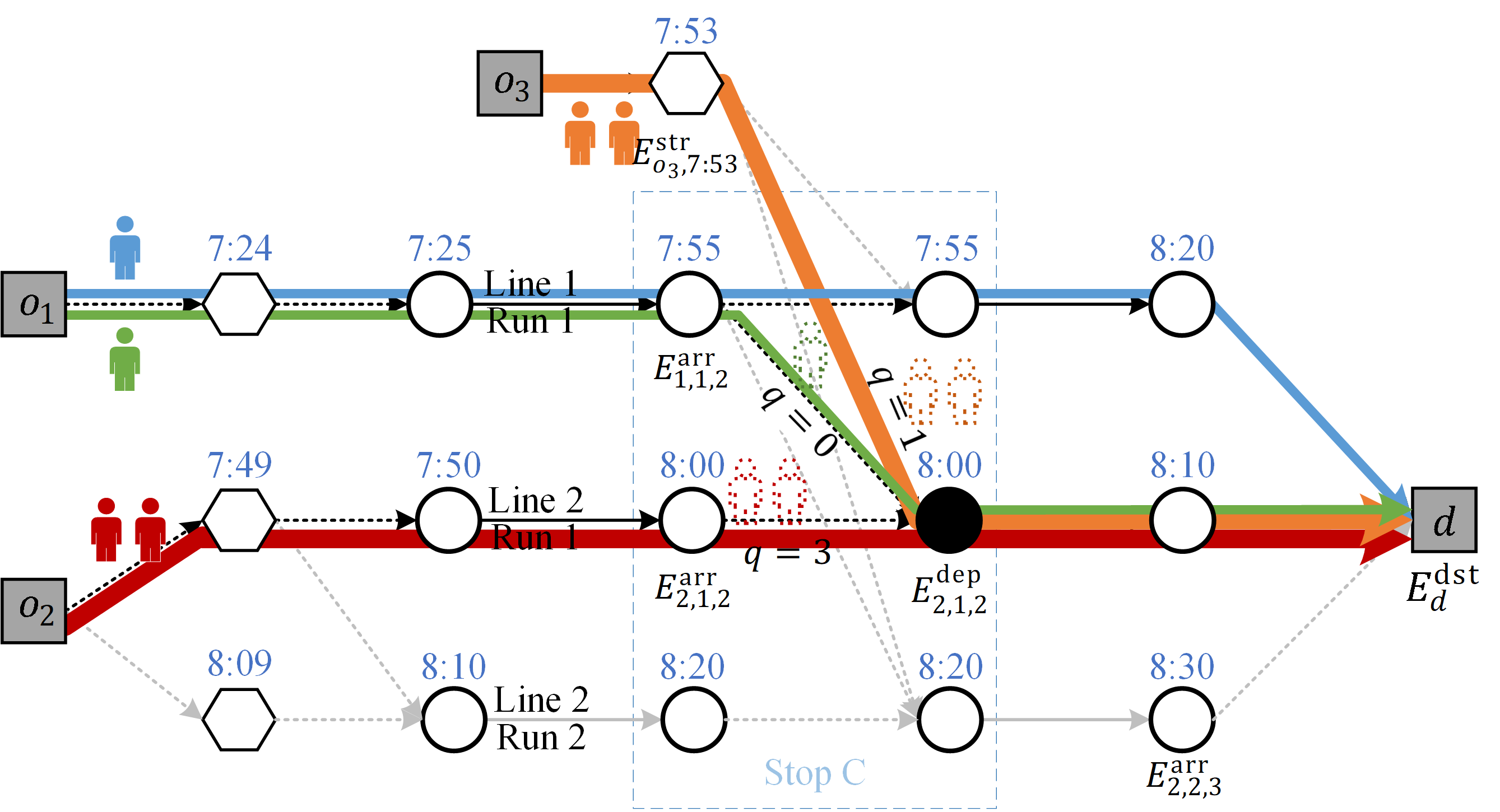}
    \caption{Illustration of the implicit priority and UEIP.}
    \label{fig:implict}
\end{figure}
Hence, at most 3 passengers in total can be loaded via the lower-priority boarding arc $(E_{o_3,7:53}^{\text{str}}, E_{2,1,2}^{\text{dep}})$ and transfer arc $(E_{1,1,2}^{\text{arr}}, E_{2,1,2}^{\text{dep}})$. Since the two passengers originating from $o_3$ arrive earlier at stop C (at 7:54, given a departure time of 7:53 and a 1-minute walking time), they are loaded first via the boarding arc. After they board, the available capacity associated with the boarding arc is reduced to 1, so at most one passenger can subsequently be loaded via the lower-priority transfer arc.

For each OD pair $w \in \gW$, class $b \in \gB_w$, and route $r \in \gR_w$, let $Q_{w,b}^r(\vf)=\min\{q_A(\vx): A\in \gA_{w,b,r}^{\text{priority}}\}$ be the route available capacity, where $\gA_{w,b,r}^{\text{priority}}$ is the set of boarding, dwelling, and transfer arcs belonging to this route. Then we can define the availability of a route and a user equilibrium with implicit priority as follows \citep{nguyen_modeling_2001}:
\begin{definition}[Route availability]
    Given any feasible flow vector $\vf \in \gF$, for each OD pair $w \in \gW$ and class $b \in \gB_w$, a route $r\in \gR_w$ is said to be available if every arc in $\gA_{w,b,r}^{\text{priority}}$ is available. Equivalently, this condition holds if $Q_{w,b}^r(\vf) > 0$.
\end{definition}

\begin{definition}[User equilibrium with implicit priority]
    A feasible flow vector $\vf\in \gF$ is a user equilibrium with implicit priority (UEIP) if no individual can reduce their travel cost $c_{w,b}^r(\vf)$ by unilaterally switching to another available route. Formally, let $\gR_{w,b,r}^{\text{dominate}}(\vf)=\{r'\in \gR_w: c_{w,b}^{r'}(\vf) < c_{w,b}^r(\vf)\}$ be the set of routes that dominate $r$ in terms of travel cost, and let the total available capacity of these dominant routes be $\Tilde{Q}_{w,b}^r(\vf)=\sum_{r'\in \gR_{w,b,r}^{\text{dominate}}(\vf)}Q_{w,b}^{r'}(\vf)$. Then, a $\vf\in \gF$ is a UEIP solution if 
    \begin{align}
        f_{w,b}^{r}=0 \quad \text{whenever } \Tilde{Q}_{w,b}^r(\vf)>0, \quad \forall w \in \gW, \ \forall b \in \gB_w, \ \forall r\in \gR_w. \label{equ:uep Nguyen}
    \end{align}
\end{definition}

\subsection{Illustrative example}
Here we illustrate the notion of UEIP using a simple example. For simplicity, we ignore crowding disutility and fares and assume that route travel cost consists only of constant in-vehicle travel time and late-arrival penalties. Passengers share a desired arrival window of [8:10, 8:20], so a lateness penalty of 10 minutes applies only on the egress arc $(E_{2,2,3}^{\text{arr}}, E_{d}^{\text{dst}})$, with no penalties on other arcs. The demand for each OD pair is 2, and the vehicle capacity is 5.  

Table~\ref{tab:UEIP} reports a UEIP solution $\vf^*$ and a non-UEIP solution $\vf$, and the corresponding flows are depicted in Figures~\ref{fig:implict} and~\ref{fig:implict-non-UEIP}, respectively. In this example, priority matters at stop~C, where it determines which passengers are allowed to board the express service \emph{line~2 run~1}. As discussed earlier, the specification of available capacities on the dwelling, boarding, and transfer arcs ensures that the UEIP solution $\vf^*$ is fully consistent with the priority rule. 
\begin{table}[htbp]
  \centering
  \caption{UEIP and non-UEIP results for the example transit network.}
  \footnotesize %
    \begin{tabular}{cclccccc}
    \toprule
    OD    & Routes & Description & Cost & $\vf^*$ & $Q(\vf^*)$ & $\vf$ & $Q(\vf)$ \\
    \midrule
    $(o_1,d)$ & $r_1$  & 7:24 - Line 1 Run 1 & 56    & 1     & 3     & 0     & 3 \\
    $(o_1,d)$ & $r_2$  & 7:24 - Line 1 Run 1 - Line 2 Run 1 & 46    & 1     & 0     & 2     & 0 \\
    $(o_1,d)$ & $r_3$  & 7:24 - Line 1 Run 1 - Line 2 Run 2 & 76    & 0     & 3     & 0     & 3 \\
    $(o_2,d)$ & $r_4$  & 7:49 - Line 2 Run 1 & 21    & 2     & 3     & 2     & 3 \\
    $(o_2,d)$ & $r_5$  & 7:49 - Line 2 Run 2 & 51    & 0     & 5     & 0     & 5 \\
    $(o_2,d)$ & $r_6$  & 8:09 - Line 2 Run 2 & 31    & 0     & 5     & 0     & 5 \\
    $(o_3,d)$ & $r_7$  & \textbf{7:53 - Line 1 Run 1} & \textbf{27}    & \textbf{0}     & \textbf{5}     & \textbf{1}     & \textbf{4} \\
    $(o_3,d)$ & $r_8$  & \textbf{7:53 - Line 2 Run 1} & \textbf{17}    & \textbf{2}     & \textbf{1}     & \textbf{1}     & \textbf{2} \\
    $(o_3,d)$ & $r_9$  & 7:53 - Line 2 Run 2 & 47    & 0     & 5     & 0     & 5 \\
    \bottomrule
    \end{tabular}%
  \label{tab:UEIP}%
\end{table}%
By contrast, the non-UEIP solution shown in Figure~\ref{fig:implict-non-UEIP} violates this rule. Specifically, one passenger from $o_3$ who arrives earlier at stop~C yields a seat on the faster \emph{line~2 run~1} (route $r_8$ in Table~\ref{tab:UEIP}, with cost 17) to a transferring passenger who arrives later, and instead boards \emph{line~1 run~1} (route $r_7$, with cost 27). Within the implicit-priority framework, this assignment cannot be an equilibrium: under $\vf$, route $r_8$ still has one unit of available capacity, so the passenger currently using $r_7$ has a strict incentive to switch to the available, lower-cost route $r_8$.
\begin{figure}[htbp]
    \centering
    \includegraphics[width=0.65\textwidth]{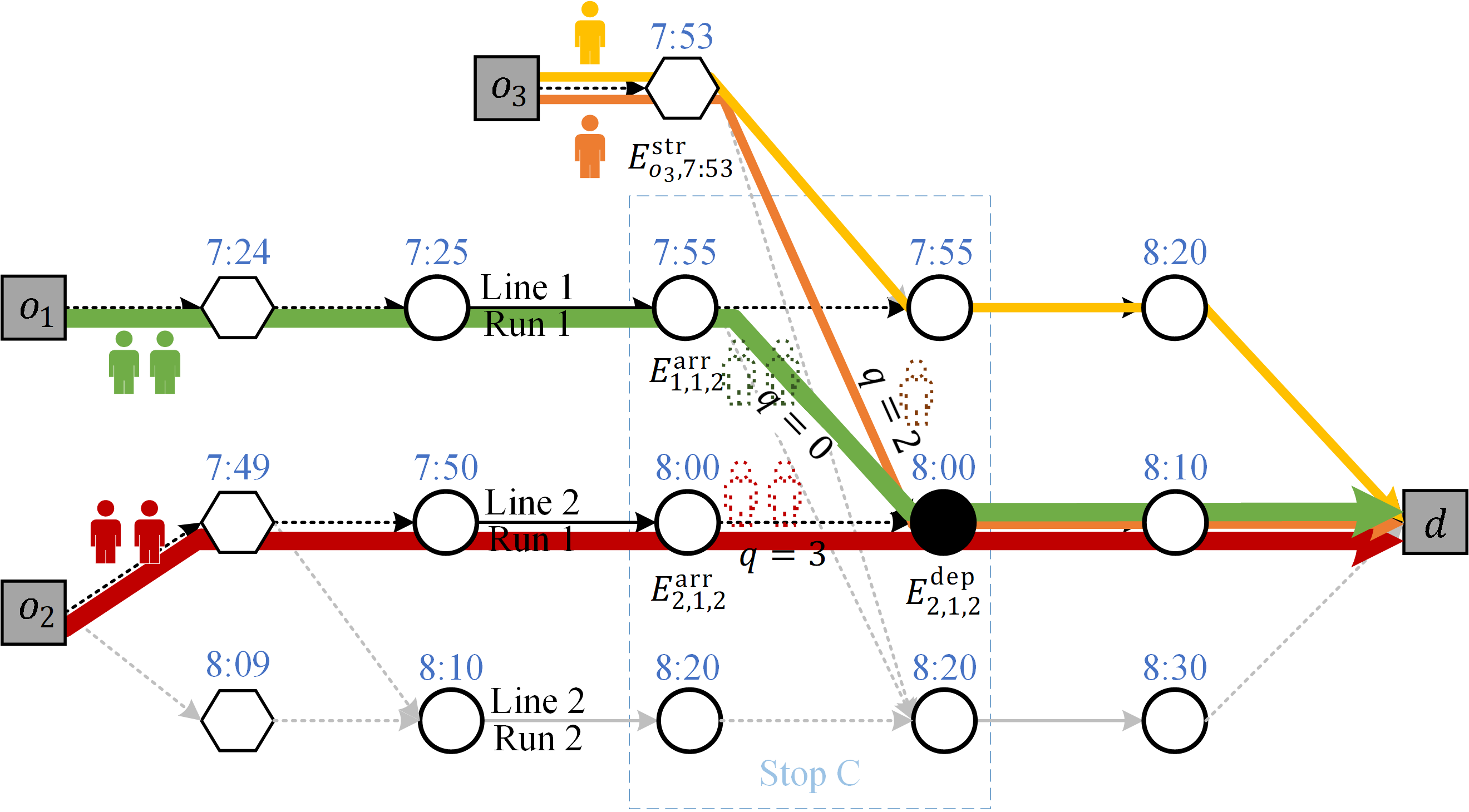}
    \caption{Visualization of a non-UEIP flow for the example transit network.}
    \label{fig:implict-non-UEIP}
\end{figure}

This example shows that the implicit-priority framework naturally enforces the boarding-priority rule: only flow patterns that respect the specified priority structure can constitute a user equilibrium. However, the implicit priority framework of \citet{nguyen_modeling_2001} lacks a formal mathematical formulation and an accompanying solution algorithm. The subsequent sections will fill these gaps.

\section{Analysis of Nguyen et al.'s Model}
\label{sec:revisit}
In this section, we will analyze \citet{nguyen_modeling_2001}'s model based on a newly proposed mathematical formulation. Throughout the analysis, the following assumptions are imposed.
\begin{assumption}
\label{ass:cost}
    For each OD pair $w \in \gW$, each class $b \in \gB_w$, and each route $r \in \gR_w$, the travel cost $c_{w,b}^r(\vf)$ is positive for all feasible $\vf \in \gF$.
\end{assumption}

\begin{assumption}
\label{ass:route}
    Given any feasible flow vector $\vf \in \gF$, for each OD pair $w \in \gW$ and class $b \in \gB_w$, there exists at least one route $r \in \gR_w$ such that $Q_{w,b}^r(\vf) > 0$, i.e., at least one route is available.
\end{assumption}

Assumption~\ref{ass:route} rules out degenerate situations in which a feasible assignment $\vf \in \gF$ exhausts the capacities of \emph{all} possible transit routes for an OD pair, including those that involve delays, which is a reasonable condition under normal operating circumstances.
The remainder of this section is organized as follows. 
Section~\ref{sec:equivalent} presents an equivalent reformulation of \citet{nguyen_modeling_2001}'s model. Section~\ref{sec:uniqueness} examines the existence and uniqueness of the solution. Finally, a numerical example illustrates that the model may admit behaviorally unrealistic solutions (Section \ref{sec:unexpected}).

\subsection{A new UEIP formulation}
\label{sec:equivalent}

To facilitate model formulation, we first present the alternative UEIP conditions that are equivalent to Condition~\eqref{equ:uep Nguyen} and more explicitly characterize the relationship between route costs and route flows.

\begin{proposition}
    A feasible solution $\bm{f}^* \in \gF$ satisfies the Condition~\eqref{equ:uep Nguyen} if and only if there exists $\mu_{w, b}\in \sR$ for each $w \in \gW$ and $b \in \gB_w$ such that 
    \begin{subequations}
    \label{equ:uep my}
        \begin{align}
            &c_{w, b}^{r}(\vf^*)\geq \mu_{w, b} \quad \text{if } Q_{w, b}^r(\vf^*)>0, \quad \forall r\in \gR_w, \label{equ:uep my1}\\
            &\mu_{w, b}\geq c_{w, b}^{r}(\vf^*) \quad \text{if } f_{w, b}^{r*}>0, \quad \forall r\in \gR_w. \label{equ:uep-my2}
        \end{align}
    \end{subequations} 
\end{proposition}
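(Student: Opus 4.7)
The plan is to prove both directions by directly unpacking the definitions, using the natural choice $\mu_{w,b} = \min\{c_{w,b}^r(\vf^*) : Q_{w,b}^r(\vf^*) > 0\}$, the minimum cost over available routes. By Assumption~\ref{ass:route}, this set is nonempty (and finite), so the minimum is well-defined.

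\textbf{Forward direction.} Suppose $\vf^*$ satisfies \eqref{equ:uep Nguyen}. Set $\mu_{w,b}$ as above. Condition \eqref{equ:uep my1} is immediate: every available route has cost at least the minimum over available routes. For \eqref{equ:uep-my2}, take any $r$ with $f_{w,b}^{r*} > 0$ and suppose for contradiction that $c_{w,b}^r(\vf^*) > \mu_{w,b}$. Then the minimizer $r^\dagger$ achieving $\mu_{w,b}$ is available, lies in $\gR_{w,b,r}^{\text{dominate}}(\vf^*)$, and contributes $Q_{w,b}^{r^\dagger}(\vf^*) > 0$ to $\Tilde{Q}_{w,b}^r(\vf^*)$, forcing $\Tilde{Q}_{w,b}^r(\vf^*) > 0$. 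Then \eqref{equ:uep Nguyen} would require $f_{w,b}^{r*} = 0$, a contradiction. Hence $c_{w,b}^r(\vf^*) \le \mu_{w,b}$.

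\textbf{Reverse direction.} Suppose $\mu_{w,b}$ exists satisfying \eqref{equ:uep my1}--\eqref{equ:uep-my2}, and pick any $r$ with $\Tilde{Q}_{w,b}^r(\vf^*) > 0$. Because $\Tilde{Q}_{w,b}^r$ is a sum of nonnegative terms, some $r' \in \gR_{w,b,r}^{\text{dominate}}(\vf^*)$ must have $Q_{w,b}^{r'}(\vf^*) > 0$, i.e.\ $r'$ is available and $c_{w,b}^{r'}(\vf^*) < c_{w,b}^{r}(\vf^*)$. By \eqref{equ:uep my1} applied to $r'$, $\mu_{w,b} \le c_{w,b}^{r'}(\vf^*) < c_{w,b}^r(\vf^*)$. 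If $f_{w,b}^{r*} > 0$, then \eqref{equ:uep-my2} would give $\mu_{w,b} \ge c_{w,b}^r(\vf^*)$, a contradiction; thus $f_{w,b}^{r*} = 0$, which is exactly \eqref{equ:uep Nguyen}.

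Both directions are essentially bookkeeping, so I do not anticipate a substantive obstacle. The only subtle point is making sure $\mu_{w,b}$ is well-defined in the forward direction, which is precisely what Assumption~\ref{ass:route} guarantees by ensuring at least one available route exists at every feasible $\vf^*$. It is also worth noting that $\mu_{w,b}$ is not uniquely determined by \eqref{equ:uep my1}--\eqref{equ:uep-my2}: any value in the (possibly nondegenerate) interval between the largest used-route cost and the smallest available-route cost works, which is a standard feature of Wardropian complementarity conditions and will be convenient for the NCP reformulation that follows.
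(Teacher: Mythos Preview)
Your proof is correct and follows essentially the same two-step bookkeeping as the paper. The only difference is the choice of threshold in the forward direction: you set $\mu_{w,b}$ to the minimum cost over \emph{available} routes (invoking Assumption~\ref{ass:route} for well-definedness), whereas the paper sets it to the maximum cost over \emph{used} routes (relying on positive demand so that a used route exists). These are the two endpoints of the interval you yourself identify at the end, and the contradiction arguments are mirror images of each other---you get \eqref{equ:uep my1} for free and argue \eqref{equ:uep-my2} by contradiction, while the paper gets \eqref{equ:uep-my2} for free and argues \eqref{equ:uep my1} by contradiction. The reverse directions are identical.
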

\begin{proof}
    First, suppose that $\vf^*\in \gF$ satisfies Equation \eqref{equ:uep my}. For each OD $w\in \gW$, class $b\in \gB_w$, and route $r\in \gR_w$, if $\Tilde{Q}_{w, b}^r(\vf^*)>0$, then there must be at least one route $r'\in \gR_{w, b, r}^{\text{dominate}}(\vf^*)$ satisfying $c_{w, b}^r(\vf^*)>c_{w, b}^{r'}(\vf^*)$ and $Q_{w, b}^{r'}(\vf^*)>0$. By Equation \eqref{equ:uep my1}, we have $\mu_{w, b}\leq c_{w, b}^{r'}(\vf^*) <c_{w, b}^r(\vf^*)$. Then, according to \eqref{equ:uep-my2}, we must have $f_{w, b}^{r*}=0$. Hence, Condition \eqref{equ:uep Nguyen} holds.

    Conversely, suppose that $\bm{f}^* \in \gF$ satisfies Equation \eqref{equ:uep Nguyen}. For each OD $w\in \gW$ and class $b\in \gB_w$, given that $\bar{r}$ is the most costly of all used routes, namely $c_{w, b}^{\bar{r}}(\vf^*)=\max\{c_{w, b}^r(\vf^*): f_{w, b}^{\bar{r}*}>0, r\in \gR_{w, b}\}$, and $\mu_{w, b}=c_{w, b}^{\bar{r}}(\vf^*)$, Equation \eqref{equ:uep-my2} holds. Then, we prove Equation \eqref{equ:uep my1} by contradiction. For each $w\in \gW$ and $b\in \gB_w$, suppose that there exists an available route $r'$ (i.e., $Q_{w, b}^{r'}(\vf^*)>0$) with $c_{w, b}^{r'}(\vf^*)<\mu_{w, b}=c_{w, b}^{\bar{r}}(\vf^*)$. Then, we must have $r'\in \gR_{w, b, \bar{r}}^{\text{dominate}}(\vf^*)$, and thus $\Tilde{Q}_{w, b}^{\bar{r}}(\vf^*)=\sum_{r''\in \gR_{w, b, \bar{r}}^{\text{dominate}}(\vf^*)}Q_{w, b}^{r''}(\bm{f}^*)\geq Q_{w, b}^{r'}(\vf^*)>0$. According to Condition \eqref{equ:uep Nguyen}, this conflicts with that $\bar{r}$ is a used route. By contradiction, Equation \eqref{equ:uep my1} holds.
\end{proof}

This proposition implies that, under UEIP, the cost of any used route does not exceed the cost of other available routes, with $\vmu=(\mu_{w, b})_{w\in \gW, b\in \gB_w}$ serving as the threshold separating these two categories of routes. To close the potential gap between the cost of a used route and this threshold, we introduce an additional cost variable $V_{w, b}^r$ for each OD $w\in \gW$, class $b\in \gB_w$, and route $r\in \gR_w$, and collect all such variables into the vector $\mV$. Specifically, for each used route we set $V_{w, b}^r = \mu_{w, b} - c_{w, b}^r(\vf)$, while for available routes we let $V_{w, b}^r = 0$. Based on this construction, we obtain the following NCP model: 
\begin{subequations}
\label{prob:cp-route}
    \begin{align}
        &0\leq \vf \bot \vc(\vf)+\mV-\Lambda\vmu \geq 0, \label{equ:cp route1} \\
        &0\leq \vmu \bot \Lambda^T\vf-\vd \geq 0, \label{equ:cp route2} \\
        &0\leq \mV \bot \mQ(\vf) \geq 0, \label{equ:cp-route3}
    \end{align}
\end{subequations}
where $\vc(\vf)=(c_{w,b}^r(\vf))_{w\in \gW, b\in \gB_w, r\in \gR_w}$ and $\mQ(\vf)=(Q_{w,b}^r(\vf))_{w\in \gW, b\in \gB_w, r\in \gR_w}$ denote the vectors of travel costs and available capacities, respectively; $\Lambda=[\lambda_{r, b}]_{|n_f|\times|\gB|}$ denotes the route-class incident matrix, with $\lambda_{r, b}=1$ if route $r$ belongs to class $b$, and $\lambda_{r, b}=0$ otherwise; $n_f=\sum_{w\in \gW}|\gB_w|\times|\gR_w|$ is the dimension of route flow vector $\vf$. The subsequent theorem shows that this NCP formulation is equivalent to the UEIP conditions.

\begin{theorem}
    \label{thm:equivalent}
     Under Assumption~\ref{ass:cost}, the NCP \eqref{prob:cp-route} is equivalent to the UEIP conditions \eqref{cons:demand}, \eqref{cons:non-negative}, \eqref{cons:capacity}, and \eqref{equ:uep my}.
\end{theorem}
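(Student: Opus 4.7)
The plan is to establish the two implications separately, reading the cost inequalities of \eqref{equ:uep my} directly from the three complementarity blocks in the forward direction, and constructing the multipliers $(\vmu,\mV)$ explicitly in the reverse direction. The arguments rely crucially on Assumption~\ref{ass:cost} ($c_{w,b}^r(\vf)>0$) and on the nested structure $q_{A_1}(\vx)\geq q_{A_2}(\vx)\geq\cdots\geq q_{A_{k_v}}(\vx)$ obtained by ordering the priority arcs $A_1\prec A_2\prec\cdots\prec A_{k_v}$ ending at each departure event $v=E_{l,j,i}^{\text{dep}}$, together with flow conservation at $v$ yielding $x_{\text{riding}(A_1)}=\sum_{j\leq k_v} x_{A_j}$.

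For the forward direction, let $(\vf^*,\vmu,\mV)$ solve the NCP. Non-negativity \eqref{cons:non-negative} is immediate. For \eqref{cons:demand}, note that if $\mu_{w,b}=0$ then \eqref{equ:cp route1}, combined with $c_{w,b}^r(\vf^*)>0$ and $V_{w,b}^r\geq 0$, forces $f_{w,b}^r=0$ for every $r$, so \eqref{equ:cp route2} gives $d_{w,b}=0$; conversely, whenever $d_{w,b}>0$, $\mu_{w,b}>0$, and \eqref{equ:cp route2} yields the equality $\sum_r f_{w,b}^r=d_{w,b}$. For \eqref{cons:capacity}, if $x^*_{\text{riding}(A_1)}=0$ at some $v$ there is nothing to show; otherwise let $m^*$ be the largest index with $x^*_{A_{m^*}}>0$, and pick any route $r\in\gR_w$ that uses $A_{m^*}$ (which exists by the decomposition of $x^*_{A_{m^*}}$ into route flows). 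Then $Q_{w,b}^r(\vf^*)\geq 0$ from \eqref{equ:cp-route3} gives $q_{A_{m^*}}(\vx^*)\geq 0$, and since $x^*_{A_j}=0$ for $j>m^*$, one obtains $x^*_{\text{riding}(A_1)}=\sum_{j\leq m^*}x^*_{A_j}\leq u_{\text{riding}(A_1)}$. Finally, for \eqref{equ:uep my}: if $Q_{w,b}^r(\vf^*)>0$, then \eqref{equ:cp-route3} forces $V_{w,b}^r=0$ and \eqref{equ:cp route1} yields $c_{w,b}^r(\vf^*)\geq \mu_{w,b}$, i.e., \eqref{equ:uep my1}; if $f_{w,b}^r>0$, \eqref{equ:cp route1} gives $c_{w,b}^r(\vf^*)+V_{w,b}^r=\mu_{w,b}$ and $V\geq 0$ yields \eqref{equ:uep-my2}.

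For the reverse direction, assume $\vf^*$ and $\vmu$ satisfy \eqref{cons:demand}--\eqref{equ:uep my}. I would set $V_{w,b}^r:=\max\{0,\mu_{w,b}-c_{w,b}^r(\vf^*)\}\geq 0$ and verify each NCP block. The inequality $c_{w,b}^r+V_{w,b}^r-\mu_{w,b}\geq 0$ holds by construction, and \eqref{equ:uep-my2} ensures equality whenever $f_{w,b}^r>0$, giving \eqref{equ:cp route1}; condition \eqref{equ:cp route2} follows from \eqref{cons:demand} and $\mu_{w,b}\geq 0$ (which can be enforced since $\mu_{w,b}$ equals a positive used-route cost for active classes and may be chosen to be $0$ otherwise); and \eqref{equ:cp-route3} follows in two parts: \eqref{equ:uep my1} forces $V_{w,b}^r=0$ whenever $Q_{w,b}^r(\vf^*)>0$, so $V\cdot Q=0$; and $q_A(\vx^*)\geq 0$ on every priority arc holds because $\sum_{A'\preceq A}x^*_{A'}\leq x^*_{\text{riding}(A_1)}\leq u_{\text{riding}(A_1)}$ by the capacity constraint \eqref{cons:capacity} combined with $\vx^*\geq 0$.

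The main obstacle I anticipate is the bridge between the route-level availability condition $\mQ(\vf)\geq 0$ and the arc-level capacity constraint \eqref{cons:capacity}: the NCP does not enforce $x_{\text{riding}}\leq u$ explicitly, so the argument must route through the monotone structure $q_{A_1}\geq\cdots\geq q_{A_{k_v}}$ and a positivity-propagation argument identifying a route that witnesses the tightest capacity constraint at each departure event. Carefully handling degenerate cases ($x_{\text{riding}}=0$ or an inactive class with $d_{w,b}=0$) is the other technical point where Assumption~\ref{ass:cost} and the decomposition of arc flows into route flows are essential.
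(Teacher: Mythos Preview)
Your proposal is correct and follows essentially the same route as the paper. The only noticeable variation is in deducing the capacity constraint \eqref{cons:capacity} from the NCP: the paper reads $q_A(\vx^*)\ge 0$ for \emph{every} priority arc directly from $\mQ(\vf^*)\ge 0$ and then evaluates the lowest-priority incoming arc at each departure event, whereas you pick the lowest-priority arc carrying positive flow and witness $q_{A_{m^*}}\ge 0$ via a route through it; your compact choice $V_{w,b}^r=\max\{0,\mu_{w,b}-c_{w,b}^r(\vf^*)\}$ also coincides case-by-case with the paper's construction \eqref{equ:equivalence proof1}.
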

\begin{proof}
    \textbf{Necessity}: Suppose that the solution $(\bm{f}^*, \bm{\mu}^*)$ satisfies the UEIP conditions \eqref{cons:demand}, \eqref{cons:non-negative}, \eqref{cons:capacity}, and \eqref{equ:uep my}. Let $\vx^*=\Delta^T\vf^*$, where $\Delta=[\delta_{w, b}^{r,A}]_{|n_f|\times|\gA|}$ is the route-arc incident matrix. We will then show that the complementarity condition \eqref{prob:cp-route} holds. 
    
    \textbf{Condition \eqref{equ:cp route2}}: Condition \eqref{cons:demand} shows that $\Lambda^T\vf^*-\vd=0$ and $\vmu^* \bot \Lambda^T\vf^*-\vd$. Equation \eqref{equ:uep-my2} implies $\vmu$ is a positive vector. Therefore, Condition \eqref{equ:cp route2} holds.
    
    \textbf{Condition \eqref{equ:cp-route3}}: For any arc $A\in \gA^{\text{priority}}$, we have $q_A(\vx^*)=u_{\text{riding}(A)}-\sum_{A'\in \text{Prior}(A)}x_{A'}^* \geq u_{\text{riding}(A)} - x_{\text{riding}(A)}^*$.
    By the capacity constraint \eqref{cons:capacity}, we have $q_A(\vx^*)\geq 0$. Therefore, $Q_{w,b}^r(\vf^*)=\min\{q_A(\vx^*): A\in \gA_{w,b,r}^{\text{priority}}\}\geq 0$ for all $w \in \gW$, $b \in \gB_w$, and $r\in \gR_w$.

    For each OD $w\in \gW$, class $b\in \gB_w$, and route $r\in \gR_w$, letting
    \begin{align}
        V_{w, b}^r=\left\{
        \begin{array}{rl}
        0, & \quad \text{if } Q_{w,b}^r(\vf^*)>0 \\
        \mu_{w, b}^* - c_{w, b}^r(\vf^*), & \quad \text{if } Q_{w,b}^r(\vf^*)=0 \text{ and } f_{w, b}^{r*}>0 \\
        \max\{\mu_{w, b}^* - c_{w, b}^r(\vf^*), 0\}, &  \quad \text{if } Q_{w,b}^r(\vf^*)=0 \text{ and } f_{w, b}^{r*}=0 
        \end{array},\right. 
        \label{equ:equivalence proof1}
    \end{align}
    we then have $\mV \bot \mQ(\vf^*)$. Meanwhile, considering that $\mu_{w, b}^* - c_{w, b}^r(\vf^*)\geq 0$ when $f_{w, b}^{r*}>0$ by Equation \eqref{equ:uep-my2}, we must have $\mV\geq 0$. Therefore, Condition \eqref{equ:cp-route3} holds.

    \textbf{Condition \eqref{equ:cp route1}}: Following Equation\eqref{equ:equivalence proof1}, we also have $\vc(\vf)+\mV-\Lambda\vmu \geq 0$ because (I) $V_{w, b}^r\geq \mu_{w, b}^* - c_{w, b}^r(\vf^*)$, when $Q_{w,b}^r(\vf^*)=0$; (II) $V_{w, b}^r=0$ and $c_{w, b}^r(\vf^*)\geq \mu_{w, b}^*$ when $Q_{w,b}^r(\vf^*)>0$ by Equation \eqref{equ:uep my1}.
    
    The orthogonality between $\vf$ and $\vc(\vf)+\mV-\Lambda\vmu$ must stands when $f_{w, b}^{r*}=0$. If $f_{w, b}^{r*}>0$, there are two cases. Case (I): If $Q_{w,b}^r(\vf^*)>0$,  Equations \eqref{equ:uep my1} and \eqref{equ:uep-my2} show $c_{w, b}^r(\vf^*)\geq \mu_{w, b}^*\geq c_{w, b}^r(\vf^*)$, so $c_{w, b}^r(\vf^*)=\mu_{w, b}^*$. Then, we have $c_{w, b}^r(\vf^*) + V_{w, b}^r - \mu_{w, b}^*=0$ as $V_{w, b}^r = 0$. Case (II): If $Q_{w,b}^r(\vf^*) = 0$, we also have $c_{w, b}^r(\vf^*) + V_{w, b}^r - \mu_{w, b}^*=0$ because $V_{w, b}^r$ is set to $\mu_{w, b}^* - c_{w, b}^r(\vf^*)$ in Equation \eqref{equ:equivalence proof1}. Therefore, orthogonality is satisfied universally, and Condition \eqref{equ:cp route1} holds.

    \textbf{Sufficiency}: Suppose that the solution $(\vf^*, \vmu^*, \mV^*)$ satisfies the complementarity condition \eqref{prob:cp-route}. Letting $\vx^*=\Delta^T\vf^*$, we then show that the UEIP conditions \eqref{cons:demand}, \eqref{cons:non-negative}, \eqref{cons:capacity}, and \eqref{equ:uep my} hold.

    \textbf{Condition \eqref{cons:non-negative}:} The complementarity condition \eqref{equ:cp route1} directly suggests that  $f_{w, b}^{r*}\geq 0$ for all $w \in \gW$, $b \in \gB_w$, and $r\in \gR_w$.

    \textbf{Condition \eqref{cons:capacity}:} The non-negativity of $\mQ(\vf^*)$ suggests that $q_A(\vx^*)\geq 0$ for all $A\in \gA^{\text{priority}}$. For any riding arc $\bar{A}\in \gA^{\text{riding}}$, let $A$ be the incoming arc with the lowest loading priority. We then have $0\leq q_A(\vx^*)=u_{\bar{A}}-\sum_{A'\in \text{Prior}(A)}x_{A'}^* = u_{\bar{A}} - x_{\bar{A}}^*$. Thus, the capacity constraint \eqref{cons:capacity} is satisfied.

    \textbf{Condition \eqref{cons:demand}:} Suppose that for some OD $w\in \gW$ and class $b\in \gB_w$, 
    \begin{align}
        \sum_{r \in \gR_w} f_{w, b}^{r*} > d_{w, b}. \label{equ:equilivent proof 1}
    \end{align}
    By the complementarity condition \eqref{equ:cp route1} and \eqref{equ:cp route2}, we must have $\mu_{w, b}^*=0$ and for each route $r\in \gR_w$,
    \begin{align}
        f_{w, b}^{r*}(c_{w, b}^r(\vf^*) + V_{w, b}^{r*} - \mu_{w, b}^*)=0 \Rightarrow f_{w, b}^{r*}(c_{w, b}^r(\vf^*) + V_{w, b}^{r*})=0. \notag
    \end{align}
    In addition, since $c_{w, b}^r(\vf^*)$ is positive and $V_{w, b}^{r*}$ is non-negative, we must have $f_{w, b}^{r*}=0$ for all $w \in \gW$, $b \in \gB_w$, and $r\in \gR_w$. 
    
    On the other hand, Equation \eqref{equ:equilivent proof 1} implies that $\sum_{r \in \gR_w} f_{w, b}^{r*}>0$ and there must exist a route with positive flow ($f_{w, b}^{r*}>0$) in this OD and class, which is a contradiction. Thus, the demand constraint \eqref{cons:demand} holds.

    \textbf{Condition \eqref{equ:uep my}:} For each OD $w\in \gW$, class $b\in \gB_w$, and route $r\in \gR_w$, if $Q_{w,b}^r(\vf^*)>0$, we have $V_{w, b}^{r*}=0$ by the complementarity condition \eqref{equ:cp-route3}. Since $c_{w, b}^r(\vf^*) + V_{w, b}^{r*} - \mu_{w, b}^*$ is non-negative, we get $c_{w, b}^r(\vf^*)\geq \mu_{w, b}^*$, and hence Condition \eqref{equ:uep my1} holds.

    On the other hand, if $f_{w, b}^{r*}>0$, by the complementarity condition \eqref{equ:cp route1}, we have $c_{w, b}^r(\vf^*) + V_{w, b}^{r*} - \mu_{w, b}^*=0$. Therefore, $c_{w, b}^r(\vf^*)\leq \mu_{w, b}^*$ since $V_{w, b}^{r*}$ is non-negative, and thus Condition \eqref{equ:uep-my2} holds.
\end{proof}

\subsection{Existence and uniqueness}
\label{sec:uniqueness}
We first establish the existence of a solution to the NCP~\eqref{prob:cp-route}, and then discuss its uniqueness.

\begin{proposition}
\label{prop:route solution exists}
    Under Assumptions~\ref{ass:cost} and \ref{ass:route}, NCP~\eqref{prob:cp-route} has a solution.
\end{proposition}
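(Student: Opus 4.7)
The plan is to establish existence via the standard truncation-plus-fixed-point approach for NCPs. I will restrict the three non-negative variable blocks $\vf$, $\vmu$, $\mV$ to a compact convex box $K$ chosen large enough that any reasonable NCP solution lies in its interior, apply the Hartman--Stampacchia theorem to the continuous mapping on $K$ to produce a truncated VI solution, and then show that the upper bounds of $K$ are non-binding at that solution so that it in fact solves NCP~(\ref{prob:cp-route}) on $\sR_+^{n_f+|\gB|+n_f}$. Assumption~\ref{ass:cost} is used to bound costs away from zero, while Assumption~\ref{ass:route} supplies the capacity-feasible, demand-meeting route needed to control the auxiliary variables.

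More concretely, I would set $M_f = 1 + \max_{w,b}d_{w,b}$ and $C = 1 + \sup\{c_{w,b}^r(\vf): \vf\in [0,M_f]^{n_f}\}$ (finite by continuity of $\vc$ on a compact set), and define $K = [0,M_f]^{n_f}\times [0,C]^{|\gB|}\times [0,C]^{n_f}$. The mapping $F(\vf,\vmu,\mV) = (\vc(\vf)+\mV-\Lambda\vmu,\ \Lambda^T\vf-\vd,\ \mQ(\vf))$ is continuous on $K$, since $\mQ$ is affine in $\vf$ through $\vx=\Delta^T\vf$. Hartman--Stampacchia therefore yields $(\vf^*,\vmu^*,\mV^*)\in K$ satisfying $\langle F(\vf^*,\vmu^*,\mV^*),\ (\vy,\vz,\mW)-(\vf^*,\vmu^*,\mV^*)\rangle \geq 0$ for all $(\vy,\vz,\mW)\in K$.

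I would then translate this VI into the complementarity conditions of NCP~(\ref{prob:cp-route}) by a componentwise argument and show the truncation bounds are never binding. If $f_{w,b}^{r*}=M_f$, then $\sum_{r'}f_{w,b}^{r'*}>d_{w,b}$ forces $\mu_{w,b}^*=0$ via the $\vmu$-block, and the $\vf$-block at $(w,b,r)$ then demands $c_{w,b}^r(\vf^*)+V_{w,b}^{r*}\leq 0$, contradicting Assumption~\ref{ass:cost}. Hence $\vf^*\in \gF$ and demand holds with equality, which (combined with the definition of $q_A$) secures $\mQ(\vf^*)\geq 0$. Assumption~\ref{ass:route} then gives a route $r$ with $Q_{w,b}^r(\vf^*)>0$; the $\mV$-block forces $V_{w,b}^{r*}=0$, and the $\vf$-block at that $r$ yields $\mu_{w,b}^*\leq c_{w,b}^r(\vf^*)<C$. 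Finally, the bounds on $\mV$ are handled analogously: used routes give $V_{w,b}^{r*}=\mu_{w,b}^*-c_{w,b}^r(\vf^*)<C$, unused routes with positive available capacity give $V_{w,b}^{r*}=0$, and for unused routes with $Q_{w,b}^r(\vf^*)=0$ one may substitute any value in $[0,\mu_{w,b}^*]\subset [0,C)$ without breaking any complementarity pair.

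The main obstacle is the last substitution: the Hartman--Stampacchia conclusion may place $V_{w,b}^{r*}$ precisely at $C$ on routes that are simultaneously unused and unavailable, since on those coordinates the VI pins $V$ neither from above nor from below. One must therefore argue that such a value can always be replaced by a strictly smaller one compatible with all three complementarity conditions, and this is exactly where Assumption~\ref{ass:route} becomes indispensable: the existence of an available route per (OD, class) provides a uniform ceiling $\mu_{w,b}^*<C$ that furnishes the slack needed to reset the offending $V_{w,b}^{r*}$ into the interior of $K$.
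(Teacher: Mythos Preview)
Your overall strategy --- truncate to a compact box, invoke Hartman--Stampacchia, and argue that the upper bounds are non-binding --- is exactly the paper's approach. The treatment of the $\vf$-block and the $\vmu$-block is essentially correct and matches the paper. The gap is in the $\mV$-block, and it is not cosmetic.

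You assert ``Hence $\vf^*\in\gF$ and demand holds with equality, which (combined with the definition of $q_A$) secures $\mQ(\vf^*)\geq 0$.'' But membership in $\gF$ \emph{includes} the vehicle-capacity constraint $x_A\le u_A$ for riding arcs, and nothing you have shown up to that point gives it: ruling out $f_{w,b}^{r*}=M_f$ and obtaining demand equality constrains $\sum_r f$, not $\sum_{A'\in\text{Prior}(A)} x_{A'}$. Consequently, on any coordinate where the truncated VI returns $V_{w,b}^{r*}=C$, you only get $Q_{w,b}^r(\vf^*)\le 0$, and the possibility $Q_{w,b}^r(\vf^*)<0$ has not been excluded. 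Your proposed ``substitution'' of $V$ by a value in $[0,\mu_{w,b}^*]$ cannot repair this: $Q$ depends on $\vf$, not on $\mV$, so if $Q<0$ the complementarity pair $0\le V\perp Q\ge 0$ fails for \emph{every} choice of $V$. In other words, you have tacitly assumed the case $V^*=C$ only occurs when $Q=0$, but that is precisely what must be proved.

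The paper closes this gap with an arc-level argument that you are missing. Suppose some $V_{w,b}^{r*}=C$; then the truncated complementarity forces $Q_{w,b}^r(\vf^*)<0$, so some arc $A\in\gA_{w,b,r}^{\text{priority}}$ has $q_A(\vx^*)<0$. Pick the lowest-priority arc $A'\in\text{Prior}(A)$ that carries positive flow; by construction $q_{A'}(\vx^*)=q_A(\vx^*)<0$, and there is a \emph{used} route $r'$ (say of OD $w'$, class $b'$) through $A'$. Since $f_{w',b'}^{r'*}>0$, the $\vf$-block gives $V_{w',b'}^{r'*}=\mu_{w',b'}^*-c_{w',b'}^{r'}(\vf^*)<C$, so the upper-bound multiplier on that coordinate vanishes, and the truncated $\mV$-block then requires $Q_{w',b'}^{r'}(\vf^*)\ge 0$. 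But $Q_{w',b'}^{r'}(\vf^*)\le q_{A'}(\vx^*)<0$, a contradiction. This tracing step --- from a hypothetical capacity violation back to a used route forced to have non-negative $Q$ --- is the crux of the existence proof and needs to be added to your argument.
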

\begin{proof}
    To prove the existence of a solution to NCP~\eqref{prob:cp-route}, we begin by recalling that a solution is guaranteed when the mapping is continuous and the feasible set is compact. In NCP~\eqref{prob:cp-route}, although the function
    $$
    H(\vf, \vmu, \mV)=\left(
        \begin{array}{c}
        \vc(\vf)+\mV-\Lambda\vmu \\
        \Lambda^T\vf-\vd \\
        \mQ(\vf) \\
        \end{array}
        \right)
    $$
    is continuous, the feasible set $\Omega=\{\vf\geq 0, \vmu\geq 0, \mV\geq 0\}$ is unbounded and therefore not compact. Our proof proceeds by introducing upper-bound constraints for each variable, thereby constructing a new NCP whose feasible set is compact. We then show that none of these added upper bounds are binding at the solution of the new problem. Consequently, any solution to the modified NCP also satisfies the original NCP~\eqref{prob:cp-route}.
    Choose two scalars $e_1$ and $e_2$ such that
    \begin{align}
        e_1 > \max\{d_{w, b}: w\in \gW, b\in \gB_w\} \text{ and } e_2 > \max\{c_{w, b}^r: w\in \gW, b\in \gB_w, r\in \gR_w\}. \notag
    \end{align}
    Define $\Omega'=\Omega\cap\{\vf\leq e_1\bm{1}, \vmu\leq e_2\bm{1}, \mV\leq e_2\bm{1}\}$. Since $\Omega'$ is compact, the following NCP must possess a solution: 
    \begin{subequations}
        \begin{align}
            &0\leq \vf \bot \vc(\vf)+\mV-\Lambda\vmu + \bm{\kappa}\geq 0, \label{equ:exist cc1} \\
            &0\leq \vmu \bot \Lambda^T\vf-\vd +\bm{\rho} \geq 0, \label{equ:exist cc2} \\
            &0\leq \mV \bot \mQ(\vf) +\bm{\nu} \geq 0, \label{equ:exist-cc3} \\
            &0\leq \bm{\kappa} \bot e_1\bm{1}-\vf \geq 0, \label{equ:exist cc4}\\
            &0\leq \bm{\rho} \bot e_2\bm{1}-\vmu \geq 0, \label{equ:exist cc5}\\
            &0\leq \bm{\nu} \bot e_2\bm{1}-\mV \geq 0. \label{equ:exist cc6}
        \end{align}
    \end{subequations}
    Let $(\vf^*, \vmu^*, \mV^*)$ be this solution, if we show that $\bm{\kappa}$, $\bm{\rho}$, and $\bm{\nu}$ are all equal to zero, then $(\vf^*, \vmu^*, \mV^*)$ must also be a solution of NCP~\eqref{prob:cp-route}.

    \textbf{For $\bm{\kappa}$}: Suppose $\kappa_{w, b}^r>0$ for some $w\in \gW, b\in \gB_w, r\in \gR_w$. By the complementarity condition \eqref{equ:exist cc4}, we have $f_{w, b}^{r*}=e_1>d_{w, b}$, so $\sum_{r\in \gR_w}f_{w, b}^{r*}-d_{w, b}+\rho_{w, b}>0$. Then, according to Condition \eqref{equ:exist cc2}, we have $\mu_{w, b}^*=0$. Moreover, since $f_{w, b}^{r*}=e_1>0$, we have 
    \begin{align}
        0=c_{w, b}^r(\vf^*)+V_{w, b}^{r*}-\mu_{w, b}^*+\kappa_{w, b}^r=c_{w, b}^r(\vf^*)+V_{w, b}^{r*}+\kappa_{w, b}^r >0, \notag
    \end{align}
    because $c_{w, b}^r(\vf^*)$ and $\kappa_{w, b}^r$ are positive, and $V_{w, b}^{r*}$ is non-negative. This contradiction yields $\bm{\kappa}= \vzero$.

    \textbf{For $\bm{\rho}$}: For each OD pair $w\in \gW$ and class $b\in \gB_w$, since there exists a route $r\in \gR_w$ with $Q_{w, b}^r(\vf^*)>0$ by assumption, we have $Q_{w, b}^r(\vf^*)+\nu_{w, b}^r>0$. By Condition \eqref{equ:exist-cc3}, $V_{w, b}^{r*}=0$. Therefore, 
    \begin{align}
        &0\leq c_{w, b}^r(\vf^*)+V_{w, b}^{r*}-\mu_{w, b}^*+\kappa_{w, b}^r = c_{w, b}^r(\vf^*) - \mu_{w, b}^* \notag \\
        &\Rightarrow \mu_{w, b}^*\leq c_{w, b}^r(\vf^*)< e_2. \label{equ:exist-mu}
    \end{align}
    This means that $e_2 - \mu_{w, b}^*>0$ for all $w \in \gW$ and $b \in \gB_w$, and $\bm{\rho}= \vzero$ by Condition \eqref{equ:exist cc5}.

    \textbf{For $\bm{\nu}$}: Supposing that $\nu_{w, b}^r>0$ for some $w\in \gW, b\in \gB_w, r\in \gR_w$, we then have $V_{w, b}^{r*}=e_2>0$. By Condition \eqref{equ:exist-cc3}, we have $Q_{w, b}^r(\vf^*)+\nu_{w, b}^r=0$, hence $Q_{w, b}^r(\vf^*) = -\nu_{w, b}^r < 0$. By the definition of $Q_{w, b}^r(\vf^*)$, there must exist an arc with negative available capacity in route $r$, namely one $A\in \gA_{w, b, r}^{\text{priority}}$ such that $0 > q_A(\vx^*)=u_{\text{riding}(A)}-\sum_{A'\in \text{Prior}(A)}x_{A'}^*$. 

    Let $A'$ be the arc in $\text{Prior}(A)$ that carries positive flow and has the lowest loading priority. Then, $A'$ has the same available capacity as $A$, that is $q_{A'}(\vx^*) = q_A(\vx^*) < 0$. Moreover, there is at least one used route $r'$ that passes through this arc. Without loss of generality, let $r'$ belong to OD pair $w'\in \gW$ and class $b'\in \gB_w$. By the complementarity condition \eqref{equ:exist cc1} and the fact that $f_{w', b'}^{r'*}>0$, we have 
    \begin{align}
        c_{w', b'}^{r'}(\vf^*)+V_{w', b'}^{r'*}-\mu_{w', b'}^*+\kappa_{w', b'}^{r'}=0 \Rightarrow V_{w', b'}^{r'*}=\mu_{w', b'}^* - c_{w', b'}^{r'}(\vf^*). \notag
    \end{align}
    Since $\mu_{w', b'}^*< e_2$ (Equation \eqref{equ:exist-mu}) and $c_{w', b'}^{r'}(\vf^*)$ is positive, we get $V_{w', b'}^{r'*}< e_2$. This means $\nu_{w', b'}^{r'}=0$ by Condition \eqref{equ:exist cc6}. On the another hand, $Q_{w', b'}^{r'}(\vf^*)=\min\{q_{A''}(\vx^*): A''\in \gA_{w,b,r}^{\text{priority}}\}\leq q_{A'}(\vx^*)<0$. This implies $Q_{w', b'}^{r'}(\vf^*) + \nu_{w', b'}^{r'}<0$, which contradicts Condition \eqref{equ:exist-cc3}. Therefore, $\bm{\nu}= \vzero$ holds.  
\end{proof}

Having established the existence of a solution to NCP~\eqref{prob:cp-route}, we will then turn to its uniqueness. Since the mapping $H(\vf, \vmu, \mV)$ is non-monotone, the solution set of NCP~\eqref{prob:cp-route} --- equivalently, the UEIP solution set --- is generally not unique.  We next discuss the consequence of this non-uniqueness. 
\subsection{Unexpected consequence of non-uniqueness}
\label{sec:unexpected}
Many traffic assignment problems do not admit unique solutions. When this occurs, it is common practice to treat any solution returned by a valid algorithm as equally admissible. However, as we show below, some UEIP solutions can exhibit clear violations of realistic passenger behavior. This undermines the applicability of the NCP formulation \eqref{prob:cp-route}, as it may admit solutions that are behaviorally implausible or operationally nonsensical.

Consider a toy network shown in Figure~\ref{fig:unreasonableUEP1}, which consists of two lines (Line 1 and Line 2).
\begin{figure}[htbp]
	\centering
	\subfigure[Network]{
	{\includegraphics[width=0.48\textwidth]{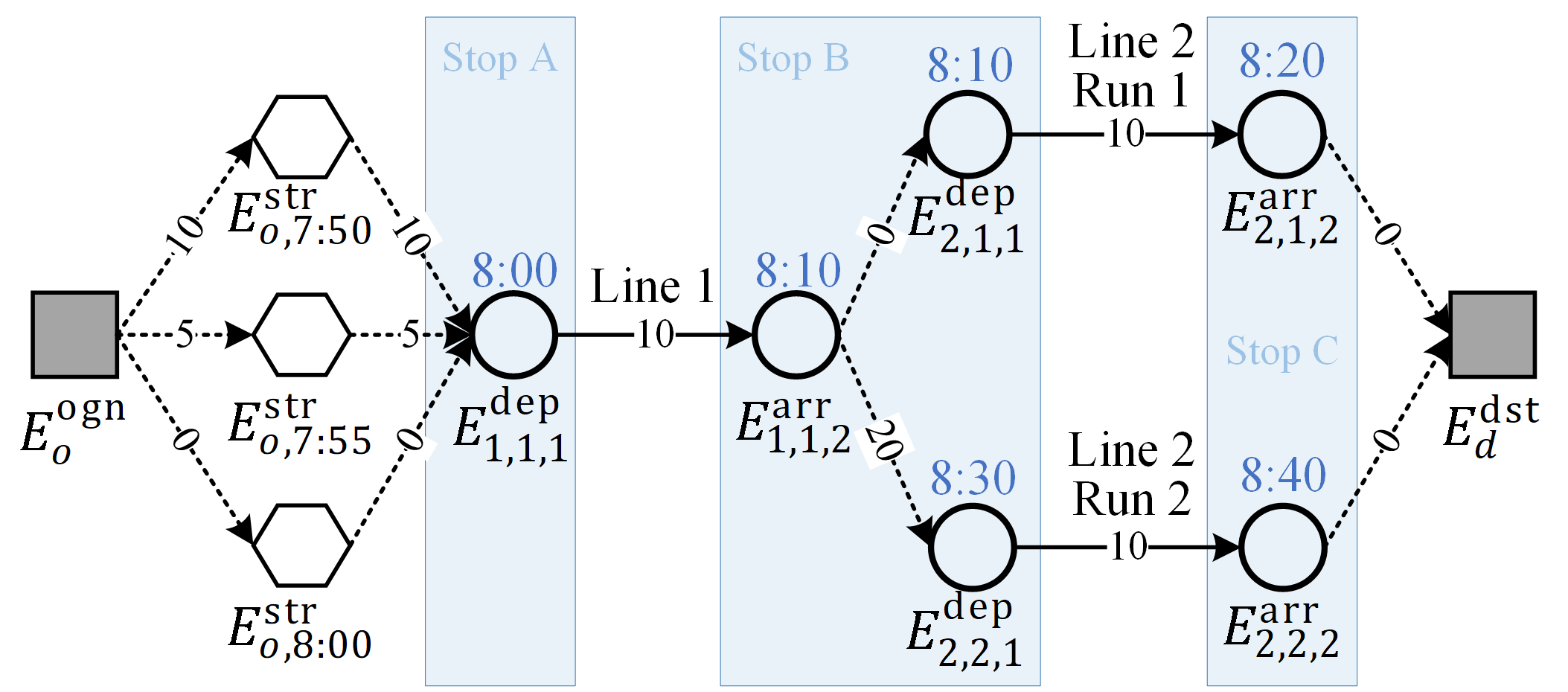}}\label{fig:unreasonableUEP1}
	}
	\subfigure[UEIP–I]{
	\includegraphics[width=0.48\textwidth]{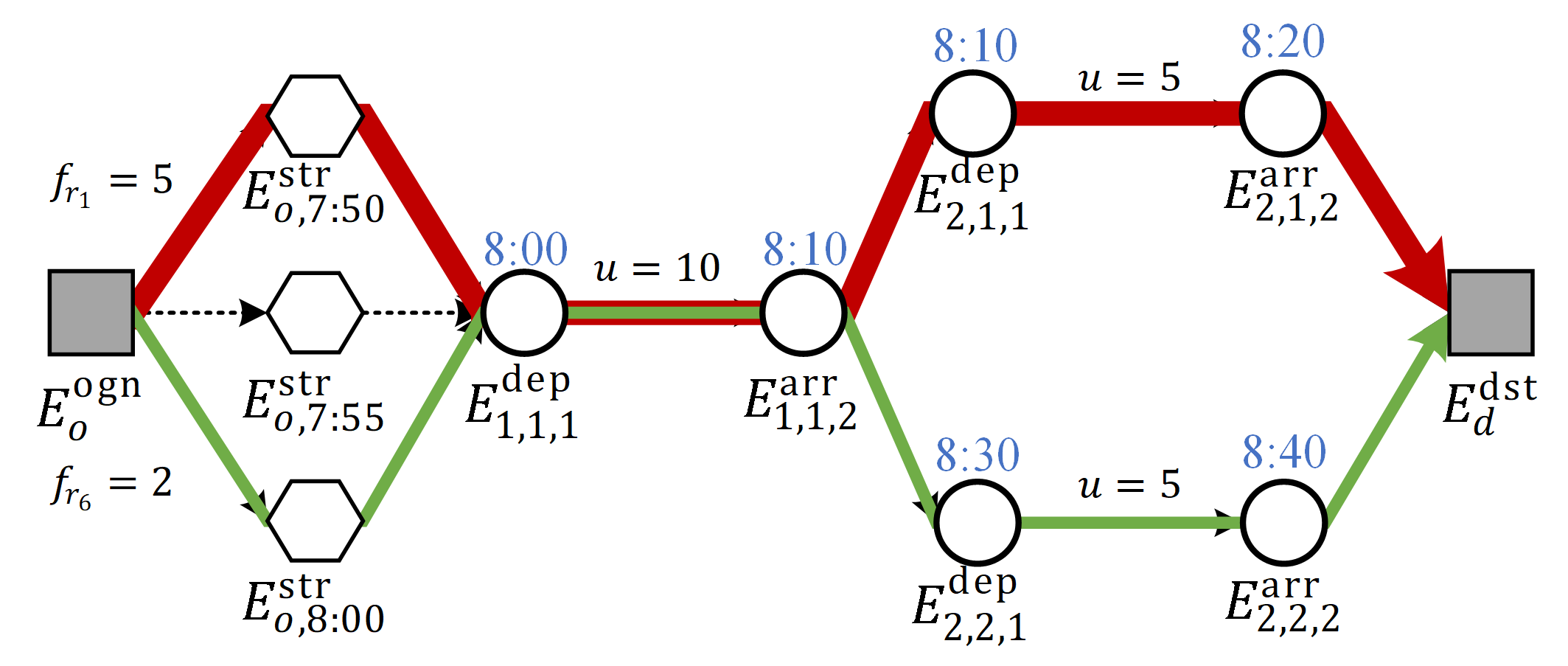}\label{fig:unreasonableUEP2}
	}
        \subfigure[UEIP–II]{
	\includegraphics[width=0.48\textwidth]{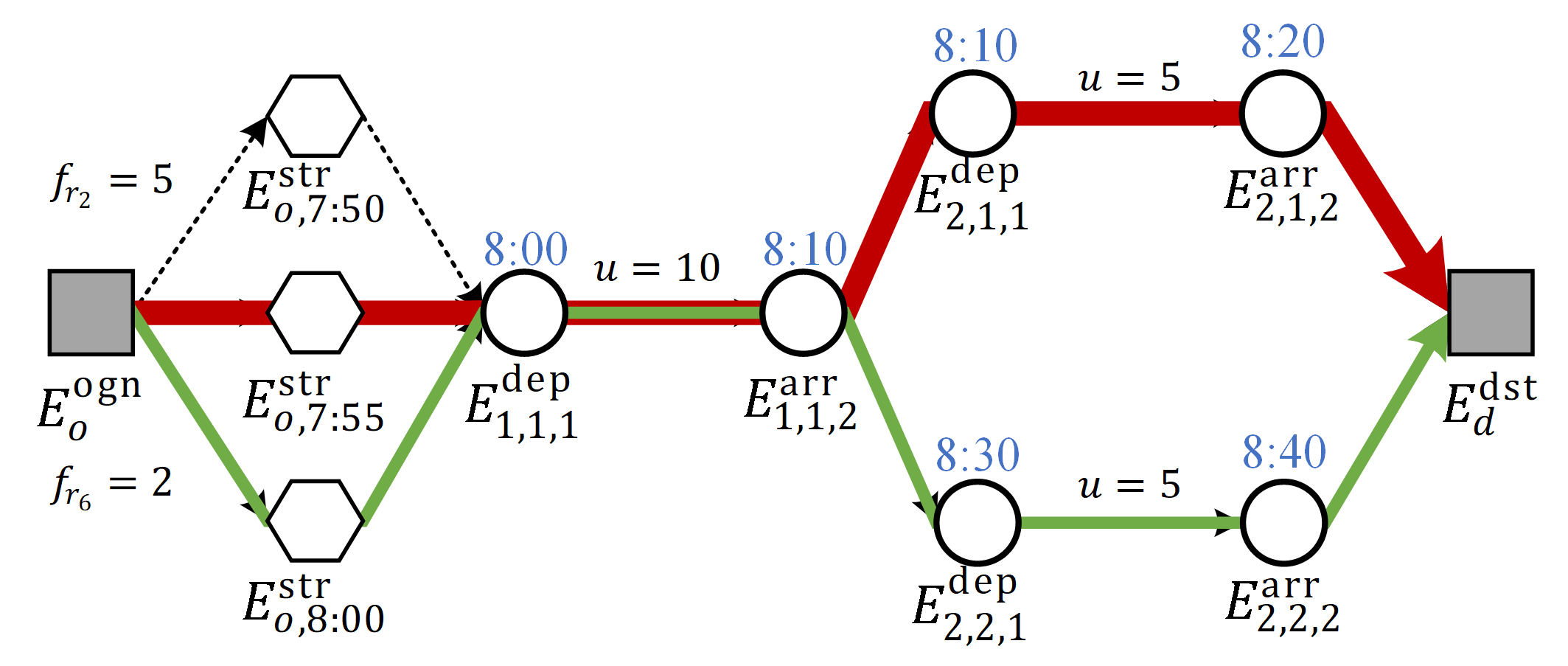}\label{fig:unreasonableUEP3}
	}
        \subfigure[UEIP–III]{
	\includegraphics[width=0.48\textwidth]{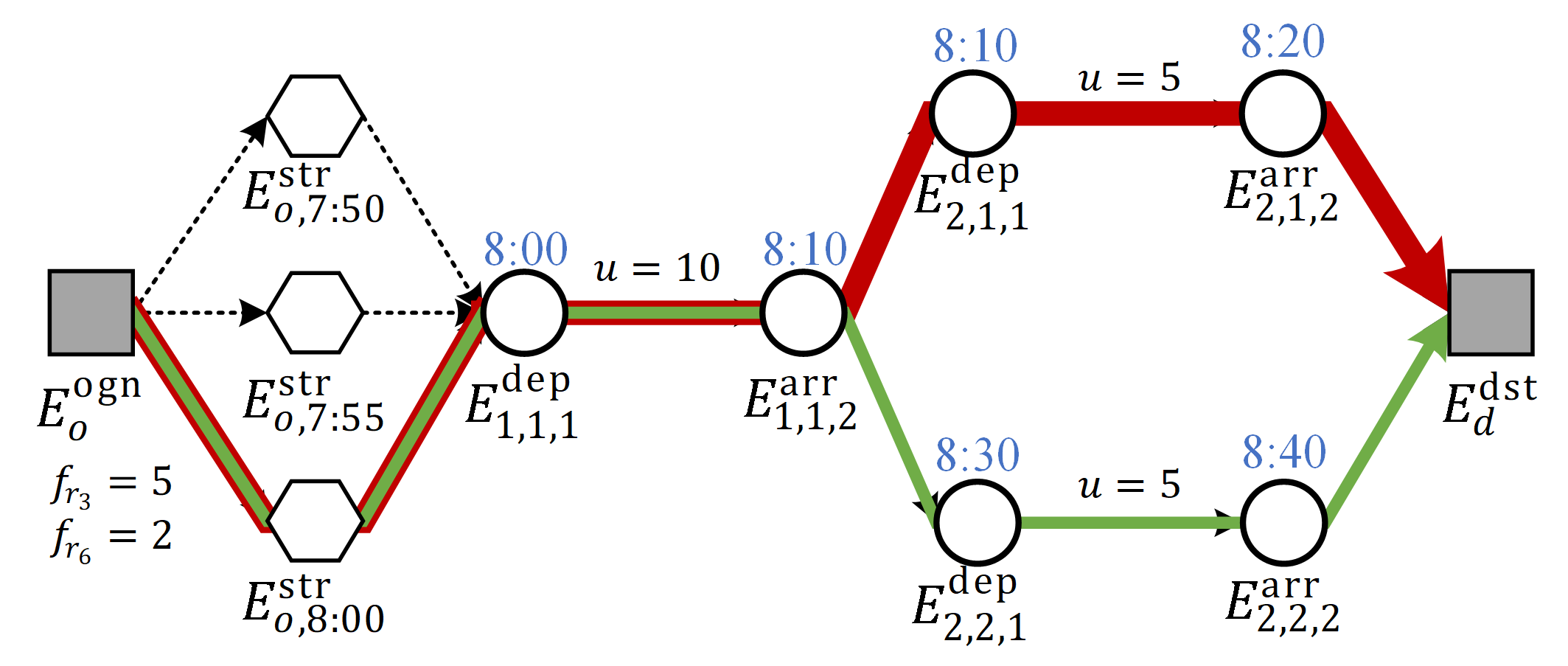}\label{fig:unreasonableUEP4}
	}
	\caption{An illustrative example of unreasonable UEIP states.}
	\label{fig:unreasonableUEP}
\end{figure}
Specifically, Line 1 has only one run, while Line 2 includes two runs (Run 1 and Run 2). Congestion effects are not considered, and each arc is labeled with its cost. Seven passengers depart from origin $o$ to destination $d$, with three starting time options: 7:50, 7:55, and 8:00. All passengers first take the high-capacity Line 1 (capacity = 10) from Stop A to Stop B. Then, a subset of them can directly transfer to the low-capacity Run 1 of Line 2 (capacity = 5) to reach the destination without waiting, while the remaining two passengers must stay at the platform and wait for Run 2 of Line 2. This setting yields three distinct UEIP solutions, namely UEIP–I, UEIP–II, and UEIP–III as shown in Figure~\ref{fig:unreasonableUEP2}--\ref{fig:unreasonableUEP4} and Table~\ref{tab:unreasonableUEP}, which differ only in the starting times chosen by the passengers taking Run 1. In UEIP–I and UEIP–II, these passengers begin their trips at 7:50 and 7:55, respectively, to catch Line 1 leaving at 8:00. However, such outcomes are unrealistic. Since Line 1 is unsaturated --- meaning that passengers can board regardless of their arrival order at Stop A --- the optimal behavior is simply to depart at 8:00, as in UEIP–III. Departing earlier yields no advantage and only increases waiting time and the early-departure penalty.

\begin{table}[htbp]
  \centering
  \caption{Summary of the three UEIP solutions.}
  \footnotesize %
    \begin{tabular}{cccccccccccc}
    \toprule
    \multirow{2}[4]{*}{Routes} & \multirow{2}[4]{*}{Trajectories} & \multirow{2}[4]{*}{Cost} & \multicolumn{3}{c}{UEIP–I} & \multicolumn{3}{c}{UEIP–II} & \multicolumn{3}{c}{UEIP–III} \\
\cmidrule{4-12}          &       &       & $\vf$     & $\mQ$     & $\mV$     & $\vf$     & $\mQ$     & $\mV$     & $\vf$     & $\mQ$     & $\mV$ \\
    \midrule
    $r_1$  & 7:50, Run 1 & 40    & 5     & 0     & 0     & 0     & 0     & $\geq 0$ & 0     & 0     & $\geq 0$ \\
    $r_2$  & 7:55, Run 1 & 30    & 0     & 0     & $\geq 10$ & 5     & 0     & 10    & 0     & 0     & $\geq 10$ \\
    $r_3$  & 8:00, Run 1 & 20    & 0     & 0     & $\geq 20$ & 0     & 0     & $\geq 20$ & 5     & 0     & 20 \\
    $r_4$  & 7:50, Run 2 & 60    & 0     & 3     & 0     & 0     & 3     & 0     & 0     & 3     & 0 \\
    $r_5$  & 7:55, Run 2 & 50    & 0     & 3     & 0     & 0     & 3     & 0     & 0     & 3     & 0 \\
    $r_6$  & 8:00, Run 2 & 40    & 2     & 3     & 0     & 2     & 3     & 0     & 2     & 3     & 0 \\
    \bottomrule
    \end{tabular}%
  \label{tab:unreasonableUEP}%
\end{table}%

To illustrate the reason behind the existence of the unrealistic solutions, we take UEIP–II as an example. Passengers assigned to Run 1 (i.e., route $r_2$) cannot shift their starting time to 8:00 (i.e., move to route $r_3$) because the arc $(E_{1,1,2}^{\text{arr}}, E_{2,1,1}^{\text{dep}})$ is treated as unavailable, thereby rendering route $r_3$ unavailable. In reality, however, we argue that route $r_3$ should be regarded as “available” for passengers on $r_2$, since transferring flow from $r_2$ to $r_3$ merely changes their starting time and does not introduce additional flow to the bottleneck arc $(E_{1,1,2}^{\text{arr}}, E_{2,1,1}^{\text{dep}})$. More generally, even if route $r$ contains some unavailable arcs, when another route $r'$ also traverses these arcs, reallocating flow from $r'$ to $r$ does not increase the load on those unavailable arcs. Such a reallocation should therefore be admissible, and route $r$ should be considered available relative to route $r'$.

In summary, this example highlights that the definition of route availability deviates from actual passenger behavior, which leads to unrealistic solutions in the UEIP solution set. In the next section, we refine the definition of available routes and introduce a new equilibrium condition together with its corresponding NCP formulation, which eliminates such unrealistic outcomes. We also develop an algorithm tailored to solve the proposed model.

\section{A Behaviorally Compliant UEIP Formulation}
\label{sec:refined}

We now proceed to a refined UEIP condition, which not only rules out the behaviorally unrealistic solutions but also enables more tractable algorithms. In what follows, Section~\ref{sec:properties} presents the refined UEIP conditions and the corresponding NCP formulation, and Section~\ref{sec:algorithms} develops the solution algorithms.

\subsection{UEIP revisited}
\label{sec:properties}

The preceding example highlights that route availability is not absolute but relative to the reference route, as reallocating flow may remain admissible if no additional load is imposed on unavailable arcs.  Accordingly, we define the available capacity of route $r'$ with respect to route $r$ as
\begin{align}
    Q_{w,b}^{r',r}(\vf)=\min\{q_A(\vx): A\in \gA_{w,b,r'}^{\text{priority}}\setminus \gA_{w,b,r}^{\text{priority}}\},\quad \forall w \in \gW, \ \forall b \in \gB_w, \ \forall r',r\in \gR_w \text{ and } r'\neq r. \notag
\end{align}
Note that since $\gA_{w,b,r'}^{\text{priority}}\setminus \gA_{w,b,r}^{\text{priority}}$ is nonempty whenever $r'\neq r$, the quantity $Q_{w,b}^{r',r}(\vf)$ is well-defined.
Based on this, we revise the definition of route availability as follows:
\begin{definition}[Relative route availability]
    Given any feasible flow vector $\vf \in \gF$, for each OD pair $w \in \gW$ and class $b \in \gB_w$, a route $r\in \gR_w$ is said to be available with respect to another route $r'\in \gR_w$ if every arc on $r'$ that is not contained in $r$ is available. Equivalently, this condition holds if $Q_{w,b}^{r',r}(\vf) > 0$.
\end{definition}

With this definition, we can tighten the equilibrium condition to rule out behavioral inconsistency. 

\begin{definition}[Refined user equilibrium with implicit priority]
\label{def:refine uep}
A feasible flow vector $\vf\in \gF$ is a refined user equilibrium with implicit priority (R-UEIP) if for each OD pair $w\in \gW$ and class $b\in \gB_w$ no individual on route $r\in \gR_w$ can reduce their travel cost $c_{w,b}^r(\vf)$ by unilaterally switching to another available route that is available with respect to $r$. Formally, let the total available capacity of the dominant routes with respect to $r$ be $\hat{Q}_{w,b}^r(\vf)=\sum_{r'\in \gR_{w,b,r}^{\text{dominate}}(\vf)}Q_{w,b}^{r',r}(\vf)$. Then, $\vf\in \gF$ is an R-UEIP solution if 
    \begin{align}
        f_{w,b}^{r}=0 \quad \text{whenever } \hat{Q}_{w,b}^r(\vf)>0, \quad \forall w \in \gW, \ \forall b \in \gB_w, \ \forall r\in \gR_w. \label{equ:ruep}
    \end{align}
\end{definition}

It is straightforward to verify that, in the toy example of the previous section, the unrealistic equilibria UEIP–I and UEIP–II do not satisfy the refined equilibrium condition, whereas UEIP–III does. The relationship between the two types of equilibrium can be summarized as follows:
\begin{proposition}
    If $\vf^*$ is an R-UEIP, then it is also a UEIP, but the converse does not hold.
\end{proposition}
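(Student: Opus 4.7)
The plan is a direct set-monotonicity argument followed by invoking the toy example of Section~\ref{sec:unexpected} for the strictness.

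First I would note the key inclusion $\gA_{w,b,r'}^{\text{priority}}\setminus \gA_{w,b,r}^{\text{priority}} \subseteq \gA_{w,b,r'}^{\text{priority}}$. Since the minimum of $q_A(\vx)$ is taken over a smaller arc set in $Q_{w,b}^{r',r}(\vf)$ than in $Q_{w,b}^{r'}(\vf)$, we obtain the pointwise bound $Q_{w,b}^{r',r}(\vf) \geq Q_{w,b}^{r'}(\vf)$ for every pair of distinct routes $r,r'\in \gR_w$ and every feasible $\vf\in\gF$. Summing over the common index set $\gR_{w,b,r}^{\text{dominate}}(\vf)$ immediately gives $\hat{Q}_{w,b}^r(\vf) \geq \Tilde{Q}_{w,b}^r(\vf)$.

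Second I would chain the implications defining the two equilibria. Suppose $\vf^*$ is an R-UEIP and fix any $w\in\gW$, $b\in\gB_w$, $r\in\gR_w$ with $\Tilde{Q}_{w,b}^r(\vf^*)>0$. By the inequality established in the previous step, $\hat{Q}_{w,b}^r(\vf^*) \geq \Tilde{Q}_{w,b}^r(\vf^*) > 0$, so condition \eqref{equ:ruep} forces $f_{w,b}^{r*}=0$. Since $\vf^*\in\gF$, this is exactly the UEIP condition \eqref{equ:uep Nguyen}, establishing the forward implication.

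For the converse, I would reuse the toy network in Figure~\ref{fig:unreasonableUEP} from Section~\ref{sec:unexpected}. Both UEIP--I and UEIP--II were shown to satisfy \eqref{equ:uep Nguyen}; yet, as observed in that section, a passenger on route $r_2$ (UEIP--II) can shift to route $r_3$ without loading any arc outside $\gA_{w,b,r_2}^{\text{priority}}$, so $Q_{w,b}^{r_3,r_2}(\vf)>0$ even though $Q_{w,b}^{r_3}(\vf)=0$. Because route $r_3$ dominates $r_2$ in cost, $\hat{Q}_{w,b}^{r_2}(\vf)>0$ while $f_{w,b}^{r_2}>0$, violating \eqref{equ:ruep}. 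Hence UEIP--II is a UEIP that is not an R-UEIP, proving strictness.

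There is no real obstacle: the whole argument rests on the one-line set inclusion that makes $\hat{Q}$ dominate $\Tilde{Q}$, and the counterexample is already worked out in the preceding section. I would keep the write-up short, perhaps noting explicitly that the inequality $Q_{w,b}^{r',r}\geq Q_{w,b}^{r'}$ is the mathematical manifestation of the behavioral refinement articulated in Section~\ref{sec:unexpected}.
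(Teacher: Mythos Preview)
Your proposal is correct and follows essentially the same approach as the paper: the set inclusion $\gA_{w,b,r'}^{\text{priority}}\setminus \gA_{w,b,r}^{\text{priority}} \subseteq \gA_{w,b,r'}^{\text{priority}}$ gives $Q_{w,b}^{r',r}\geq Q_{w,b}^{r'}$, hence $\hat{Q}_{w,b}^r\geq \Tilde{Q}_{w,b}^r$, and the counterexample from Section~\ref{sec:unexpected} handles the converse. One minor wording slip: shifting from $r_2$ to $r_3$ \emph{does} load a new priority arc (the 8:00 boarding arc), but since that arc has positive available capacity your conclusion $Q_{w,b}^{r_3,r_2}(\vf)>0$ is still correct.
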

\begin{proof}
    Suppose that $\vf^*$ satisfies the R-UEIP condition. Let $\vx^*=\Delta^T \vf$. For any OD pair $w\in \gW$, any class $b\in \gB_w$, and any two different routes $r, r' \in \gR_w$, it follows that $Q_{w,b}^{r',r}(\vf^*)=\min\{q_A(\vx^*): A\in \gA_{w,b,r'}^{\text{priority}}\setminus \gA_{w,b,r}^{\text{priority}}\} \geq \min\{q_A(\vx^*): A\in \gA_{w,b,r'}^{\text{priority}}\} = Q_{w,b}^{r'}(\vf^*)$. Consequently,
    \begin{align}
        \hat{Q}_{w,b}^r(\vf^*)=\sum_{r'\in \gR_{w,b,r}^{\text{dominate}}(\vf^*)}Q_{w,b}^{r',r}(\vf^*) \geq \sum_{r'\in \gR_{w,b,r}^{\text{dominate}}(\vf^*)}Q_{w,b}^{r'}(\vf^*) = \Tilde{Q}_{w,b}^r(\vf^*). \notag
    \end{align}
    Therefore, if $\Tilde{Q}_{w,b}^r(\vf^*) > 0$, then $\hat{Q}_{w,b}^r(\vf^*) > 0$. According to the R-UEIP condition~\eqref{equ:ruep}, this implies $f_{w,b}^{r*} = 0$, thus proving the forward direction. The counterexample provided in Section~\ref{sec:uniqueness} demonstrates that the converse does not hold.
\end{proof}

This result implies that the R-UEIP admits a smaller solution set, excluding certain unrealistic outcomes permitted by the UEIP. Next, we formulate an NCP model for computing the refined equilibrium. For every arc $A\in \gA^{\text{priority}}$, we introduce a variable $v_A$, collected into the vector $\vv$, which is orthogonal to the arc's available capacity; that is, $v_A\bot q_A(\vx)$. Let $\bar{\Delta}=[\delta_{w, b}^{r,A}]_{|n_f|\times|\gA^{\text{priority}}|}$ be the route-arc incident matrix. We propose the following NCP model: 
\begin{subequations}
\label{prob:cp}
    \begin{align}
        &0\leq \vf \bot \vc(\vf)+\bar{\Delta}\vv-\Lambda\vmu \geq 0, \label{equ:cp1} \\
        &0\leq \vmu \bot \Lambda^T\vf-\vd \geq 0, \label{equ:cp2} \\
        &0\leq \vv \bot \vq(\vx) \geq 0. \label{equ:cp3}
    \end{align}
\end{subequations}

\begin{proposition}
    Under Assumption~\ref{ass:cost}, if $\vf^*$ is a solution of NCP \eqref{prob:cp}, $\vf^*$ satisfies the R-UEIP conditions \eqref{cons:demand}, \eqref{cons:non-negative}, \eqref{cons:capacity}, and \eqref{equ:ruep}.
\end{proposition}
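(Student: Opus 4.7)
The plan is to mirror the sufficiency argument in the proof of Theorem~\ref{thm:equivalent}, handling \eqref{cons:non-negative}, \eqref{cons:capacity}, and \eqref{cons:demand} in essentially the same way, and then treating \eqref{equ:ruep} as the genuinely new step. Let $\vx^*=\Delta^T\vf^*$. Non-negativity \eqref{cons:non-negative} is immediate from \eqref{equ:cp1}. For \eqref{cons:capacity}, the non-negativity of $\vq(\vx^*)$ supplied by \eqref{equ:cp3} gives $q_A(\vx^*)\geq 0$ on every priority arc, and picking, for each riding arc, the lowest-priority incoming arc yields exactly $u_{\bar A}-x_{\bar A}^*\geq 0$. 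For \eqref{cons:demand}, I argue by contradiction as before: if some $\sum_{r}f_{w,b}^{r*}>d_{w,b}$, then \eqref{equ:cp2} forces $\mu_{w,b}^*=0$, and then the orthogonality in \eqref{equ:cp1} combined with $c_{w,b}^r(\vf^*)>0$ (Assumption~\ref{ass:cost}) and $\bar\Delta\vv^*\geq \vzero$ drives every $f_{w,b}^{r*}$ to zero, contradicting the hypothesized strict inequality.

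The main step is establishing \eqref{equ:ruep}. I will proceed by contradiction: assume there exist $w,b,r$ with $f_{w,b}^{r*}>0$ and $\hat Q_{w,b}^r(\vf^*)>0$. The latter guarantees a dominating route $r'\in\gR_{w,b,r}^{\text{dominate}}(\vf^*)$, so $c_{w,b}^{r'}(\vf^*)<c_{w,b}^r(\vf^*)$, together with $Q_{w,b}^{r',r}(\vf^*)>0$, i.e.\ $q_A(\vx^*)>0$ for every $A\in\gA_{w,b,r'}^{\text{priority}}\setminus\gA_{w,b,r}^{\text{priority}}$. The complementarity in \eqref{equ:cp3} then kills $v_A^*$ on exactly this symmetric-difference set, which is the key mechanism that makes the refined formulation enforce \emph{relative} availability.

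From $f_{w,b}^{r*}>0$ and \eqref{equ:cp1} I obtain the equality
\begin{equation*}
c_{w,b}^{r}(\vf^*)+\sum_{A\in\gA_{w,b,r}^{\text{priority}}}v_A^*=\mu_{w,b}^*,
\end{equation*}
and from $f_{w,b}^{r'*}\geq 0$ I only get the inequality
\begin{equation*}
c_{w,b}^{r'}(\vf^*)+\sum_{A\in\gA_{w,b,r'}^{\text{priority}}}v_A^*\geq\mu_{w,b}^*.
\end{equation*}
Splitting the $r'$-sum across $\gA_{w,b,r'}^{\text{priority}}\cap\gA_{w,b,r}^{\text{priority}}$ and its complement inside $\gA_{w,b,r'}^{\text{priority}}$, the second piece vanishes by the preceding step, so the sum over $r'$ is bounded above by $\sum_{A\in\gA_{w,b,r}^{\text{priority}}}v_A^*$ since $\vv^*\geq\vzero$. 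Substituting into the inequality and using the equality for $r$ yields $c_{w,b}^{r'}(\vf^*)\geq c_{w,b}^{r}(\vf^*)$, contradicting dominance.

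The hard part is conceptually this last step: one must exploit the fact that $\vv^*$ lives on arcs (not on routes) so that the multipliers attached to arcs shared by $r$ and $r'$ cancel, leaving only the multipliers on $\gA_{w,b,r'}^{\text{priority}}\setminus\gA_{w,b,r}^{\text{priority}}$ — precisely those that the complementarity \eqref{equ:cp3} annihilates when the relative available capacity is strictly positive. This cancellation is the whole point of moving from the route-level NCP \eqref{prob:cp-route} to the arc-level NCP \eqref{prob:cp}, and it is what rules out the behavioral artifacts identified in Section~\ref{sec:unexpected}.
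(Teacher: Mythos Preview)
Your proposal is correct and follows essentially the same approach as the paper: the feasibility conditions \eqref{cons:demand}--\eqref{cons:capacity} are handled by mimicking the sufficiency argument of Theorem~\ref{thm:equivalent}, and for \eqref{equ:ruep} the key mechanism---zeroing out $v_A^*$ on $\gA_{w,b,r'}^{\text{priority}}\setminus\gA_{w,b,r}^{\text{priority}}$ via \eqref{equ:cp3}, then bounding the intersection sum by the full $r$-sum using $\vv^*\geq\vzero$---is identical. The only cosmetic difference is that you frame the last step as a contradiction from $f_{w,b}^{r*}>0$, whereas the paper argues the contrapositive directly, concluding $f_{w,b}^{r*}=0$ from the strict inequality in \eqref{equ:cp1}.
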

\begin{proof}
    The proof that $\bm{f}^*$ satisfies the feasible conditions~\eqref{cons:demand}, \eqref{cons:non-negative}, and~\eqref{cons:capacity} follows the same reasoning as in Theorem~\ref{thm:equivalent}, with $\mV$ replaced by $\bar{\Delta}\vv$, and is thus omitted here. Next, we prove that $\vf^*$ also satisfies condition~\eqref{equ:ruep}. 
    
    For any OD pair $w\in \gW$ and any class $b\in \gB_w$, suppose there exists a route $r\in \gR_w$ such that $\hat{Q}_{w,b}^r(\vf^*) > 0$. Then, there must exist another route $r'\in \gR_w$ satisfying $c_{w, b}^{r'}(\vf^*) < c_{w, b}^{r}(\vf^*)$ and $Q_{w,b}^{r',r}(\vf^*) > 0$. This implies $q_A > 0$ for all $A \in \gA_{w,b,r'}^{\text{priority}}\setminus \gA_{w,b,r}^{\text{priority}}$. According to the complementarity condition~\eqref{equ:cp3}, we have $v_A = 0$ for all such $A$. Substituting these $v_A$ values into~\eqref{equ:cp1} yields:
    \begin{align}
        0\leq c_{w, b}^{r'}(\vf^*) + \sum_{A\in \gA_{w,b,r'}^{\text{priority}}}v_A - \mu_{w, b} = c_{w, b}^{r'}(\vf^*) + \sum_{A\in \gA_{w,b,r'}^{\text{priority}}\cap \gA_{w,b,r}^{\text{priority}}}v_A - \mu_{w, b}. \notag
    \end{align}
    Moreover, since $c_{w, b}^{r'}(\vf^*) < c_{w, b}^{r}(\vf^*)$ and $\sum_{A\in \gA_{w,b,r'}^{\text{priority}}\cap \gA_{w,b,r}^{\text{priority}}}v_A \leq \sum_{A\in \gA_{w,b,r}^{\text{priority}}}v_A$, it follows that
    \begin{align}
        0 < c_{w, b}^{r}(\vf^*) + \sum_{A\in \gA_{w,b,r}^{\text{priority}}}v_A - \mu_{w, b}. \notag
    \end{align}
    Finally, by the complementarity condition in~\eqref{equ:cp1}, we obtain $f_{w, b}^{r*} = 0$, which proves that $\vf^*$ satisfies~\eqref{equ:ruep}.
\end{proof}

Under the same assumptions as NCP \eqref{prob:cp-route}, the existence of a solution for NCP \eqref{prob:cp} can also be guaranteed. The result regarding the existence of the solution is described as follows:

\begin{proposition}
\label{prop:model2 solution exists}
    Under Assumptions \ref{ass:cost} and \ref{ass:route}, NCP~\eqref{prob:cp} has a solution.
\end{proposition}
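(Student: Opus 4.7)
The plan is to mimic the truncation argument used in Proposition~\ref{prop:route solution exists}. The mapping
\begin{align*}
H(\vf,\vmu,\vv)=\begin{pmatrix} \vc(\vf)+\bar{\Delta}\vv-\Lambda\vmu \\ \Lambda^T\vf-\vd \\ \vq(\vx) \end{pmatrix}
\end{align*}
is continuous on the natural feasible set $\{\vf\geq\vzero,\ \vmu\geq\vzero,\ \vv\geq\vzero\}$, but this set is unbounded. I would pick constants $e_1>\max_{w,b}d_{w,b}$ and $e_2>\max_{w,b,r}c_{w,b}^r$, append slack-augmented complementarities
\begin{align*}
0\leq \bm{\kappa}\bot e_1\vone-\vf\geq 0,\quad 0\leq \bm{\rho}\bot e_2\vone-\vmu\geq 0,\quad 0\leq \bm{\nu}\bot e_2\vone-\vv\geq 0,
\end{align*}
and replace \eqref{equ:cp1}--\eqref{equ:cp3} by their counterparts containing these multipliers (exactly as in \eqref{equ:exist cc1}--\eqref{equ:exist cc6}, only with $\mV$ replaced by $\vv$ and $\mQ(\vf)$ by $\vq(\vx)$). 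The augmented NCP is posed on a compact convex set with a continuous mapping and therefore admits a solution $(\vf^*,\vmu^*,\vv^*,\bm{\kappa}^*,\bm{\rho}^*,\bm{\nu}^*)$; it then suffices to show $\bm{\kappa}^*=\bm{\rho}^*=\bm{\nu}^*=\vzero$, which by construction makes $(\vf^*,\vmu^*,\vv^*)$ a solution of NCP~\eqref{prob:cp}.

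The arguments for $\bm{\kappa}^*=\vzero$ and $\bm{\rho}^*=\vzero$ are essentially the same as in Proposition~\ref{prop:route solution exists}, with the route-level term $\mV$ replaced by $\bar{\Delta}\vv$. For $\bm{\kappa}^*$: a strictly positive $\kappa_{w,b}^{r*}$ would force $f_{w,b}^{r*}=e_1>d_{w,b}$, and hence $\mu_{w,b}^*=0$ through the demand-side complementarity; but combining $f_{w,b}^{r*}>0$ with the first augmented complementarity then demands $c_{w,b}^r(\vf^*)+(\bar{\Delta}\vv^*)_{w,b}^r+\kappa_{w,b}^{r*}=0$, which is impossible by Assumption~\ref{ass:cost} and non-negativity of the remaining terms. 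For $\bm{\rho}^*$: Assumption~\ref{ass:route} supplies, for each $(w,b)$, a route $r$ with $Q_{w,b}^r(\vf^*)>0$, so every arc $A\in\gA_{w,b,r}^{\text{priority}}$ satisfies $q_A(\vx^*)>0$; the third augmented complementarity forces $v_A^*=0$ on these arcs, and reading the first augmented complementarity along $r$ yields $\mu_{w,b}^*\leq c_{w,b}^r(\vf^*)<e_2$, which establishes both that $\bm{\rho}^*=\vzero$ and the uniform bound $\mu_{w,b}^*<e_2$ used below.

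The main obstacle will be showing $\bm{\nu}^*=\vzero$, because $\vv$ now lives at the arc level rather than the route level, so a violation must be translated back to a route through the arc before the cost-threshold bound on $\mu$ can be invoked. I plan to argue by contradiction: if $\nu_A^*>0$ for some $A\in\gA^{\text{priority}}$, then $v_A^*=e_2$ and $q_A(\vx^*)=-\nu_A^*<0$. Let $A'$ be the lowest-priority arc in $\text{Prior}(A)$ carrying positive flow; by construction every arc in $\text{Prior}(A)\setminus\text{Prior}(A')$ carries no flow, so
\begin{align*}
q_{A'}(\vx^*)=u_{\text{riding}(A')}-\!\!\sum_{A''\in\text{Prior}(A')}\!\!x_{A''}^*=u_{\text{riding}(A)}-\!\!\sum_{A''\in\text{Prior}(A)}\!\!x_{A''}^*=q_A(\vx^*)<0,
\end{align*}
which forces $\nu_{A'}^*>0$ and $v_{A'}^*=e_2$. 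Since $x_{A'}^*>0$, there exists a used route $r'$ of some OD pair $w'$ and class $b'$ passing through $A'$, and the first augmented complementarity with $f_{w',b'}^{r'*}>0$ gives
\begin{align*}
\mu_{w',b'}^* \;=\; c_{w',b'}^{r'}(\vf^*)+\!\!\sum_{A''\in\gA_{w',b',r'}^{\text{priority}}}\!\!v_{A''}^*+\kappa_{w',b'}^{r'*} \;\geq\; v_{A'}^* \;=\; e_2,
\end{align*}
contradicting the strict bound $\mu_{w',b'}^*<e_2$ from the previous paragraph. The delicate points are (i) choosing $A'$ so that $q_{A'}(\vx^*)=q_A(\vx^*)$ exactly, which requires using the head-node definition of $\text{Prior}(\cdot)$ together with the minimality of $A'$'s priority among flow-carrying arcs, and (ii) ensuring that such a used route $r'$ exists; both follow from the definitions and Assumption~\ref{ass:route}. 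Everything else is a direct adaptation.
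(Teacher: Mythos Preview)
Your proposal is correct and follows the same truncation strategy as the paper: bound the variables, obtain a solution of the augmented NCP on a compact box, and then verify that the upper-bound multipliers vanish. Your treatments of $\bm{\kappa}^*=\vzero$ and $\bm{\rho}^*=\vzero$ are essentially identical to the paper's.

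The one genuine difference is in the argument for the arc-level multiplier (your $\bm{\nu}^*$, the paper's $\bm{\theta}^*$). The paper proceeds \emph{inductively within each departure event}: it first shows $\theta_{A_1}=0$ for the highest-priority incoming arc $A_1$ (a hypothetical $\theta_{A_1}>0$ forces $v_{A_1}^*=e_2$, which kills all routes through $A_1$, making $x_{A_1}^*=0$ and hence $q_{A_1}=u>0$), and then moves down the priority ladder, using $\theta_{A_{k-1}}=0$ to conclude $q_{A_{k-1}}\ge 0$ and thereby handle $A_k$. You instead transplant the single-contradiction argument from Proposition~\ref{prop:route solution exists} to the arc level: from $\nu_A^*>0$ you extract an arc $A'$ with positive flow and the \emph{same} negative available capacity, deduce $v_{A'}^*=e_2$, pick any used route through $A'$, and contradict $\mu^*<e_2$ directly. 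Your route is shorter and avoids the induction; the paper's route has the minor advantage of never needing to name a flow-carrying arc or route, only comparing capacities at successive priority levels. Two small remarks on your write-up: the existence of $A'$ should be justified (it follows because $q_A<0$ forces $\sum_{A''\in\text{Prior}(A)}x_{A''}^*>0$), and point~(ii) about the existence of a used route $r'$ through $A'$ follows from the definition of $x_{A'}^*$ as a sum of route flows, not from Assumption~\ref{ass:route}.
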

\begin{proof}
    See Appendix \ref{sec:appendix-proof-solution-exists}.
\end{proof}

\begin{remark}
    The introduced variable $\vv$ can be interpreted as the anxiety cost incurred by passengers when they traverse activities (arcs) for which, due to their lower loading priority, boarding cannot be guaranteed. As illustrated in Figure~\ref{fig:explain-multiplier}, 
    \begin{figure}[htbp]
    \centering
    \includegraphics[width=0.45\textwidth]{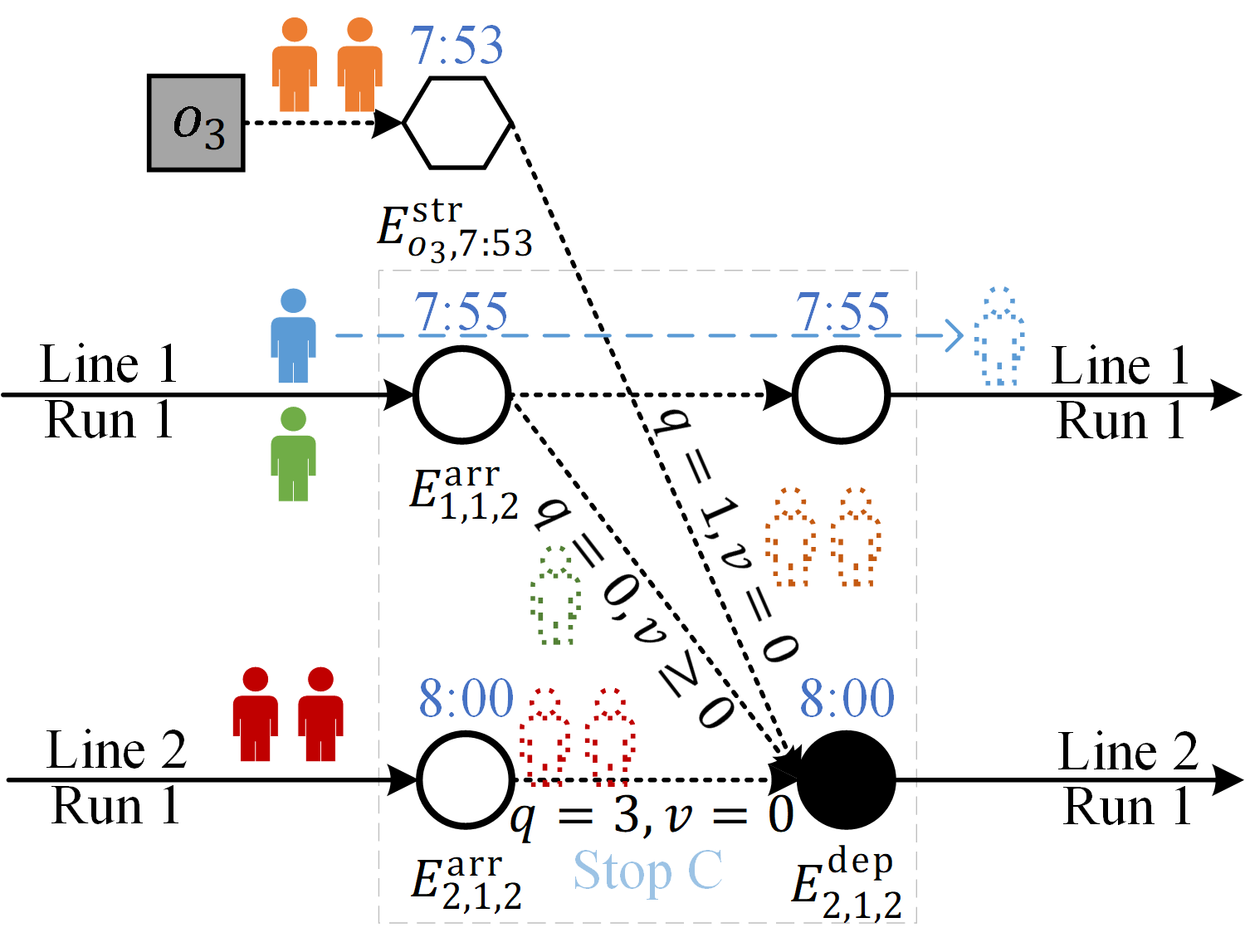}
    \caption{Illustration of the introduced variable $\vv$.}
    \label{fig:explain-multiplier}
\end{figure}
the available capacity of the dwelling arc $(E_{2,1,2}^{\text{arr}}, E_{2,1,2}^{\text{dep}})$ and the boarding arc $(E_{o_3,7:53}^{\text{str}}, E_{2,1,2}^{\text{dep}})$ are positive, meaning that passengers choosing these arcs need not worry about failing to board; hence the associated anxiety cost is zero. In contrast, the available capacities of the lowest-priority transfer arc $(E_{1,1,2}^{\text{arr}}, E_{2,1,2}^{\text{dep}})$ are zero, implying that passengers choosing this arc are not assured of boarding. Consequently, they may incur a positive anxiety cost. For every OD pair $w\in \gW$ and every $b\in B_w$, if we interpret the sum of the travel cost $c_{w, b}^r$ and the anxiety cost $\sum_{A\in \gA_{w,b,r}^{\text{priority}}}v_A$ as a generalized cost of route $r$, then the equilibrium obtained from NCP~\eqref{prob:cp} can be viewed as a state in which all passengers of the same OD pair and class experience the same generalized cost.
    
\end{remark}

Compared with the equivalent UEIP formulation \eqref{prob:cp-route} introduced in the previous section, the refined model \eqref{prob:cp} not only excludes behaviorally unrealistic solutions, but also offers computational advantages. In the equivalent formulation, the priority-enforcing constraint \eqref{equ:cp-route3} is imposed at the \emph{route} level, leading to strongly coupled constraints that are difficult to decompose. As a result, all feasible routes must be enumerated in advance as input to the solution algorithm, which is computationally expensive. In contrast, in the refined model, the priority-enforcing constraint \eqref{equ:cp3} is imposed at the \emph{arc} level, yielding a decomposable network-flow structure that is amenable to column-generation algorithms, which dynamically generate routes during the solution process.

\subsection{Solution algorithms}

\label{sec:algorithms}
To obtain the R-UEIP solution, we propose an algorithm that reformulates NCP~\eqref{prob:cp} into an MPEC.
A straightforward way to handle such complementarity systems is to apply a merit or smoothing function to all complementarity conditions and solve the resulting unconstrained optimization.
However, the form of conditions~\eqref{equ:cp1} and~\eqref{equ:cp2} resembles a classical static traffic assignment problem \citep{beckmann1956studies}.
This structural similarity allows the flows to be computed efficiently using existing assignment algorithms such as TAPAS~\citep{bar-gera_traffic_2010} and iGP~\citep{xie_greedy_2018}, and the column-generation step for dynamically constructing the route set can likewise be delegated to these algorithms.
To preserve this computational advantage, we keep \eqref{equ:cp1} and \eqref{equ:cp2} in their original form as the lower-level equilibrium constraints, and apply a merit function only to condition~\eqref{equ:cp3}, which forms the upper-level optimization.

We use the Fischer-Burmeister function \citep{fischer1992special} to encode the complementarity condition~\eqref{equ:cp3}. 
For a scalar pair $(a,b)$, the FB function is defined as
$
    \varphi(a,b) = \sqrt{a^2 + b^2} - (a + b),
$
which satisfies $\varphi(a,b)=0$ if and only if $0 \le a \perp b \ge 0$. 
Applying this to the introduced variable $\vv$ and the available capacity $\vq(\vx)$, we define 
$
    \bm{\varphi}(\vv,\vx) = \big( \varphi(v_A, q_A(\vx)) \big)_{A \in \gA^{\text{priority}}},
$
and the associated merit function
\begin{align}
    \Psi(\vv,\vx)
    = \|\bm{\varphi}(\vv,\vx)\|^2
    = \sum_{A \in \gA^{\text{priority}}} \varphi(v_A, q_A(\vx))^2. \notag
\end{align}
Then $\vv^*$ satisfies~\eqref{equ:cp3} if and only if it minimizes $\Psi(\vv,\vx)$, where the available capacities $\vq$ depend on $\vv$ indirectly through the arc flow $\vx$.

The mapping from $\vv$ to $\vx$ is defined by~\eqref{equ:cp1}–\eqref{equ:cp2}. 
For every OD pair $w\in \gW$, class $b\in \gB_w$, and route $r\in \gR_w$, let 
\begin{align}
    \hat{c}_{w, b}^r = c_{w, b}^r + \sum_{A\in \gA_{w,b,r}^{\text{priority}}}v_A \notag
\end{align}
be the generalized route cost, and collect these into $\hat{\vc}$. 
Using $\hat{\vc}$, conditions~\eqref{equ:cp1}–\eqref{equ:cp2} can be rewritten as the equivalent equilibrium system
\begin{subequations}
\label{prob:reformulate-lower}
    \begin{align}
        &0\leq \vf \ \bot\  \hat{\vc}(\vx,\vv)-\Lambda\vmu \geq 0, \label{equ:reformulate lower1} \\
        &0\leq \vmu \ \bot\  \Lambda^T\vf-\vd \geq 0. \label{equ:reformulate lower2} 
    \end{align}
\end{subequations}
This lower-level problem has the same structure as a classical static traffic assignment model, with the usual route travel costs replaced by the generalized costs $\hat{\vc}$. 

Combining the merit function $\Psi(\vv,\vx)$ with the equilibrium conditions~\eqref{prob:reformulate-lower}, NCP~\eqref{prob:cp} can be reformulated as the following MPEC:
\begin{align}
    \min_{\vv,\vx,\vmu} \ \Psi(\vv,\vx) 
    \quad \text{subject to} \quad \vv\geq \bm{0} \ \text{and} \  \text{Condition}~ \eqref{prob:reformulate-lower}. 
 \label{prob:reformulate}
\end{align}
Removing the non-negativity constraint on $\vv$ does not affect the theoretical equivalence to NCP~\eqref{prob:cp}. 
However, allowing negative components of $\vv$ would lead to negative generalized costs, violating standard assumptions in traffic assignment and complicating procedures such as shortest-path search. 
Therefore, the constraint $\vv\geq \bm{0}$ is retained in the reformulated model.

When the lower-level equilibrium problem \eqref{prob:reformulate-lower} admits a unique arc flow solution, which is typically ensured when the generalized costs are strictly increasing functions of the arc flows \citep{patriksson2015traffic}, the mapping from $\vv$ to $\vx$ is well defined. In this case, MPEC~\eqref{prob:reformulate} can be solved by an implicit approach. Specifically, we substitute the implicit function $\vx(\vv)$ into the objective of the MPEC, thereby eliminating the equilibrium constraints, and then apply a projected gradient algorithm to the resulting problem. The projection step only needs to enforce the nonnegativity constraint on $\vv$, and can therefore be carried out very efficiently. 
In contrast, when the lower-level equilibrium problem \eqref{prob:reformulate-lower} does not have a unique arc flow solution, we reformulate the equilibrium conditions in MPEC~\eqref{prob:reformulate} as a set of inequality constraints and solve the resulting model using a general nonlinear programming algorithm, such as sequential quadratic programming (SQP). The detailed algorithmic procedures for these two solution approaches --- an implicit method and a nonlinear-programming-based method --- are provided in Appendix~\ref{sec:appendix algorithm}.

\section{Numerical Examples}
\label{sec:experiments}

This section presents three numerical experiments for solving NCP \eqref{prob:cp} using the proposed method. The first experiment, based on a case study of student commuting at the University of Hong Kong, illustrates the method's ability to accurately capture priority rules and compares its results with those of an explicit priority model. The second experiment, using the benchmark network introduced by \citet{nguyen_modeling_2001}, demonstrates that our method can accurately recover the R-UEIP. Finally, the third experiment evaluates the computational efficiency of the proposed algorithm on the Sioux Falls transit network. All results presented in this section are obtained on a Windows 10 (64-bit) PC equipped with an AMD Ryzen 7 4800H 2.90~GHz CPU and 16~GB of RAM.

\medskip
\noindent
\textbf{Cost structure.} We first specify the cost functions used in the numerical experiments. All cost functions are taken from the existing literature, and the costs associated with each type of arc are assumed to be identical for all passengers, regardless of their OD pair or class.  Specifically, for each boarding arc $A = (E_{o,t}^{\text{str}}, E_{l,j,i}^{\text{dep}})$, the cost is 
$$
\eta_1 \cdot (\tau_{l,j,i}^{\text{dep}} - t),
$$
which is the time duration multiplied by $\eta_1$, the value of time. Similarly, for each transfer arc $A = (E_{l,j,i}^{\text{arr}}, E_{l',j',i'}^{\text{dep}})$, the cost is
$$
\eta_1 \cdot (\tau_{l',j',i'}^{\text{dep}} - \tau_{l,j,i}^{\text{arr}}).
$$
For riding arcs $A = (E_{l,j,i}^{\text{dep}}, E_{l,j,i+1}^{\text{arr}})$, the cost is
$$
\eta_1 \cdot (\tau_{l,j,i+1}^{\text{arr}} - \tau_{l,j,i}^{\text{dep}}) + \pi_l^{i,i+1} + \eta_2 \cdot \max\left\{0, \frac{x_{A}}{u_{A}}-\rho\right\},
$$
where $\pi_l^{i,i+1}$ is the fare or penalty associated with traveling from stop $s_l^i$ to $s_l^{i+1}$ on line $l$, $\eta_2 > 0$ is the marginal disutility of crowding and $\rho \in [0,1]$ is the crowding perception threshold \citep{hamdouch_schedule-based_2008}. Here, the congestion term becomes positive only when the load ratio $x_A / u_A$ exceeds the threshold~$\rho$.
For the dwelling arcs $A = (E_{l,j,i}^{\text{arr}}, E_{l,j,i}^{\text{dep}})$, the cost is defined similarly as
$$
\eta_1 \cdot (\tau_{l,j,i}^{\text{dep}} - \tau_{l,j,i}^{\text{arr}}) + \eta_2 \cdot \max\left\{0, \frac{x_{A}}{u_{A}}-\rho\right\}.
$$
The costs on the other two types of arcs --- access arcs $\gA^{\text{access}}$ and egress arcs $\gA^{\text{egress}}$ --- depend on the OD pair $w \in \gW$ and class $b \in \gB_w$.  For each egress arc $A = (E_{l,j,i}^{\text{arr}}, E_{d}^{\text{dst}})$, the cost consists of the walking time from stop $s_l^i$ to destination $d$, together with early-arrival and late-arrival penalties. Formally, the cost is
$$
t_{s_l^i, d}^{\text{walk}} + 
\begin{cases}
\eta_3 \cdot (\tau_{w, b}^{-} - (\tau_{l,j,i}^{\text{arr}} + t_{s_l^i, d}^{\text{walk}})), \quad &\text{if}~\tau_{l,j,i}^{\text{arr}} + t_{s_l^i, d}^{\text{walk}} < \tau_{w, b}^{-}, \\
0, \quad &\text{if}~\tau_{l,j,i}^{\text{arr}} + t_{s_l^i, d}^{\text{walk}} \in [\tau_{w, b}^{-}, \tau_{w, b}^{+}], \\
\eta_4 \cdot (\tau_{l,j,i}^{\text{arr}} + t_{s_l^i, d}^{\text{walk}} - \tau_{w, b}^{+}), \quad &\text{if}~\tau_{l,j,i}^{\text{arr}} + t_{s_l^i, d}^{\text{walk}} > \tau_{w, b}^{+},
\end{cases}
$$
where $\eta_3 > 0$ and $\eta_4 > 0$ are coefficients capturing passengers' aversion to early and late arrival, respectively. Eventually, for access arc $A = (E_{o}^{\text{ogn}}, E_{o,t}^{\text{str}})$, the cost is 
$$
\eta_5 \cdot \max \{\tau_{w, b}^{\text{free}} - t, 0 \},
$$
where $\tau_{w, b}^{\mathrm{free}}$ denotes the \emph{free-flow latest departure time} \citep{nguyen_modeling_2001}, i.e., the latest time at which a passenger of OD pair $w$ and class $b$ can depart and still arrive before $\tau_{w, b}^{+}$ when no capacity-induced delays occur. Passengers departing earlier than this time incur a penalty. The route cost $c_{w,b}^r(\vf)$ is the total arc travel cost experienced by passengers of OD pair $w$, class $b$, and follow route $r \in \gR_{w,b}$.


\subsection{Insight from a student commuting case in HKU}
This section presents a case study of morning-peak commuting by students at the University of Hong Kong to demonstrate that the proposed transit assignment model can accurately capture passengers' choice and queuing behaviors. As illustrated in Figure~\ref{fig:commuting net}, passengers depart from the Jockey Club Student Village~IV and travel to the main campus, with a desired arrival time interval of [9:00, 9:20]; being late would result in missing classes and incurring a substantial penalty. Two categories of routes are available. 
\begin{figure}[htbp]
	\centering
	\includegraphics[width=0.95\textwidth]{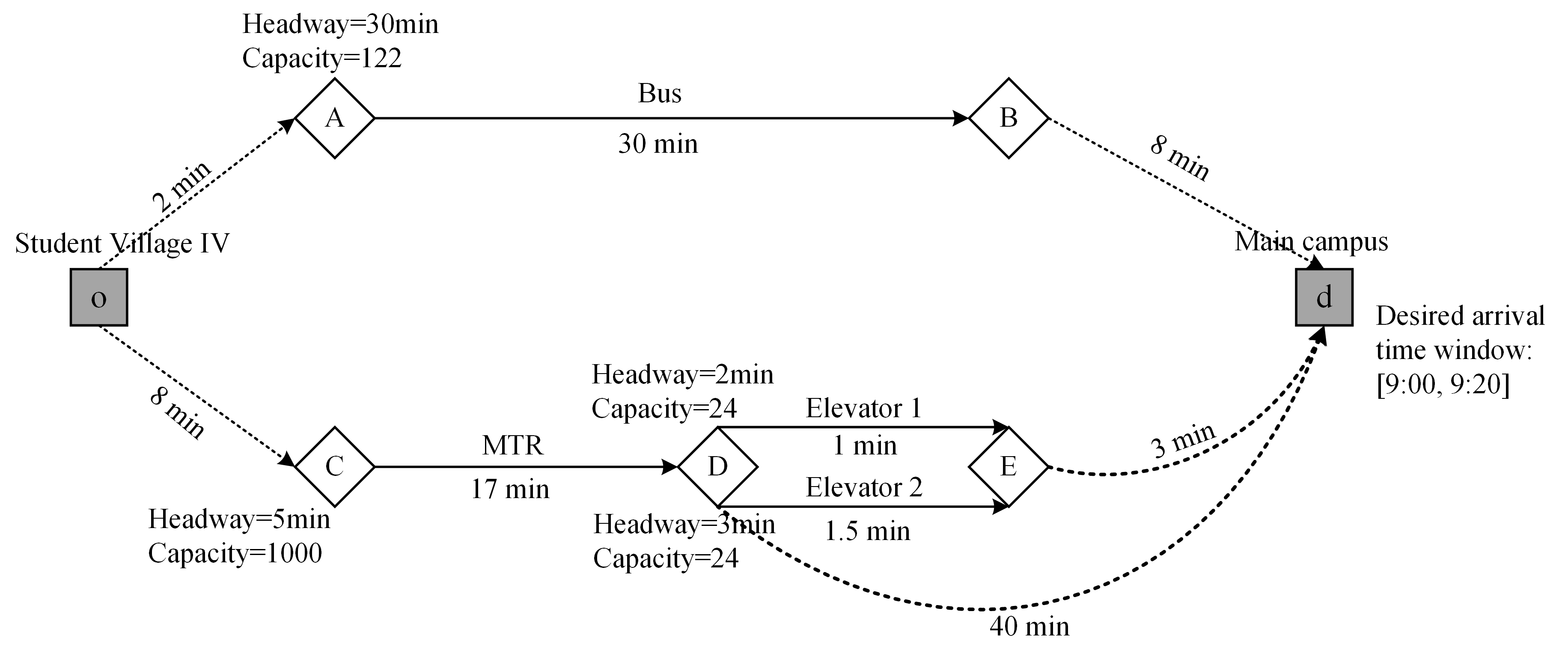}
	\caption{The example network of student commuting cases.}
	\label{fig:commuting net}
\end{figure}
One option is to take Bus~No.~71, which has a relatively long in-vehicle riding time of about 30~minutes. The other option is to take the \textit{Mass Transit Railway} (MTR), which requires only about 17~minutes of in-vehicle riding time but depends on elevators with limited capacity to access the campus from the station. In practice, passengers often face elevator queues lasting from ten to several tens of minutes. This situation gives rise to two clear trade-offs: (i) whether to take the slower but more reliable bus or the faster MTR at the risk of long elevator queues, and (ii) for those choosing the MTR, whether to depart earlier to avoid being late or depart later to reduce waiting time in the queue.
To illustrate these two trade-offs, Section~\ref{sec:commuting case tradeoff} analyzes the assignment results produced by the proposed model, and Section~\ref{sec:commuting case comparison} compares these results with those obtained from the explicit priority model (see Appendix~\ref{sec:appendix dynamic}).

During peak hours, the elevators operate at full capacity with fixed stop patterns and can thus be modeled as scheduled transit lines. Two elevators are considered in this case: Elevator~1 (e1) runs directly from the MTR station to the hilltop campus, with a travel time of 1~minute and a headway of 2~minutes; Elevator~2 (e2) provides a non-direct service with a travel time of 1.5~minutes and a headway of 3~minutes. Both elevators have a capacity of 24 passengers. Alternatively, passengers may choose to walk from the MTR station to the campus, which takes 25 minutes with an additional fatigue-equivalent penalty of 15 minutes, resulting in an effective travel time of 40 minutes.
Bus~No.~71 operates with an Enviro500MMC~11.3~m vehicle of 122-passenger capacity, while the MTR line is served by Metro-Cammell~EMU trains with a total capacity of 2504~passengers. After accounting for a background load of 1504 passengers, the feasible capacity of MTR is set to 1000. After applying the student concession, the fares for the MTR and bus are 4.4 HKD and 6.7 HKD, respectively. The parameters of the cost function are set as $\eta_1=0.55$, $\eta_2=5.0$, $\eta_3=0.2$, $\eta_4=10.0$, $\eta_5=1.0$, and $\rho=0.8$. This case is solved using the implicit method described in Appendix \ref{sec:implicit-algorithm}, with the algorithm implemented in C++.

\subsubsection{Analysis of students' travel choices under the proposed model}
\label{sec:commuting case tradeoff}
Table~\ref{tab:commuting case flow result} presents the assignment results between the bus and MTR under different total demand levels. Because the bus operates infrequently (one departure every 30~minutes), only the 8:40~a.m. trip is relevant to passengers in this scenario. The minimum travel cost of this bus trip (excluding crowding penalties and extra waiting) is~37.70. When the total demand is~200, 300, or~400, the maximum travel cost among MTR passengers is~30.35, 32.55, and~37.13, respectively, all lower than~37.70. Hence, no passengers chose the bus route. 
As the demand increases to~500, some MTR passengers must start their trips earlier to avoid being late. At this point, the maximum MTR travel cost nearly equals the bus cost, resulting in~93.39~passengers shifting to the bus.
When the demand rises to~600 and~700, the maximum MTR travel cost increases further, the bus reaches full capacity (bus flow = 122.00), and an \textit{early-departure queuing} phenomenon emerges: to secure higher boarding priority for the limited bus capacity, passengers arrive at stop~A earlier than necessary. The extent of this departure advancement depends on the perceived benefit of switching --- specifically, the difference between the maximum MTR travel cost and the bus travel cost without early departure. For total demand levels of~600 and~700, the departure advancement is~2~minutes and~5~minutes, respectively, leading to approximate cost equalization between the two modes (with minor discrepancies caused by the discrete nature of timetables and departure-time choices).

\begin{table}[htbp]
  \centering
  \caption{Assignment results between bus and MTR under different total demand levels.}
  \footnotesize %
    \begin{tabular}{ccccccc}
    \toprule
    Demand & 200   & 300   & 400   & 500   & 600   & 700 \\
    \midrule
    Max. MTR travel cost & 30.35 & 32.55 & 37.13 & 37.72 & 41.85 & 47.56 \\
    Bus travel cost & 37.70  & 37.70  & 37.70  & 37.70  & 41.85 & 46.45 \\
    Bus flow & 0.00     & 0.00     & 0.00     & 93.39 & 122.00   & 122.00 \\
    Departure advancement & 0 min   & 0 min   & 0 min    & 0 min   & 2 min     & 5 min \\
    \bottomrule
    \end{tabular}%
  \label{tab:commuting case flow result}%
\end{table}%

We further analyze the trade-off between late arrival and early trip start among passengers choosing the MTR. Whether a passenger arrives late is determined by the departure time of the elevator they take: if the elevator departs later than 9:16~a.m., the passenger will be late. Figure~\ref{fig:commuting case flow} illustrates the distribution of passengers across different MTR trains and elevator runs under various demand levels. The horizontal axis represents the elevator departure time, the left vertical axis denotes the arrival time of MTR trains at station~D, and the right vertical axis indicates the elevator run taken. Time is expressed in numerical format (e.g., 9:16~a.m. is represented as 76).
\begin{figure}[htbp]
	\centering
	\subfigure[Demand = 200]{
	{\includegraphics[width=0.48\textwidth]{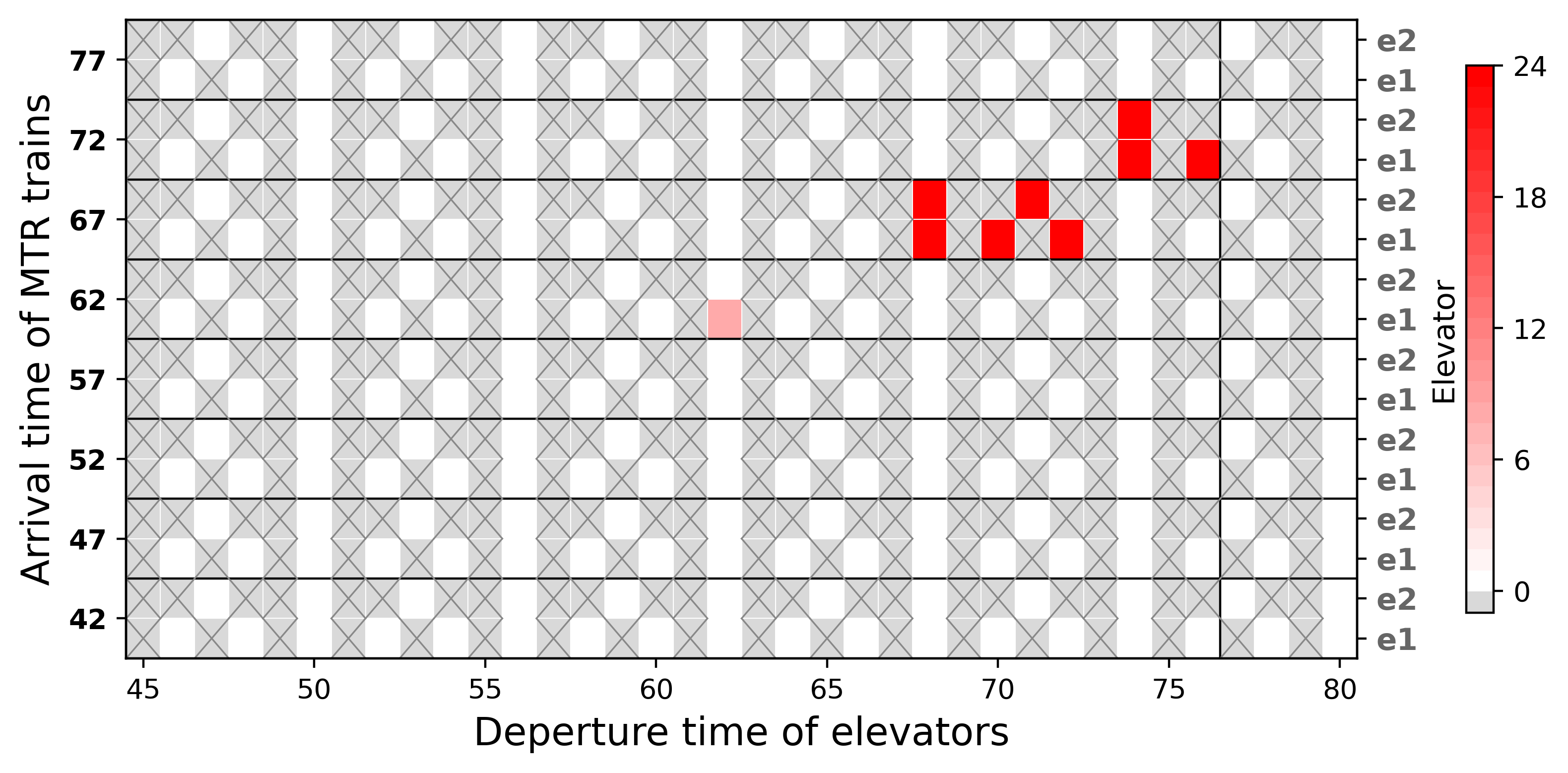}}\label{fig:commuting case flow1}
	}
	\subfigure[Demand = 300]{
	\includegraphics[width=0.48\textwidth]{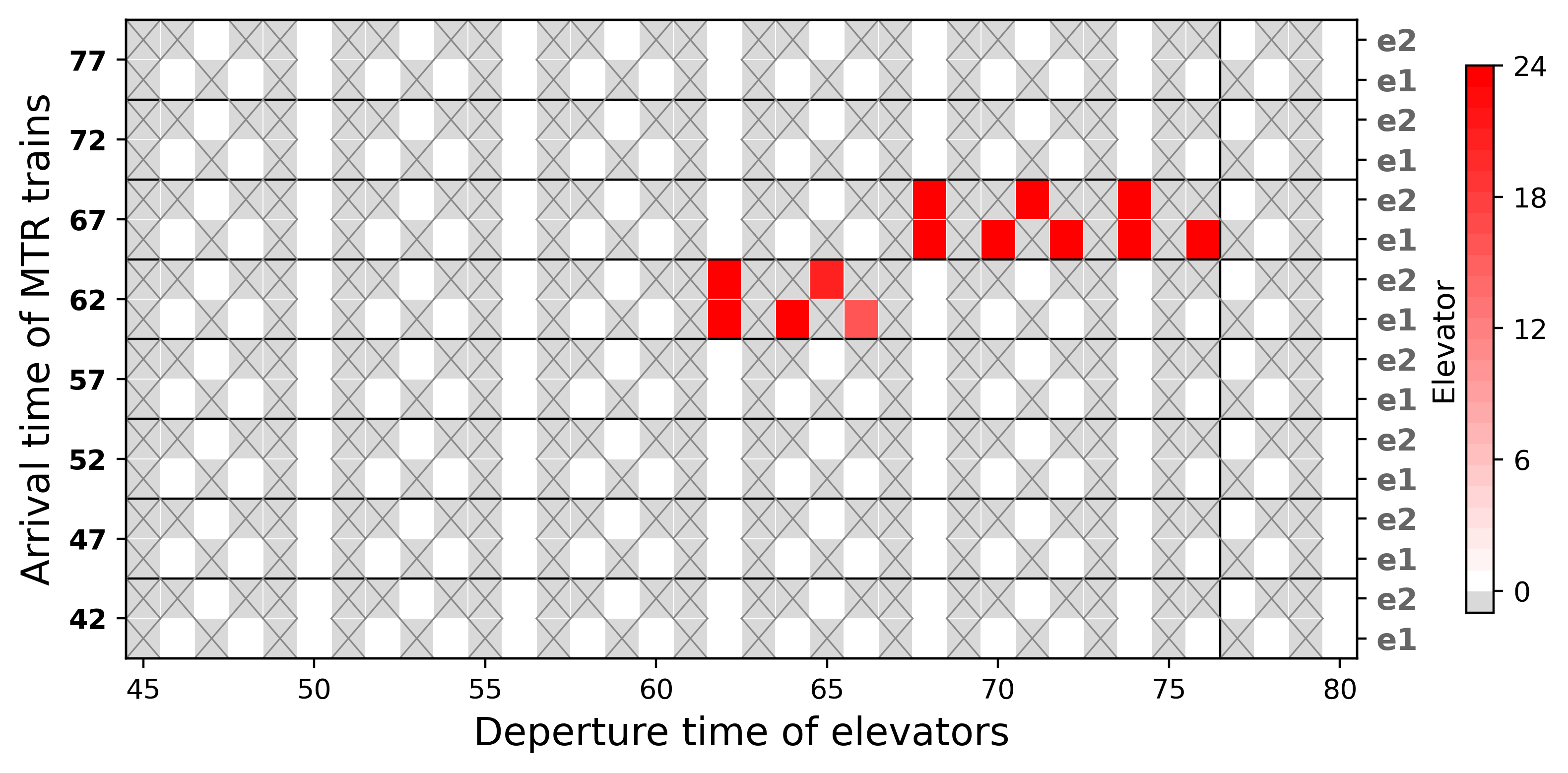}\label{fig:commuting case flow2}
	}
    \subfigure[Demand = 400]{
	\includegraphics[width=0.48\textwidth]{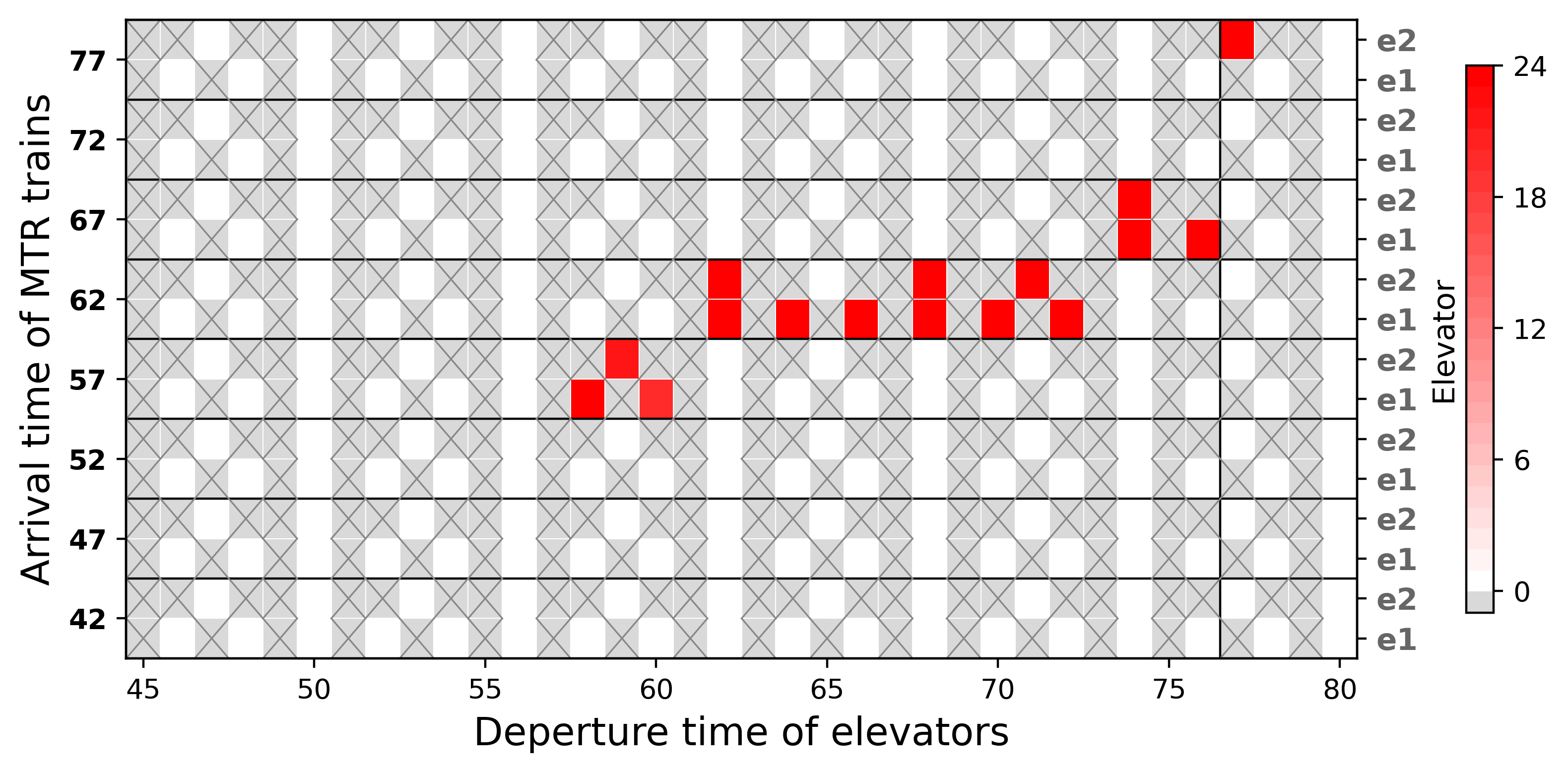}\label{fig:commuting case flow3}
	}
    \subfigure[Demand = 500]{
	\includegraphics[width=0.48\textwidth]{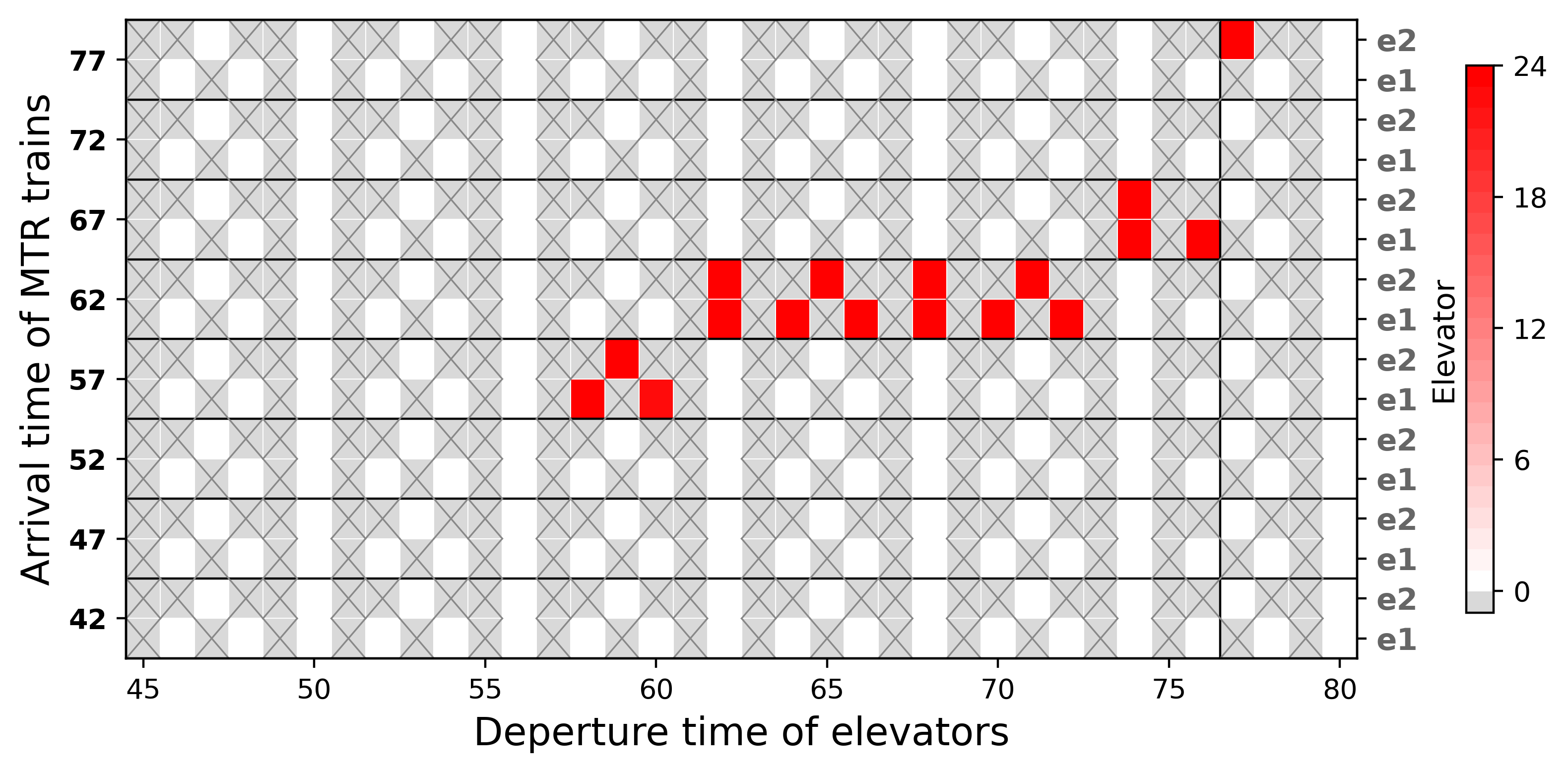}\label{fig:commuting case flow4}
	}
    \subfigure[Demand = 600]{
	\includegraphics[width=0.48\textwidth]{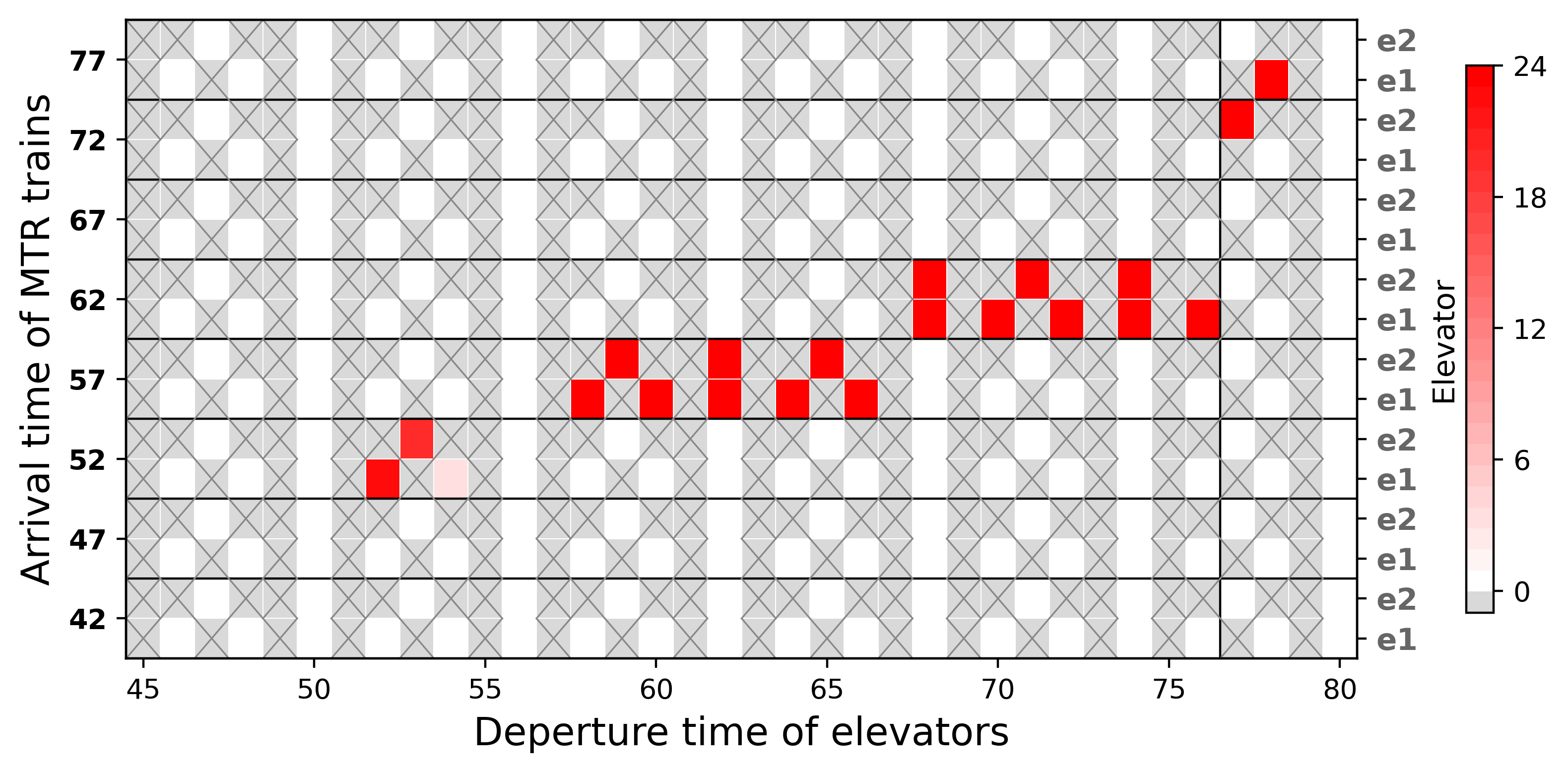}\label{fig:commuting case flow5}
	}
    \subfigure[Demand = 700]{
	\includegraphics[width=0.48\textwidth]{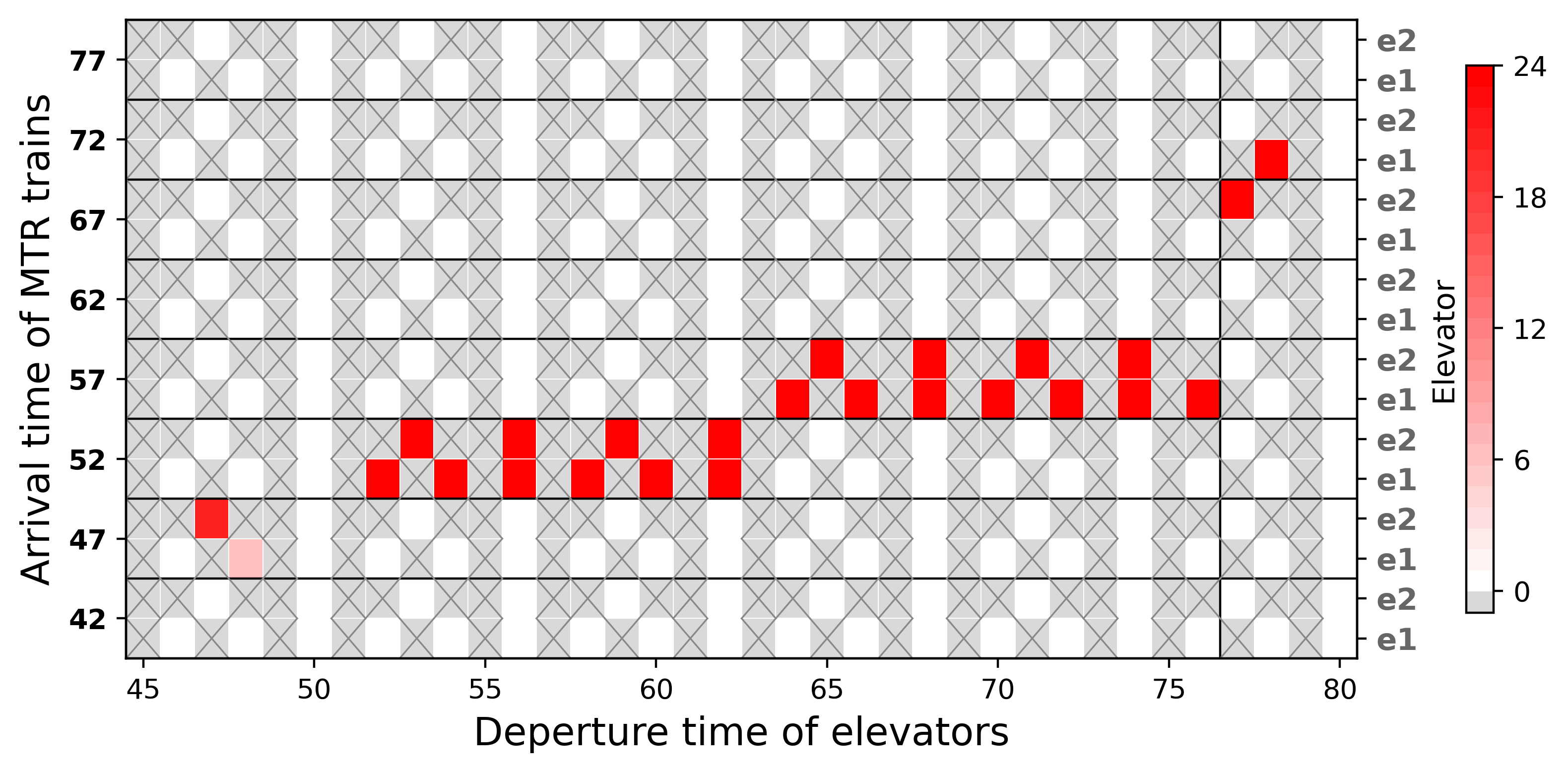}\label{fig:commuting case flow6}
	}
	\caption{Passenger distribution across MTR and elevator runs under different total demand levels. Gray crossed cells denote no elevator departure, while the shade of red indicates the passenger count on a specific MTR–elevator combination.}
	\label{fig:commuting case flow}
\end{figure}

When the total demand is~200 or~300, no passengers arrive late (Figures~\ref{fig:commuting case flow1} and~\ref{fig:commuting case flow2}). This is because the minimum possible travel cost for a late-arriving passenger is~35.63, which exceeds the maximum MTR travel cost under these two demand levels. As demand continues to increase, newly added passengers must start their trips earlier to avoid arriving late. When the demand reaches~400, the maximum travel cost among on-time passengers rises to~37.13 --- higher than~35.63 --- indicating that some passengers now prefer to start later and accept a late-arrival penalty. As shown in Figure~\ref{fig:commuting case flow3}, 24~passengers make this choice. With further increases in total demand to~500, 600, and~700, the severity of lateness grows, and the number of late passengers increases to~24, 48, and~48, respectively (Figures~\ref{fig:commuting case flow4} –- \ref{fig:commuting case flow6}).

In addition to increasing the number of late passengers, higher demand also leads to longer elevator waiting times. For instance, among passengers taking the MTR train arriving at 9:02 a.m. (time~62), no queue forms when the demand is~200. When the demand rises to~300 and~400, the maximum elevator waiting time extends to~4~and~10~minutes, respectively. At a demand of~600, passengers experience at least~6~minutes and up to~14~minutes of queuing before boarding the elevator. Such severe congestion patterns are highly consistent with real-world observations, demonstrating the capability of the proposed model to realistically capture queuing phenomena.

\subsubsection{Comparison with the explicit priority model}
\label{sec:commuting case comparison}
For comparison, we solved an explicit priority model for this case study using an MSA-based algorithm implemented in C++ (see Appendix~\ref{sec:appendix dynamic} for the model and algorithmic details). When the total demand reaches 600 or 700, the relative gap of the explicit priority model fails to converge below $10^{-2}$, rendering the results incomparable. Therefore, in this section, we set the total demand to 500, under which the relative gap converges below $10^{-5}$, indicating that a reasonably accurate equilibrium has been achieved. 

Overall, both the proposed implicit priority model and the explicit priority model yield equilibrium states that comply with the passenger priority rule, yet their flow distribution patterns differ. In the explicit priority model results, 72.21 passengers choose to travel by bus, whereas in the proposed model, as shown in Table~\ref{tab:commuting case flow result}, the bus flow reaches 93.39.
The passenger distributions across MTR trains and the resulting elevator queues are illustrated in Figure~\ref{fig:commuting case compare}. In both results, the elevator queues follow the FCFS principle --- that is, passengers arriving on earlier MTR trains always board elevators before those from later trains (as visually reflected in Figure~\ref{fig:commuting case compare}, where red blocks representing earlier arrivals appear below or to the left of those for later arrivals on the same elevator).

\begin{figure}[htbp]
	\centering
    \subfigure[The proposed implicit priority model]{
	\includegraphics[width=0.48\textwidth]{Figures/elevatorqueue500.png}\label{fig:commuting case compare1}
	}
    \subfigure[The explicit priority model]{
	\includegraphics[width=0.48\textwidth]{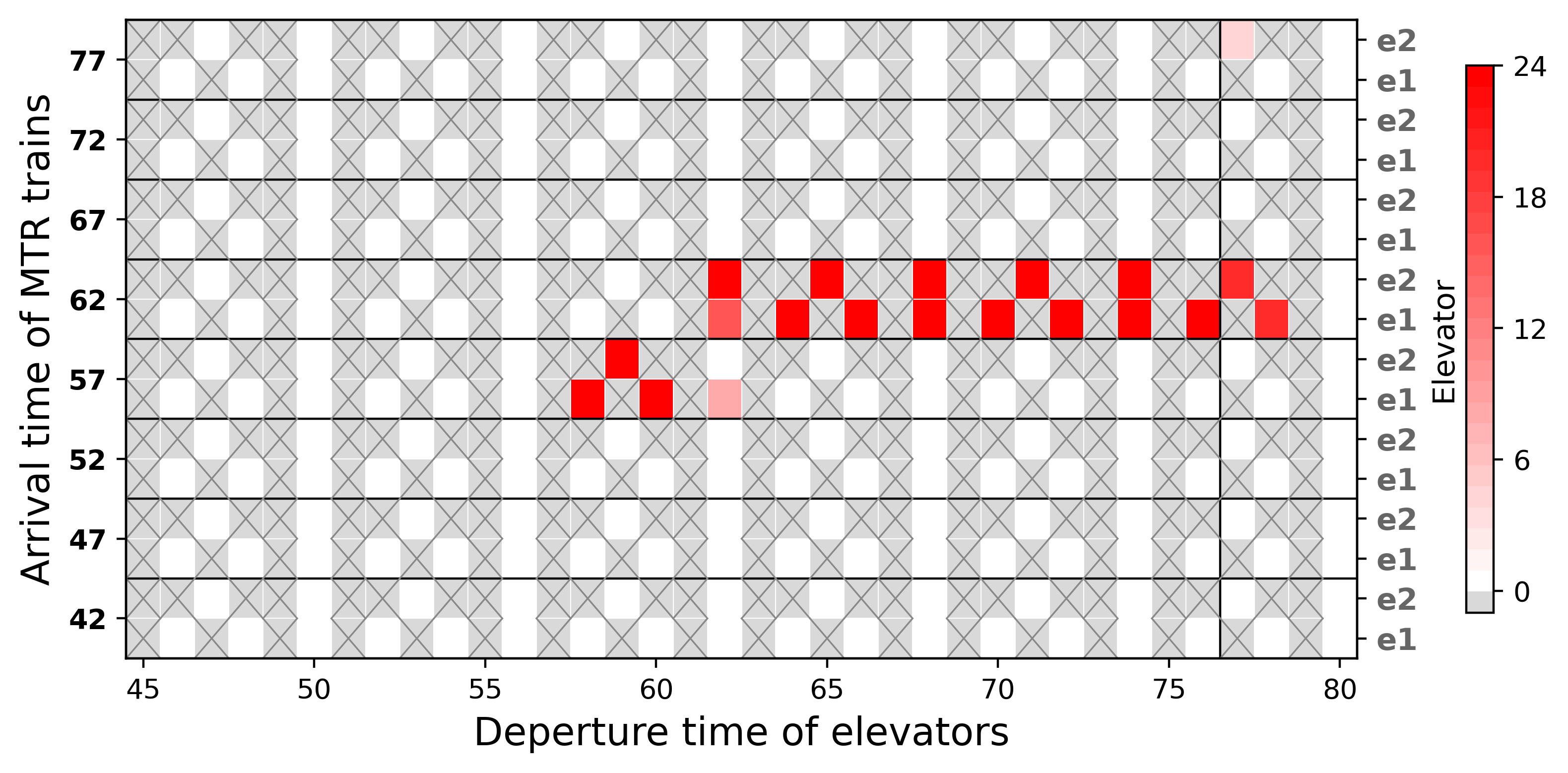}\label{fig:commuting case compare2}
	}
	\caption{Passenger distributions across MTR trains and elevator queues under the proposed implicit priority model and the explicit priority model.}
	\label{fig:commuting case compare}
\end{figure}

However, the explicit priority model results exhibit one unrealistic phenomenon. Consider passengers arriving at 9:02~a.m. (time~62) and 9:17~a.m. (time~77) and the two elevators departing at 9:17~a.m. (time~77) and 9:18~a.m. (time~78). Some passengers who arrive earlier leave a capacity of~4.61~on elevator~“e2” (departing at~77) for later arrivals, but instead board elevator~“e1” (departing at~78), thereby incurring higher travel costs --- an irrational choice. 
The cause lies in the decision mechanism of the explicit priority model, which is based on expected travel costs. Specifically, passengers arriving at 9:02 a.m. are divided into two transit plans that differ only in whether they take elevator “e1” or “e2”. Although it would reduce the actual travel cost for some passengers to switch from e1 (departing at 9:18 a.m.) to “e2” (departing at 9:17 a.m.), such a transfer does not occur because, from the model's perspective, the two transit plans already have identical expected travel costs, and thus are in equilibrium.
This example highlights a key limitation of the explicit priority model, mentioned earlier: it artificially treats passengers who are not intrinsically connected as a single decision-making entity and enforces equilibrium by assuming they make choices collectively based on expected travel costs. In contrast, our proposed model determines route choice based on actual travel costs, leading to greater behavioral realism and flexibility.

\subsection{Verification accuracy in benchmark network}
The benchmark network, illustrated in Figure~\ref{fig:benchmark net}, comprises six stops and four lines, each with a capacity of 20 passengers. Passengers are divided into four OD pairs $(o_1, d_1)$, $(o_1, d_2)$, $(o_2, d_1)$, and $(o_2, d_2)$, all having a desired arrival time window of [8:45, 9:00]. The demand for each OD is 10. The cost parameters are set as $\eta_1=\eta_5=1.00$ and $\eta_2=\eta_3=\eta_4=0.00$. \citet{nguyen_modeling_2001} obtained a route flow vector $\bm{f}^0$ by solving an approximate problem of the UEIP, as reported in the fourth column of Table \ref{tab:route flow benchmark net}. This solution does not satisfy the UEIP and the R-UEIP conditions, nor does it respect the vehicle capacity constraints. We adopt it as the initial route flow vector. The initial $v$ for all incoming arcs to the two saturated arcs, $(E_{2,1,2}^{\text{dep}}, E_{2,1,3}^{\text{arr}})$ and $(E_{1,1,2}^{\text{dep}}, E_{1,1,3}^{\text{arr}})$, is set to 5.00, while the initial $v$ for all other arcs is set to 0.00. The initial $\mu$ is $\mu_{(o_1, d_1)}^0=65.00, \mu_{(o_1, d_2)}^0=75.00, \mu_{(o_2, d_1)}^0=60.00, \mu_{(o_2, d_2)}^0=75.00$. Here, we solve this example using a nonlinear-programming-based approach described in Appendix \ref{sec:nonlinear-algorithm}, specifically the SQP solver of the \textit{fmincon} function in MATLAB.
\begin{figure}[htbp]
	\centering
	\includegraphics[width=\textwidth]{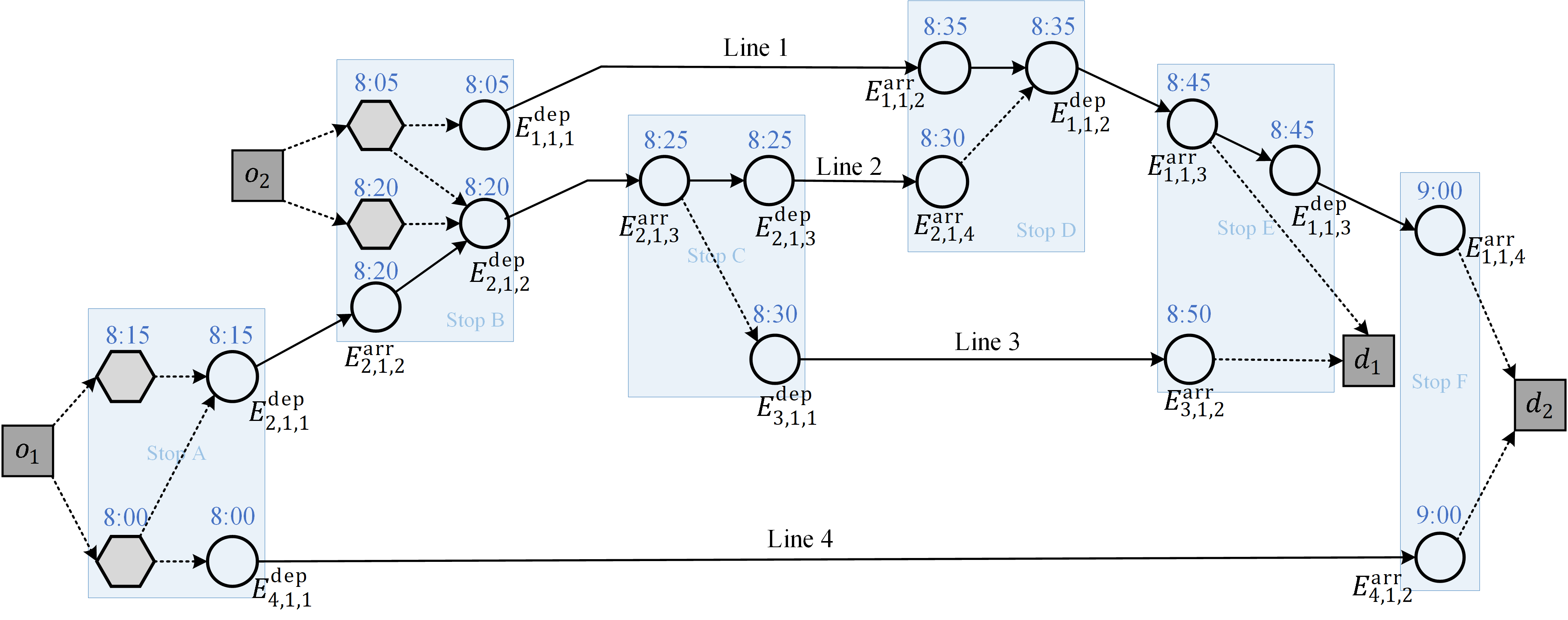}
	\caption{The benchmark network.}
	\label{fig:benchmark net}
\end{figure}

After 55 iterations, the algorithm converges in approximately 0.11 seconds; the final objective value is $\Psi(\vv,\vx)=6.55\times 10^{-6}$. This indicates that a highly accurate global optimum of the MPEC \eqref{prob:reformulate} has been obtained, corresponding to an R-UEIP solution. Columns 5, 6, and 7 of Table \ref{tab:route flow benchmark net} report the optimal route flows $f_{w,b}^{r*}$, the total available capacities of the dominant routes $\hat{Q}_{w,b}^r$, and the travel costs $c_{w,b}^r$, respectively. It can be readily verified that all routes with $\hat{Q}_{w,b}^r>0$ are unused, thereby confirming that the solution satisfies the R-UEIP definition (Definition \ref{def:refine uep}).

\begin{table}[htbp]
  \centering
  \caption{Results of route flow in the benchmark network.}
  \footnotesize %
    \begin{tabular}{cclrrrrr}
    \toprule
    OD pairs & Routes & Route description & \multicolumn{1}{c}{$\vf^0$} & \multicolumn{1}{c}{$\vf^*$} & \multicolumn{1}{c}{$\bm{\hat{Q}}$} & \multicolumn{1}{c}{$\vc$} \\
    \midrule
    \multirow{4}[2]{*}{$(o_1, d_1)$} & 1     & 8:15 - Line 2 - Stop D - Line 1 & 0.000 & 0.000  & 0.000  & 30.000  \\
          & 2     & 8:15 - Line 2 - Stop C - Line 3 & 10.000 & 10.000 & 0.000  & 35.000 \\
          & 3     & 8:00 - Line 2 - Stop D - Line 1 & 0.000 & 0.000  & 10.000  & 60.000 \\
          & 4     & 8:00 - Line 2 - Stop C - Line 3 & 0.000 & 0.000  & 0.00  & 65.000 \\
    \midrule
    \multirow{3}[2]{*}{$(o_1, d_2)$} & 5     & 8:15 - Line 2 - Stop D - Line 1 & 6.236 & 5.000  & 0.000  & 45.000 \\
          & 6     & 8:00 - Line 2 - Stop D - Line 1 & 0.000 & 0.000  & 5.000  & 75.000 \\
          & 7     & 8:00 - Line 4 & 3.764 & 5.000  & 0.000 & 75.000 \\
    \midrule
    \multirow{5}[2]{*}{$(o_2, d_1)$} & 8     & 8:20 - Line 2 - Stop D - Line 1 & 0.000 & 0.000  & 0.000  & 25.000 \\
          & 9     & 8:20 - Line 2 - Stop C - Line 3 & 4.764 & 5.000  & 0.000  & 30.000 \\
          & 10    & 8:05 - Line 2 - Stop D - Line 1 & 0.000 & 0.000  & 0.000  & 55.000 \\
          & 11    & 8:05 - Line 2 - Stop C - Line 3 & 0.000 & 0.000  & 5.000  & 60.000 \\
          & 12    & 8:05 - Line 1 & 5.236 & 5.000  & 0.000  & 55.000 \\
    \midrule
    \multirow{3}[2]{*}{$(o_2, d_2)$} & 13    & 8:20 - Line 2 - Stop D - Line 1 & 0.000 & 0.000  & 0.000  & 40.000 \\
          & 14    & 8:05 - Line 2 - Stop D - Line 1 & 0.000 & 0.000  & 0.000  & 70.000 \\
          & 15    & 8:05 - Line 1 & 10.000 & 10.000 & 0.000  & 70.000 \\
    \bottomrule
    \end{tabular}%
  \label{tab:route flow benchmark net}%
\end{table}%

\subsection{Verification computational efficiency in the Sioux Falls network}

The Sioux Falls transit network, adapted from \citet{szeto_transit_2014}, consists of 24 stops, 10 lines, and 16 OD pairs. The demand for each OD is listed in Table~\ref{tab:demand}. Passengers share a desired arrival time window of [60, 90]. As the original network specification provides only the headways of each line, the schedule is constructed by assigning the first departure of each line at time 0 according to the given headways. In total, the 10 lines generate 113 runs, and the resulting expanded event-activity graph comprises 1,525 nodes and 12,951 arcs. The cost function parameters are set as $\eta_1=1.0$, $\eta_2=2.0$, $\eta_3=\eta_4=\eta_5=1.2$, and $\rho=0.0$. In this example, the solution is obtained via the implicit method.

\begin{table}[htbp]
    \centering
    \caption{Demand data of the Sioux Falls transit network.}
    \footnotesize %
    \begin{tabular}{cccccccc}
        \toprule
        OD & Demand & OD & Demand & OD & Demand & OD & Demand \\
        \midrule
        (1,13) & 200 & (2,13) & 100 & (3,13) & 200 & (4,13) & 200 \\
        (1,20) & 100 & (2,20) & 100 & (3,20) & 100 & (4,20) & 100 \\
        (1,21) & 100 & (2,21) & 100 & (3,21) & 100 & (4,21) & 100 \\
        (1,24) & 200 & (2,24) & 100 & (3,24) & 200 & (4,24) & 200 \\
        \bottomrule
    \end{tabular}
    \label{tab:demand}
\end{table}

Table~\ref{tab:num of ite} reports the number of iterations and the CPU time required by Algorithm~\ref{alg:IM} for the merit function $\Psi(\vv,\vx)$ to fall below the predefined thresholds. The initial objective function value was $1.02\times 10^7$. After 3560 iterations (approximately 22.45 minutes), the value dropped below 10, and after 8276 iterations (55.31 minutes), it further decreased to below 1.

\begin{table}[htbp]
    \centering
    \caption{The number of iterations and CPU time required to achieve a predetermined convergence threshold.}
    \footnotesize %
    \begin{tabular*}{\textwidth}{@{}@{\extracolsep{\fill}}cccccccc@{}}
        \toprule
        Threshold & $10^4$ & $10^3$ & $10^2$ & $10$ & $1$\\
        \midrule
        Iterations & 34 & 396 & 1116 & 3560 & 8276\\
        CPU time & 15.22 sec & 2.51 min & 6.80 min & 22.45 min & 55.31 min\\
        \bottomrule
    \end{tabular*}
    \label{tab:num of ite}
\end{table}

To more intuitively illustrate the convergence accuracy, Figure~\ref{fig:q and cost diff} presents the scatter plots of the available capacity $Q_{w,b}^{r',r}$ and the cost difference $c_{w,b}^r - c_{w,b}^{r'}$ for all used routes $r$ and their dominated counterparts $r'$, when $\Psi(\vv,\vx)$ falls below 10 and 1, respectively.
According to the definition of the R-UEIP (Definition~\ref{def:refine uep}), for any route pair $(r, r')$, at least one of these two quantities --- available capacity or cost difference --- must be zero. As shown in Figure~\ref{fig:q and cost diff}, when $\Psi(\vv,\vx) < 10$, nearly all points lie close to either the $x$-axis or the $y$-axis. The most deviated point has an available capacity of 111.77 and a cost difference of 2.69. When $\Psi(\vv,\vx) < 1$, the convergence precision further improves, as all points lie even closer to the coordinate axes. For the most deviated pair, the available capacity is 104.68, and the cost difference is 0.97. These results indicate that the solution has achieved a satisfactory level of convergence accuracy.

\begin{figure}[htbp]
	\centering
    \subfigure[$\Psi<10$]{
	\includegraphics[width=0.48\textwidth]{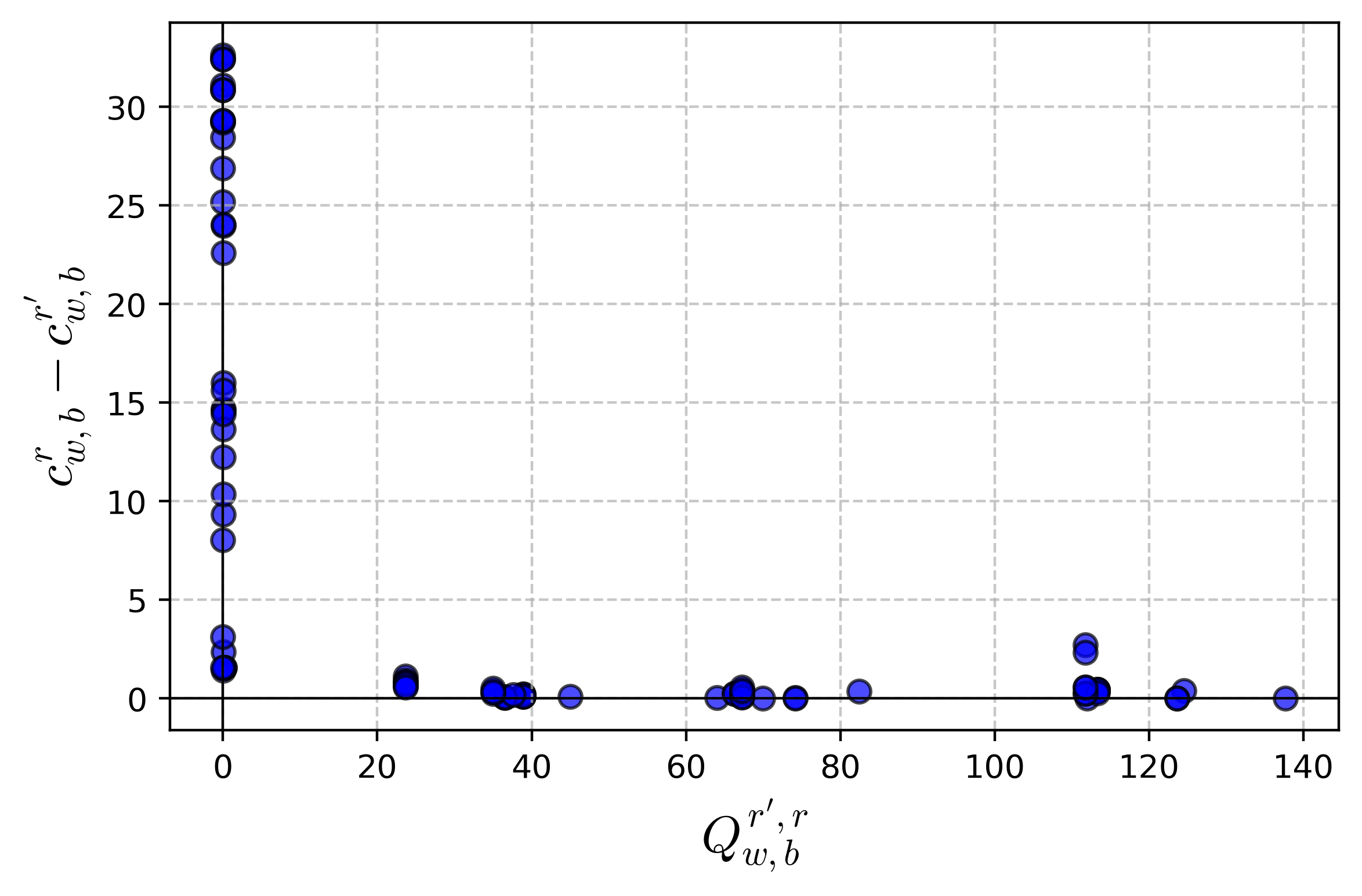}\label{fig:q and cost diff1}
	}
    \subfigure[$\Psi<1$]{
	\includegraphics[width=0.48\textwidth]{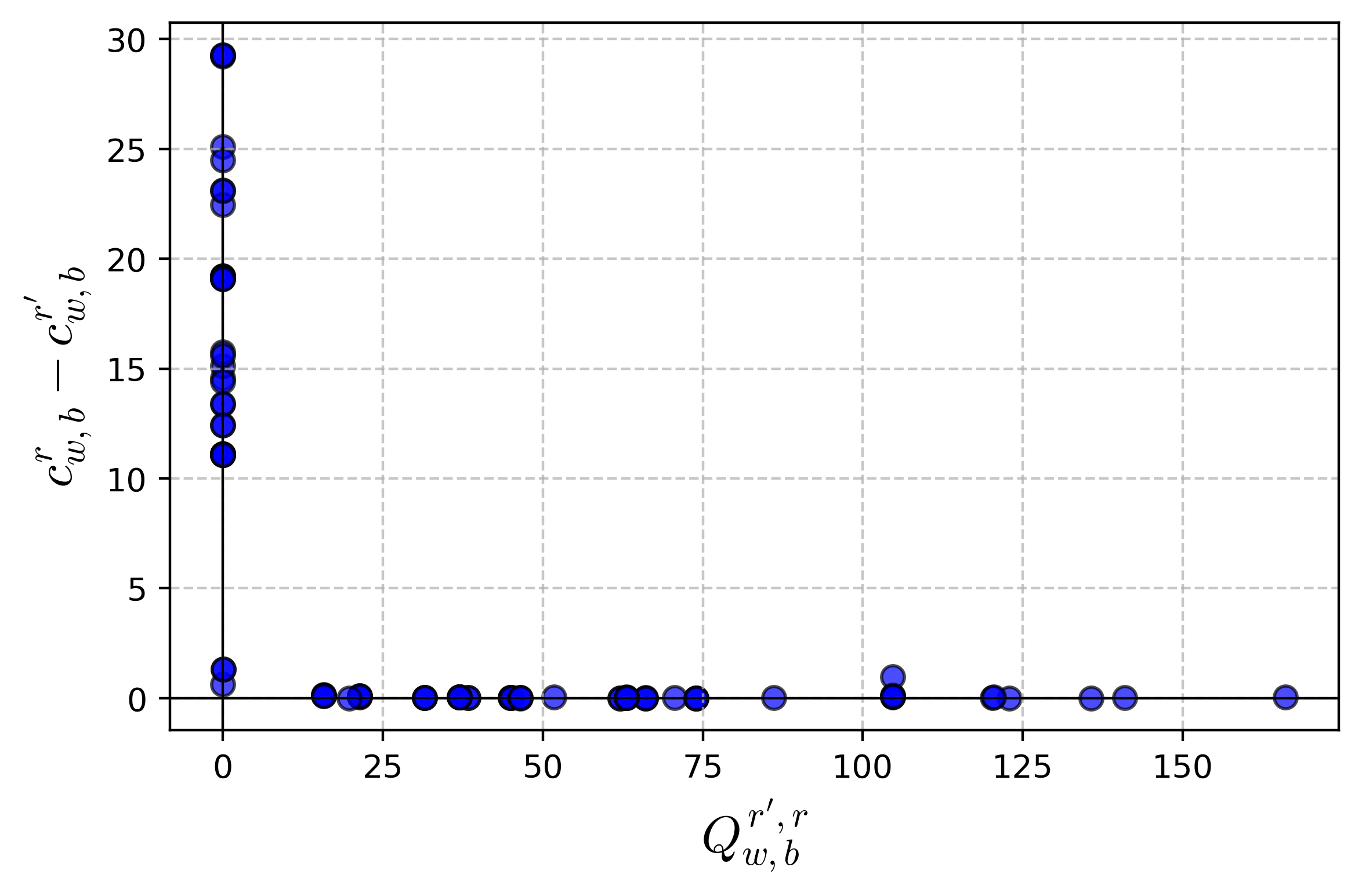}\label{fig:q and cost diff2}
	}
	\caption{Available capacity and cost difference at different convergence precisions.}
	\label{fig:q and cost diff}
\end{figure}

We also applied an MSA-based algorithm to solve the explicit priority model on this network. However, the algorithm failed to achieve an acceptable level of convergence: after one hour of computation, the minimum relative gap remained at~0.014. At this level of precision, the resulting route choices were still far from equilibrium. 
Table~\ref{tab:costdiff dynamic} reports, for each OD pair, the maximum cost difference among transit plan-departure time pairs with non-negligible flow (greater than~1.0). For example, in OD~(1,20), the maximum cost difference reached 56.71. Within this OD, two transit plan-departure time pairs carried flows of~87.50 and~11.94, with expected travel costs of~73.21 and~129.92, respectively. 
Similarly, OD~(2,13) exhibited a maximum cost difference of~11.57 between two transit plan-departure time pairs with flows of~30.23 and~4.17 and expected costs of~82.93 and~94.50. 

\begin{table}[htbp]
    \centering
    \caption{Maximum cost differences within OD pairs under the solution of the explicit priority model.}
    \footnotesize %
    \begin{tabular}{cccccccc}
        \toprule
        OD & Max.~Diff. & OD & Max.~Diff. & OD & Max.~Diff. & OD & Max.~Diff. \\
        \midrule
        (1,13) & 0.00 & (2,13) & 11.57 & (3,13) & 1.51 & (4,13) & 0.00 \\
        (1,20) & 56.71 & (2,20) & 4.40 & (3,20) & 0.00 & (4,20) & 0.00 \\
        (1,21) & 32.82 & (2,21) & 1.27 & (3,21) & 0.03 & (4,21) & 0.00 \\
        (1,24) & 4.07 & (2,24) & 0.00 & (3,24) & 10.73 & (4,24) & 0.00 \\
        \bottomrule
    \end{tabular}
    \label{tab:costdiff dynamic}
\end{table}

\section{Conclusion}
\label{sec:conclusion}

This paper revisits and extends a schedule-based transit assignment framework that captures passenger priority rules using an implicit method. We first reformulate the UEIP of \citet{nguyen_modeling_2001} as an NCP and, under mild conditions, prove the existence of an equilibrium state --- providing a formal foundation that had not been previously established. The NCP formulation also reveals that equilibrium may be non-unique and can include behaviorally unrealistic states. To address this, we refine the original framework by modifying the definition of route availability and the NCP formulation so that all such behaviorally inconsistent equilibria are excluded. The refined model remains structurally tractable: by imposing the priority-enforcing condition at the arc rather than the route level, it avoids explicit enumeration of all feasible routes and admits a decomposable network-flow structure. By smoothing one complementarity condition with the Fischer-Burmeister function, we obtain a continuously differentiable MPEC, for which we develop two solution approaches  ---  an implicit method and a nonlinear-programming–based method.

Using a real-world case study of student commuting at the University of Hong Kong, we first demonstrate that the model can realistically capture FCFS queuing at elevators as well as departure-early behavior among bus passengers driven by competition for boarding priority. A comparison with an explicit priority model highlights a key advantage of the proposed approach: by basing decisions on realized travel costs rather than expected ones, the refined framework avoids behaviorally implausible outcomes --- for example, passengers accepting substantial lateness penalties merely to preserve expected-cost consistency --- and instead yields equilibria that are behaviorally coherent. In addition, numerical experiments on a benchmark and the Sioux Falls transit networks show that the proposed algorithms can compute high-accuracy equilibria with reasonable computational effort.

This study also suggests several directions for future research. First, the proposed algorithm does not guarantee global optimality, and its scalability to very large networks is limited. Developing algorithms with provable convergence properties that can efficiently handle large-scale networks would be of substantial practical value for real-world applications of the model. Second, although the UEIP equilibrium concept is formally defined, further behavioral validation is needed to explain how it can be reached. In the road traffic assignment literature, extensive effort has been devoted to understanding the behavioral foundations and stability properties of user equilibrium (UE) \citep[e.g.,][]{smith1984stability,yang2005evolutionary,li2024wardrop} and stochastic user equilibrium (SUE) \citep[e.g.,][]{horowitz1984stability,cascetta1989stochastic,watling2003dynamics,cantarella2016modelling}. In a similar vein, a promising direction here is to investigate whether UEIP can be interpreted as the equilibrium outcome of a behaviorally plausible day-to-day adjustment process.
Moreover, recent studies have leveraged day-to-day dynamical models to examine which equilibrium, among multiple solutions, is more likely to emerge in reality \citep{li2024day}. Our findings on the non-uniqueness of UEIP, including the identification of solutions that appear behaviorally implausible, naturally raise the question of whether such outcomes can be ruled out from a dynamic perspective.
Finally, most existing transit network design studies rely on transit assignment models without priority rules when evaluating design alternatives \citep{yin2021timetable,xie_schedule-based_2021,feng5295135bilevel}, and thus overlook the effects of queuing and denied boarding induced by passenger prioritization. This is largely because incorporating DNL-based priority assignment into network design problems leads to prohibitive computational complexity. Our DNL-free framework may offer a more tractable way to integrate priority-driven passenger behavior into transit network design, and this integration is a promising direction for future research.

\bibliographystyle{apalike}
{\small
\bibliography{ly_STAP}
}

\appendix
\newpage
\section{An Explicit-Priority Transit Assignment Model and Its Issues}
\label{sec:appendix dynamic}
Existing explicit priority models assume that passengers choose line segments \citep{yao2017simulation} or strategies \citep{hamdouch_schedule-based_2008}, while a DNL procedure determines their actual boarding, transfer, and alighting times, as well as their realized travel experiences. In this appendix, we consider a setting in which passengers choose their starting time and a sequence of line segments. 

For an OD pair $w = (o, d) \in \gW := \gO \times \gD$, a \emph{transit plan} is any path in this transit network that starts at origin $o$, ends at destination $d$, and specifies the passenger's access stop, in-vehicle line segments (including any in-station transfers), and final alighting stop. We denote the set of such plans for OD pair $w$ by $\gP_w$. Each passenger chooses a transit plan $p \in \gP_w$ and a starting time $t \in \gT$. Let $g_{w,b}^{p,t}$ denote the number of class $b$ passengers of OD pair $w$ choosing $(p,t)$, and collect all such variables in the vector $\vg$. The feasible set $\gG$ consists of all $\vg$ satisfying
\begin{align}
    &\sum_{p \in \gP_w} \sum_{t \in \gT} g_{w, b}^{p, t} = d_{w, b}, \quad \forall w \in \gW, \ \forall b \in \gB_w, \notag\\
    &g_{w, b}^{p, t}\geq 0, \quad \forall w \in \gW, \ \forall b \in \gB_w, \ \forall (p, t) \in \gP_w \times \gT. \notag
\end{align}

We first present the DNL procedure, followed by a description of the explicit priority model and its corresponding MSA-based solution algorithm. Finally, we use a simple example to illustrate the behavioral issues inherent in this class of explicit priority models.

\subsection{Dynamic network loading}
\label{sec:dnl}
Formally, DNL can be viewed as a mapping $D: \gG \rightarrow \gF$ that transforms any flow vector of transit plans and starting times $\vg \in \gG$ into the resulting flow vector of spatio-temporal routes $\vf = D(\vg)$. Specifically, for every node $E \in \gE$ and every arc $A \in \gA$, let $x_E^{w,b,p,t}$ and $x_A^{w,b,p,t}$ denote the flow of OD pair $w \in \gW$ and class $b \in \gB_w$ associated with plan–departure pair $(p,t) \in \gP_w \times \gT$ passing through this node and this arc, respectively. The DNL procedure takes as input the given vector of transit plans and departure times $\vg \in \mG$, and simulates the corresponding node and arc flows,
\[
\bigl(x_E^{w,b,p,t}\bigr)_{\forall E \in \gE,\, \forall w \in \gW,\, \forall b \in \gB_w,\, \forall (p,t) \in \gP_w \times \gT}
\quad\text{and}\quad
\bigl(x_A^{w,b,p,t}\bigr)_{\forall A \in \gA,\, \forall w \in \gW,\, \forall b \in \gB_w,\, \forall (p,t) \in \gP_w \times \gT},
\]
subject to the boarding priority. Finally, for each OD pair $w\in\gW$, class $b\in\gB_w$, and $(p,t)\in\gP\times\gT$, let $\gR_{w,b}^{p,t}$ denote the set of all spatio-temporal routes whose starting time, boarding stop, line choices, transfer sequence, and final alighting stop coincide with those specified by $(p,t)$. The spatio-temporal route flows are obtained according to the flow-splitting proportions on each arc:
\begin{align}
    f_{w, b}^r = g_{w, b}^{p, t} \sum_{A\in \gA_w^r}\frac{x_A^{w,b,p,t}}{\sum_{A'\in \gA_{\text{tail}(A)}^+}x_{A'}^{w,b,p,t}} , \forall r\in \gR_{w, b}^{p, t}, \label{equ:dnl route flow}
\end{align}
where $\gA_w^r$ is the set of arcs belonging to route $r$; $\text{tail}(A)$ is the tail node of $A$; and $\gA_{E}^+$ and $\gA_{E}^-$ are the set of outgoing and incoming arcs of $E$, respectively.

Algorithm~\ref{alg:dnl} presents the DNL computation in detail. First, the flows of all plan–departure pairs $(p,t)$ for each OD pair $w \in \gW$ and class $b \in \gB_w$ are loaded at their origin node $E_{o}^{\text{ogn}}$:
\begin{align}
    x_{E_{o}^{\text{ogn}}}^{w,b,p,t} = g_{w, b}^{p, t}. \label{equ:dnl origin}
\end{align}

Subsequently, to preserve the temporal consistency of the transit system, the loading to all other nodes in the event–activity graph $\gH(\gE,\gA)$ is carried out in a combined topological and chronological order. There are three cases: (I) If a node $E$ represents a set-out event, then for each incoming arc $A \in \gA_{E}^-$, passengers at the origin node can be loaded onto this arc only if their chosen departure time $t$ coincides with the timestamp of $E$. Let $\text{timestamp}(E)$ denote the timestamp of $E$. The arc flow is then updated as
\begin{align}
    x_A^{w,b,p,t}
    =
    \begin{cases}
        x_{\text{tail}(A)}^{w,b,p,t},
        & \text{if } \text{timestamp}(E) = t, \\[4pt]
        0, 
        & \text{if } \text{timestamp}(E) \neq t,
    \end{cases}.
    \label{equ:dnl access arc flow}
\end{align}
(II) If a node $E$ represents a vehicle departure event, passengers arriving via the incoming arcs and boarding the subsequent in-vehicle riding arc $\text{riding}(E)$ are loaded according to the boarding priority until the vehicle capacity $u_{\text{riding}(E)}$ is reached. Accordingly, we traverse all incoming arcs of node $E$ in order of decreasing priority and compute the corresponding arc flows. 
For each incoming arc $A \in \gA_{E}^-$, let $\bar{\gP}_w^A$ denote the set of transit plans of OD pair $w \in \gW$ that pass through the stop corresponding to $A$ and choose the line associated with $E = \text{head}(A)$. Let $\bar{x}_E^{w,b,p,t}$ and $\bar{u}_{\text{riding}(E)}$ denote, respectively, the remaining passenger flow and residual capacity after loading all higher-priority incoming arcs. The total number of passengers at $\text{tail}(A)$ who wish to board at event $E$ under capacity-free conditions is
\[
x_A^{\text{wishing}}
=
\sum_{w\in \gW}\sum_{b\in \gB_w}\sum_{p\in \bar{\gP}_w^A}\sum_{t\in \gT}
\bar{x}_{\text{tail}(A)}^{w,b,p,t}.
\]
When capacity is binding and not all of these passengers can board, we assume they are selected with equal probability. Consequently, the arc flows are updated as
\begin{align}
    x_A^{w,b,p,t}
    =
    \begin{cases}
        \bar{x}_{\text{tail}(A)}^{w,b,p,t},
        & \text{if } \bar{u}_{\text{riding}(E)} \geq x_A^{\text{wishing}}, \\[4pt]
        \displaystyle
        \bar{x}_{\text{tail}(A)}^{w,b,p,t}\,\frac{\bar{u}_{\text{riding}(E)}}{x_A^{\text{wishing}}},
        & \text{if } \bar{u}_{\text{riding}(E)} < x_A^{\text{wishing}},
    \end{cases}.
    \label{equ:dnl boarding arc flow}
\end{align}
(III) If node $E$ corresponds to any other type of event, no priority rules or capacity constraints apply. For any incoming arc $A \in \gA_{E}^-$, all flow at the tail node $\text{tail}(A)$ is directly loaded onto the arc. Thus, the arc flows are updated as
\begin{align}
    x_A^{w,b,p,t} = x_{\text{tail}(A)}^{w,b,p,t}.
    \label{equ:dnl other arc flow}
\end{align}
In all cases, the flow at node $E$ is obtained by aggregating the flows on all incoming arcs:
\begin{align}
    x_E^{w,b,p,t} = \sum_{A \in \gA_{E}^-} x_A^{w,b,p,t}.
    \label{equ:dnl node flow}
\end{align}

\begin{breakablealgorithm}
\caption{Dynamic network loading}
\label{alg:dnl}
\small
\begin{algorithmic}[1]
    \State \textbf{Input:} The node set $\gE^*$ sorted in topological and chronological order. The incoming arc set $\gA_{E}^{-*}$ is sorted in priority order for all departure node $E$. The flow vector $\vg$.
    \State Load the passenger flows on the virtual origin nodes according to Equation \eqref{equ:dnl origin}.
    \For{each node $E\in \gE^*$ and $E\notin O$}
    \If{$E$ is a set-out node}
        \For{each arc $A \in \gA_{E}^-$}
        \State Update $x_A^{w,b,p,t}$ according to Equations \eqref{equ:dnl access arc flow}.
        \EndFor
    \Else
        \If{$E$ is a vehicle departure node}
        \State Set $\bar{u}_{\text{riding}(E)} = u_{\text{riding}(E)}$.
            \For{each arc $A \in \gA_{E}^{-*}$}
            \State Update $x_A^{w,b,p,t}$ according to Equations \eqref{equ:dnl boarding arc flow}.
            \State Set $\bar{u}_{\text{riding}(E)} = \bar{u}_{\text{riding}(E)} - \sum_{w\in \gW}\sum_{b\in \gB_w}\sum_{p\in \bar{\gP}_w^A}\sum_{t\in \gT} x_A^{w,b,p,t}$, and $\bar{x}_E^{w,b,p,t} = \bar{x}_E^{w,b,p,t} - x_A^{w,b,p,t}$.
            \EndFor
        \Else
            \For{each arc $A \in \gA_{E}^-$}
            \State Update $x_A^{w,b,p,t}$ according to Equations \eqref{equ:dnl other arc flow}.
            \EndFor
        \EndIf
    \EndIf
    \State Update $x_E^{w,b,p,t}$ according to Equations \eqref{equ:dnl node flow}. Set $\bar{x}_E^{w,b,p,t} = x_E^{w,b,p,t}$.
    \EndFor
    \State Obtain the spatio-temporal route flow according to Equations \eqref{equ:dnl route flow}.
\end{algorithmic}
\end{breakablealgorithm}

\subsection{Variational inequality model and the MSA algorithm}
\label{sec:explicit vi model}
A key implication of DNL under capacity constraints is that passengers choosing the same transit plan and starting time $(p,t)$ may experience heterogeneous realized experiences.  Consequently, $(p,t)$ does not correspond to a single realized spatio-temporal route, and its travel cost is not uniquely defined for all passengers adopting that strategy. Explicit-priority models therefore evaluate $(p,t)$ using an \emph{expected cost}, defined as the flow‑weighted average of the realized costs over all spatio–temporal routes induced by DNL:
\begin{align}
e_{w,b}^{p,t}(\vg)=
\left\{
\begin{array}{rl}
\displaystyle \sum_{r\in \gR_{w,b}^{p,t}} c_{w,b}^r(\vf)\,\frac{f_{w,b}^r}{g_{w,b}^{p,t}},
& \quad \text{if } g_{w,b}^{p,t}>0,\\[8pt]
\displaystyle c_{w,b}^{\bar r}(\vf),
& \quad \text{if } g_{w,b}^{p,t}=0,
\end{array}
\right.
\label{eq:explicit_expected_cost}
\end{align}
where $\vf=D(\vg)$. Here, $\bar r\in \gR_{w,b}^{p,t}$ is the spatio-temporal route obtained by injecting an infinitesimal flow of class $b$ passengers adopting $(p,t)$ on top of the background loading $\vg$ and, at each boarding, dwelling, or transfer opportunity, assigning this infinitesimal flow to the first-arriving vehicle of the chosen line that has residual capacity. This convention ensures that $e_{w,b}^{p,t}(\vg)$ is well-defined even when $(p,t)$ is unused.

\begin{definition}[User equilibrium with explicit priority]
A feasible flow vector $\vg\in\gG$ is a \emph{user equilibrium with explicit priority} (UEEP) if, for every OD pair $w\in\gW$, class $b\in\gB_w$, and strategy $(p,t)\in\gP\times\gT$, no passenger can reduce their expected cost $e_{w,b}^{p,t}(\vg)$ by unilaterally switching to another transit plan or starting time.
\end{definition}

The definition of UEEP is equivalent to the following variational inequality problem:
\begin{align}
\label{prob:dynamic model}
    \ve(\vg^*)^\top (\vg - \vg^*) \;\geq\; 0, 
    \quad \forall\, \vg \in \mG,
\end{align}
where $\vg^*$ denotes the equilibrium flow vector and $\ve(\cdot)$ is the corresponding expected cost mapping.

Since there is no closed-form relationship between the flow vector $\vg$ and the expected cost vector $\ve$, the mapping $\ve(\vg)$ is, in general, not differentiable. Therefore, first-order algorithms are typically employed to solve problem~\eqref{prob:dynamic model}, among which the method of successive averages (MSA) is the most commonly used. The detailed steps of the algorithm are given in Algorithm~\ref{alg:msa}.

\begin{breakablealgorithm}
\caption{Method of Successive Averages}
\label{alg:msa}
\small
\begin{algorithmic}[1]
    \State \textbf{Step 0.} Initialize with a feasible solution $\vg^1 \in \mG$ obtained from the all-or-nothing assignment. Set $k=1$.
    \State \textbf{Step 1.} Execute Algorithm~\ref{alg:dnl} to obtain the arc flow $\vx$ and route flow $\vf$. Update the expected cost $\ve(\vg^k)$.
    \State \textbf{Step 2.} For each OD pair $w\in \gW$ and each class $b\in B_w$, identify the minimum-cost pair $(\bar{p},\bar{t})$ and set $y_{w, b}^{\bar{p},\bar{t}} = d_{w, b}$ and $y_{w, b}^{p, t} = 0$ for all $p\neq \bar{p}, t\neq \bar{t}$. Denote $\vy = (y_{w, b}^{p, t})_{\forall w \in \gW, \forall b \in \gB_w, \forall (p,t) \in \gP_w \times \gT}$.
    \State \textbf{Step 3.} Compute the relative gap:
    \begin{align}
        RG = 1 - \frac{\sum_{w\in \gW}\sum_{b\in \gB_w} e_{w, b}^{\bar{p},\bar{t}}  d_{w, b}}{\sum_{w\in \gW}\sum_{b\in \gB_w}\sum_{p\in \gP_w}\sum_{t\in \gT} e_{w, b}^{p, t}  g_{w, b}^{p, t}}. \notag
    \end{align}
    If $RG < \epsilon$, \textbf{stop}; otherwise, update $\vg^{k+1} = \frac{k \vg^k + \vy}{k+1}$, set $k = k+1$, and return to Step~1.
\end{algorithmic}
\end{breakablealgorithm}

\subsection{Illustrative example and behavioral issues}
\label{sec:explicit model issue}
This class of explicit-priority models cannot fully accommodate the inherently discrete nature of transit service in time, which leads to behaviorally questionable outcomes. To illustrate this point, we revisit the example used in Section~\ref{sec:implicit} for the implicit priority model. Table~\ref{tab:UEEP} and Figure~\ref{fig:explict} present a user equilibrium outcome under the explicit priority model~\eqref{prob:dynamic model}.

\begin{table}[htbp]
  \centering
  \caption{User equilibrium outcome of the explicit priority model on the example transit network.}
  \footnotesize %
    \begin{tabular}{ccllr}
    \toprule
    OD & Starting time & Path description & Flow & Expected cost \\
    \midrule
    $(o_1,d)$ & \textbf{7:24}  & \textbf{Line 1} & \textbf{0.5}   & \textbf{56} \\
    $(o_1,d)$ & \textbf{7:24}  & \textbf{Line 1 - Line 2} & \textbf{1.5}   & $\bm{46\times 1/1.5 + 76\times0.5/1.5=56}$ \\
    $(o_2,d)$ & 7:49  & Line 2 & 2     & 21 \\
    $(o_3,d)$ & 7:53  & Line 1 & 0     & 27 \\
    $(o_3,d)$ & 7:53  & Line 2 & 2     & 17 \\
    \bottomrule
    \end{tabular}%
  \label{tab:UEEP}%
\end{table}%

\begin{figure}[htbp]
    \centering
    \includegraphics[width=0.7\textwidth]{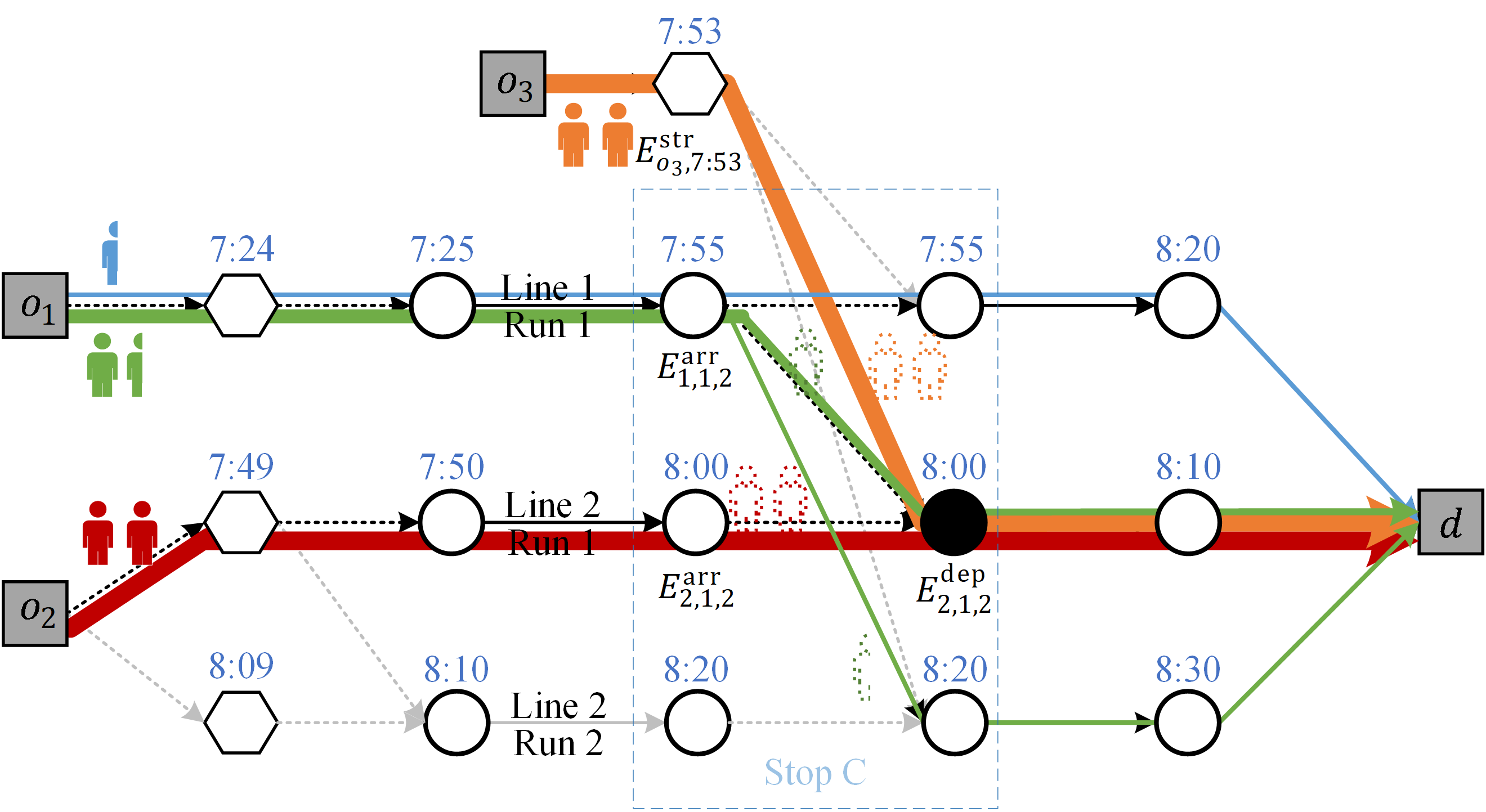}
    \caption{Visualization of user equilibrium flows for the explicit priority model on the example transit network.}
    \label{fig:explict}
\end{figure}

Among the passengers of OD pair $(o_1,d)$ who choose to take line 1 and transfer to line 2, vehicle capacity and priority rules imply that only one passenger can successfully board line 2 run 1 at stop~C (with a cost of 46), while the remaining 0.5 units of passenger flow must wait for line~2 run~2 (with a cost of 76). The expected cost for passengers choosing this option equals the cost of choosing "line~1 without transfer", both being 56, and the solution is therefore regarded as an equilibrium. In reality, however, the passengers who are left behind and forced to wait for line~2 run~2, thus incurring the higher realized cost, would be inclined to change their future decisions, for instance by choosing not to transfer.

In dynamic traffic assignment on road networks \citep{friesz1993variational}, where the service provided by each link is typically modeled as continuous over time, explicit priority frameworks can partially alleviate the discrepancy between expected and realized costs by reducing the time-step size and thereby shrinking the size of each passenger group. In public transit, by contrast, service is inherently discontinuous in time, and existing explicit priority frameworks have not been modified to account for this discontinuity. As a result, they cannot bridge the gap between expected costs in the equilibrium definition and the actual travel experience of individual passengers.

\section{Proof of Proposition \ref{prop:model2 solution exists}}
\label{sec:appendix-proof-solution-exists}
Given that the mapping of NCP~\eqref{prob:cp} 
$$
\bar{H}(\vf, \vmu, \vv)=\left(
    \begin{array}{c}
    \vc(\vf)+\bar{\Delta}\vv-\Lambda\vmu \\
    \Lambda^T\vf-\vd \\
    \vq(\vx) \\
    \end{array}
    \right)
$$
is continuous, while the feasible set $\bar{\Omega}=\{\vf\geq 0, \vmu\geq 0, \vv\geq 0\}$ is unbounded and therefore not compact, the proof of Proposition \ref{prop:model2 solution exists} follows the same structure as that of Proposition \ref{prop:route solution exists}. We introduce upper-bound constraints for each variable, yielding a modified NCP whose feasible set is compact. We then show that none of these upper bounds are binding at the solution, implying that any solution of the modified NCP also satisfies the original NCP~\eqref{prob:cp}.
    
Choose two scalars $e_1$ and $e_2$ such that
\begin{align}
    e_1 > \max\{d_{w, b}: w\in \gW, b\in \gB_w\} \text{ and } e_2 > \max\{c_{w, b}^r: w\in \gW, b\in \gB_w, r\in \gR_w\}. \notag
\end{align}
Let $\bar{\Omega}'=\bar{\Omega}\cap\{\vf\leq e_1\bm{1}, \vmu\leq e_2\bm{1}, \vv\leq e_2\bm{1}\}$. Since $\bar{\Omega}'$ is compact, the following NCP must admit a solution: 
\begin{subequations}
    \begin{align}
        &0\leq \vf \bot \vc(\vf)+\bar{\Delta}\vv-\Lambda\vmu + \bm{\kappa}\geq 0, \label{equ:exist2 cc1} \\
        &0\leq \vmu \bot \Lambda^T\vf-\vd +\bm{\rho} \geq 0, \label{equ:exist2 cc2} \\
        &0\leq \vv \bot \vq(\vx) +\bm{\theta} \geq 0, \label{equ:exist2 cc3} \\
        &0\leq \bm{\kappa} \bot e_1\bm{1}-\vf \geq 0, \label{equ:exist2 cc4}\\
        &0\leq \bm{\rho} \bot e_2\bm{1}-\vmu \geq 0, \label{equ:exist2 cc5}\\
        &0\leq \bm{\theta} \bot e_2\bm{1}-\vv \geq 0. \label{equ:exist2 cc6}
    \end{align}
\end{subequations}
Let $(\vf^*, \vmu^*, \vv^*)$ denote this solution. If we demonstrate that $\bm{\kappa}=\bm{\rho}=\bm{\theta}=\bm{0}$, then $(\vf^*, \vmu^*, \vv^*)$ must also be a solution of NCP~\eqref{prob:cp-route}.

\textbf{For $\bm{\kappa}$}: Suppose $\kappa_{w, b}^r>0$ for some $w\in \gW, b\in \gB_w, r\in \gR_w$. By the complementarity condition \eqref{equ:exist2 cc4}, we have $f_{w, b}^{r*}=e_1>d_{w, b}$, so $\sum_{r\in \gR_w}f_{w, b}^{r*}-d_{w, b}+\rho_{w, b}>0$. Then, according to Coondition \eqref{equ:exist2 cc2}, we have $\mu_{w, b}^*=0$. Moreover, since $f_{w, b}^{r*}=e_1>0$, we have 
\begin{align}
    0=c_{w, b}^r(\vf^*)+\sum_{A\in \gA_{w,b,r}^{\text{priority}}}v_A^*-\mu_{w, b}^*+\kappa_{w, b}^r=c_{w, b}^r(\vf^*)+\sum_{A\in \gA_{w,b,r}^{\text{priority}}}v_A^*+\kappa_{w, b}^r >0, \notag
\end{align}
because $c_{w, b}^r(\vf^*)$ and $\kappa_{w, b}^r$ are positive, and $\sum_{A\in \gA_{w,b,r}^{\text{priority}}}v_A^*$ is non-negative. This contradiction yields $\bm{\kappa}= \vzero$.

\textbf{For $\bm{\rho}$}: For each OD pair $w\in \gW$ and class $b\in \gB_w$, since there exists a route $r\in \gR_w$ with $Q_{w, b}^r(\vf^*)>0$ by assumption, each arc belonging to $\gA_{w,b,r}^{\text{priority}}$ satisfies $q_A(\vx^*)>0$ and $q_A(\vx^*)+\theta_A>0$. By Condition \eqref{equ:exist2 cc3}, we have $v_A^*=0$ for all $A\in \gA_{w,b,r}^{\text{priority}}$. Therefore, 
\begin{align}
    &0\leq c_{w, b}^r(\vf^*)+\sum_{A\in \gA_{w,b,r}^{\text{priority}}}v_A^*-\mu_{w, b}^*+\kappa_{w, b}^r = c_{w, b}^r(\vf^*) - \mu_{w, b}^* \notag \\
    &\Rightarrow \mu_{w, b}^*\leq c_{w, b}^r(\vf^*)< e_2. \label{equ:exist2 mu}
\end{align}
This means that $e_2 - \mu_{w, b}^*>0$ for all $w\in \gW$ and  $b\in \gB_w$, and $\bm{\rho}= \vzero$ by Equation \eqref{equ:exist2 cc5}.

\textbf{For $\bm{\theta}$}: For any departure event, let $A_1$ be the arc that ends at this event with the highest loading priority. If $\theta_{A_1}>0$, the complementarity condition  \eqref{equ:exist2 cc6} suggests $v_{A_1}^*=e_2$. Meanwhile, Equation \eqref{equ:exist2 mu} implies that $\mu_{w, b}^*<e_2=v_{A_1}^*$ for all $w\in \gW$ and  $b\in \gB_w$. Hence, for each route $r$ such that $A_1\in \gA_{w,b,r}^{\text{priority}}$, we have 
\begin{align}
    c_{w, b}^r(\vf^*)+\sum_{A\in \gA_{w,b,r}^{\text{priority}}}v_A^*-\mu_{w, b}^*+\kappa_{w, b}^r = c_{w, b}^r(\vf^*)+\sum_{A\in \gA_{w,b,r}^{\text{priority}}}v_A^*-\mu_{w, b}^* > v_{A_1}^* - \mu_{w, b}^* >0. \notag
\end{align}
By Condition \eqref{equ:exist2 cc1}, all of these routes satisfy $f_{w, b}^{r*}=0$. This means that the flow of arc $A_1$ is zero, namely $x_{A_1}^*=\sum_{w\in \gW}\sum_{b\in \gB_w}\sum_{r\in \gR_w}f_{w, b}^{r*}\delta_{w, b}^{r,A_1} =0$. Then, we have the available capacity $q_{A_1}(\vx^*)=u_{\text{riding}(A_1)}-x_{A_1}^*=u_{\text{riding}(A_1)}>0$, and $q_{A_1}(\vx^*)+\theta_{A_1}>0$, and consequently, $v_{A_1}^*=0$ by Condition \eqref{equ:exist2 cc3}. This contradicts $v_{A_1}^*=e_2>0$, and hence $\theta_{A_1}=0$.

Next, let $A_2$ denote the arc ending at this event with the second-highest priority. If $\theta_{A_2}>0$, we can get $v_{A_2}^*=e_2$ and $x_{A_2}^*=0$ using a similar proof. This lead to the available capacity of $A_2$ is equal to that of $A_1$, i.e., $q_{A_2}(\vx^*)=u_{\text{riding}(A_2)}-x_{A_1}^*-x_{A_2}^*=q_{A_1}$. In addition, the result $\theta_{A_1}=0$ above, together with Condition \eqref{equ:exist2 cc3}, indicates that $q_{A_1}\geq 0$. Hence, $q_{A_2}\geq 0$ and $q_{A_2}+\theta_{A_2}>0$. According to Condition \eqref{equ:exist2 cc3}, we encounter the contradiction that $0=v_{A_2}^*=e_2>0$. Therefore, $\theta_{A_2}=0$. 

Proceeding in this manner for all arcs ending at the departure event, and repeating the process for every departure event in the network, we obtain $\bm{\theta}= \vzero$.

\section{Algorithms for MPEC~\eqref{prob:reformulate}}
\label{sec:appendix algorithm}
This section introduces the implicit method, the nonlinear-programming-based method, and the method for obtaining an initial solution for solving MPEC~\eqref{prob:reformulate}.

\subsection{Implicit method}
\label{sec:implicit-algorithm}
Considering the implicit function $\vx(\vv)$, MPEC \eqref{prob:reformulate} is simplified to a problem involving only the upper-level variable $\vv$, which is $\min_{\vv\geq 0} \Psi(\vv,\vx(\vv))$. We develop a projected gradient method combined with an Armijo line search to solve it. The entire algorithm can be summarized as follows.
\begin{algorithm}
\caption{Implicit method}
\label{alg:IM}
\small
\begin{algorithmic}[1]
    \State \textbf{Input:} A initial solution $(\vv^0,\vx^0,\vmu^0)$. Set $k=0$.
    \State \textbf{Step 1.} Calculate the gradient $\nabla \Psi(\vv^k)$.
    \State \textbf{Step 2.} For each $i=0, 1, 2, \cdots$: Let $\vv'=[\vv^k - 2^{-i}\nabla \Psi(\vv^k)]_+$.
    \State \quad \textbf{Step 2.1.} Obtain a solution $\vx'$ of Problem~\eqref{prob:reformulate-lower}, and then compute the available capacity $\vq(\vx')$ and the merit function $\Psi(\vv')$.
    \State \quad \textbf{Step 2.2.} If $\Psi(\vv')\leq \Psi(\vv^k) - \gamma 2^{-i} \nabla \Psi(\vv^k)^T \nabla \Psi(\vv^k)$, go to Step 3.
    \State \textbf{Step 3.} Set $\vv^{k+1}=\vv'$. If $\Psi(\vv^{k+1})<\epsilon_1$, then stop. Otherwise, $k=k+1$, and go to Step 1.
\end{algorithmic}
\end{algorithm}

In Step 1, the gradient of the merit function $\Psi(\vv,\vx(\vv))$ regarding $v_A$ is
\begin{align}
    &\frac{\partial \Psi(\vv,\vx(\vv))}{\partial v_A}=\sum_{A'\in \gA^{\text{priority}}}2\varphi(v_{A'}, q_{A'})\frac{\partial \varphi(v_{A'}, q_{A'})}{\partial v_A}, \notag\\
    &\frac{\partial \varphi(v_{A'}, q_{A'})}{\partial v_A}=\left\{
    \begin{array}{rl}
    (v_{A'}+q_{A'}\frac{\partial q_{A'}}{\partial v_A})\frac{1}{\sqrt{v_{A'}^2+q_{A'}^2}}-1-\frac{\partial q_{A'}}{\partial v_A}, & \text{if } A'= A \notag\\
    q_{A'}\frac{\partial q_{A'}}{\partial v_A}\frac{1}{\sqrt{v_{A'}^2+q_{A'}^2}}-\frac{\partial q_{A'}}{\partial v_A}, & \text{otherwise}
    \end{array}, \right. \notag\\
    &\frac{\partial q_{A'}}{\partial v_A}=-\sum_{A''\in \text{Prior}(A')}\frac{\partial x_{A''}}{\partial v_A}. \notag
\end{align}
Since the mapping from $\vv$ to $\vx$ is the NCP \eqref{prob:reformulate-lower}, the derivative $\frac{\partial x_{A'}}{\partial v_A}$ does not have a closed-form formulation. To obtain this derivative, we employ the sensitivity analysis method \citep{tobin1988sensitivity,patriksson2004sensitivity}.

In Step 2, $[\cdot]_+$ denotes the projection onto the non-negative space. The lower-level problem \eqref{prob:reformulate-lower} is solved using the iGP algorithm, an efficient static traffic assignment problem solver proposed by \citet{xie_greedy_2018}.

\subsection{Nonlinear-programming-based method}
\label{sec:nonlinear-algorithm}
We rewrite the equilibrium constraints \eqref{prob:reformulate-lower} as a set of nonlinear inequalities, which is 
\begin{align}
    J(\vv,\vx,\vmu)=\left\{
        \begin{array}{c}
        \vf \\
        \hat{\vc}(\vx,\vv)-\Lambda\vmu \\
        -\vf(\hat{\vc}(\vx,\vv)-\Lambda\vmu) \\
        \vmu \\
        \Lambda^T\vf-\vd \\
        -\vmu(\Lambda^T\vf-\vd) \\
        \vv
        \end{array}
        \right\} \geq 0. \notag
\end{align}
This leads to the transformation of MPEC \eqref{prob:reformulate} into the following nonlinear programming problem
\begin{subequations}
    \begin{align}
        \min_{\vv,\vx,\vmu} \Psi(\vv,\vx,\vmu) 
        \text{ subject to } J(\vv,\vx,\vmu)\geq 0. \notag
    \end{align}
\end{subequations}
This nonlinear programming problem can be solved using various existing nonlinear optimization algorithms, such as Sequential Quadratic Programming (SQP). 

\subsection{Initial solution}
\label{sec:initial-solution}
To initialize the MPEC~\eqref{prob:reformulate}, we first obtain the initial $\vx^0$ and $\vmu^0$ by solving a relaxed transit assignment in which the available capacity constraints $q_A(\vx)\geq 0, \forall A\in \gA^{\text{priority}}$ are replaced by standard vehicle capacity constraints $u_A\geq x_A, \forall A\in \gA^{\text{riding}}$. Let $\vu=\{u_A: A\in \gA^{\text{riding}}\}$ denote the capacity vector of riding arcs. The relaxed complementarity formulation is
\begin{subequations}
\begin{align}
    &0\le \vf \bot \vc(\vf)+\hat{\Delta}\bm{\nu}-\Lambda\vmu \ge 0, \notag\\
    &0\le \vmu \bot \Lambda^T\vf-\vd \ge 0, \notag\\
    &0\le \bm{\nu} \bot \vu(\vx) \ge 0,  \notag
\end{align}
\end{subequations}
where $\bm{\nu}=\{\nu_A: A\in \gA^{\text{riding}}\}$ are the multipliers for vehicle capacity constraints and $\hat{\Delta}=[\delta_{w, b}^{r,A}]_{|n_f|\times|\gA^{\text{riding}}|}$ is the route–arc incidence matrix. This model can be regarded as an extension of the classical capacitated traffic assignment problem~\citep{larsson_augmented_1995} to the transit context (with a different definition of the travel cost function $\vc$), and can be efficiently solved by existing algorithms such as ALM-GP or ALM-Greedy~\citep{nie_models_2004,feng_efficient_2020}.

Once $\vx^0$ is obtained, we initialize the upper-level vector $\vv^0$ from the multipliers $\bm{\nu}$ of the relaxed problem. For each arc $A\in \gA^{\text{priority}}$, we set
\begin{align}
    v_A^0=\left\{
    \begin{array}{rl}
    \nu_{\text{riding}(A)}, & q_A\leq 0 \\
    0, & q_A> 0
    \end{array}. \right. \notag
\end{align}
This initialization assigns positive values to $\vv^0$ only on unavailable arcs, thereby guiding the subsequent iterations toward satisfying the available capacity constraints~$\vq(\vx)\geq 0$. This initialization method is used as the default unless otherwise stated.

\end{document}